\documentclass[submission]{eptcs}
%\documentclass[adraft]{eptcs}
%\documentclass[preliminary]{eptcs}
 % Name of the event you are submitting to
%\usepackage{breakurl}             % Not needed if you use pdflatex only.

%definitions
%---------------------------------------------------------------------
%the definitions used by xx
%---------------------------------------------------------------------

%---------------------------switches begin---------------------------------------------------------------
%switches
\usepackage{etoolbox}
\newtoggle{colorON}
\toggletrue{colorON}
%\togglefalse{colorON}

\newtoggle{colorElsON}
\toggletrue{colorElsON}
%\togglefalse{colorElsON}

\newtoggle{hidingON}
\toggletrue{hidingON}
%\togglefalse{hidingON}

\newtoggle{headOutlineON}
\toggletrue{headOutlineON}
%\togglefalse{headOutlineON}

\newtoggle{mNotsON}
\toggletrue{mNotsON}
%\togglefalse{mNotsON}

\newtoggle{appendixON}
%\toggletrue{appendixON}
\togglefalse{appendixON}

% \iftoggle{hidingON}{%
%   % use hiding
% } {%
%   % no hiding
% }%toggle hiding

% \iftoggle{appendixON}{%
% % use appendix

% } {%
% % no appendix, use online

% }%toggle appendixON

%---------------------------switches end---------------------------------------------------------------

%---------------------------packages begin---------------------------------------------------------------
\usepackage{etex}
\usepackage[usenames]{color}
\usepackage{verbatim}
\usepackage{fancyhdr}
\usepackage{fancybox}
\usepackage{stmaryrd}
\usepackage[reqno]{amsmath}
\usepackage{amsthm}
\usepackage{amssymb}
\usepackage{amsfonts}
\usepackage{bm}
\usepackage{amsbsy}
\usepackage[all]{xypic}
\usepackage {indentfirst}
\usepackage{graphicx}
\DeclareGraphicsRule{.jpg}{eps}{.bb}{}
\DeclareGraphicsRule{.bmp}{eps}{.bb}{}
\usepackage[hang]{subfigure}
\usepackage{cite}
\usepackage{proof}
\usepackage{bbding}
\usepackage{pgf}
\usepackage{tikz}
\usetikzlibrary{positioning,arrows,automata}
\usepackage{cancel}
\usepackage{extarrows}
\usepackage{etoolbox} %for defining conditions
\usepackage{hyperref} % ×îºÃ±£Ö¤ hyperref ÊÇ×îºó¼ÓÔØµÄºê°ü
\hypersetup{%
 dvipdfmx,% Éè¶¨ÒªÊ¹ÓÃµÄ driver Îª dvipdfmx
 unicode={true},% Ê¹ÓÃ unicode À´±àÂë PDF ×Ö·û´®
 pdfstartview={FitH},% ÎÄµµ³õÊ¼ÊÓÍ¼ÎªÆ¥Åä¿í¶È
 bookmarksnumbered={true},% ÊéÇ©¸½ÉÏÕÂ½Ú±àºÅ
 bookmarksopen={true},% Õ¹¿ªÊéÇ©
 pdfborder={0 0 0},% Á´½ÓÎÞ¿ò
 %pdftitle={blue},
 %pdfauthor={blue},
 %pdfsubject={blue},
 %pdfkeywords={blue},
 %pdfcreator={blue},
 %pdfproducer={PDF},% Õâ¸öºÃÏñÃ»Æð×÷ÓÃ£¿
 colorlinks = true, 
 %allcolors = blue
}
\usepackage{algorithm}
\usepackage{algorithmicx}
\usepackage{algpseudocode}

%---------------------------packages end----------------------------------------------------------------

%---------------------------thms begin--------------------------------------------------------------------
\newtheorem{theorem}{Theorem}%[chapter]
\newtheorem{proposition}[theorem]{Proposition}
\newtheorem{lemma}[theorem]{Lemma}
\newtheorem{corollary}[theorem]{Corollary}

\theoremstyle{definition}
\newtheorem{definition}[theorem]{Definition}

\theoremstyle{remark}

%---------------------------thms end---------------------------------------------------------------------

%---------------------------common 1 begin-------------------------------------------------------------------
%\newcommand{\HOPi}{HOPi}

%\newcommand{\FOPi}{FOPi}

%

%
\newcommand{\para}{\,|\,}
\newcommand{\fn}[1]{\mbox{\rm fn($#1$)}} %free names
 %bound names
\newcommand{\n}[1]{\mbox{\rm n($#1$)}} %names
 %free names
 %bound names
\newcommand{\nc}[1]{\mbox{\rm nc($#1$)}} %bound names
\newcommand{\fnv}[1]{\mbox{\rm fnv($#1$)}} %free name variables
\newcommand{\bnv}[1]{\mbox{\rm bnv($#1$)}} %bound name variables
\newcommand{\nv}[1]{\mbox{\rm nv($#1$)}} %name variables
\newcommand{\fpv}[1]{\mbox{\rm fpv($#1$)}} %free process variables
\newcommand{\bpv}[1]{\mbox{\rm bpv($#1$)}} %bound process variables
\newcommand{\pv}[1]{\mbox{\rm pv($#1$)}} %process variables
 %variables

%

%\newcommand{\lrangle}[1]{\langle #1 \rangle}
%
\newcommand{\fosub}[2]{\{#1/#2\} }
\newcommand{\hosub}[2]{\{#1/#2\} }
%dynamic substitution
%

\newcommand{\ve}[1]{\widetilde{#1}}

%\newcommand{\vet}[1]{\widetilde{#1}}
%

%strong transition
\newcommand{\st}[1]{\,{\xrightarrow{#1}}\, }
%strong transition star

%weak transition

%\newcommand{\wt}[1]{{\stackrel{#1}{\Longrightarrow}} }
%\newcommand{\rc}[1]{{\color{Red} #1}}
\newcommand{\rc}[1]{{\color{red} #1}}
 %deprecated color
%\newcommand{\bc}[1]{{\color{Blue} #1}}
\newcommand{\bc}[1]{{\color{blue} #1}}
 %deprecated color
%

\newcommand{\R}{\ensuremath{\mathcal{R}} }
\newcommand{\sepp}{\vspace*{0.4cm}}

\newcommand{\nsepv}[1]{\vspace{0mm}}
 %negative vertical space
 %positive vertical space

%

%

%
%\usepackage{mathabx}

%
\newcommand{\TODO}{$\blacktriangleright\blacktriangleright\blacktriangleright\blacktriangleright\blacktriangleright\blacktriangleright\blacktriangleright\blacktriangleright\blacktriangleright\blacktriangleright\blacktriangleright\blacktriangleright\blacktriangleright\blacktriangleright\blacktriangleright\blacktriangleright\blacktriangleright\blacktriangleright\blacktriangleright\blacktriangleright\blacktriangleright\blacktriangleright$ \textsc{\large TODO!!!!!!....} }
\newcommand{\TODOM}[1]{\xxa{$\blacktriangleright\blacktriangleright\blacktriangleright\blacktriangleright\blacktriangleright\blacktriangleright\blacktriangleright\blacktriangleright\blacktriangleright\blacktriangleright\blacktriangleright\blacktriangleright\blacktriangleright\blacktriangleright\blacktriangleright\blacktriangleright\blacktriangleright\blacktriangleright\blacktriangleright\blacktriangleright\blacktriangleright\blacktriangleright$ \textsc{\large TODO!!!!!!....} #1 } }
%---------------------------common 1 end-------------------------------------------------------------------

%---------------------------common 2(bisimulation) begin-------------------------------------------------------------------
%CCS

%first-order barbed bisimulation

%higher-order barbed bisimulation

%first-order barbed congruence

%higher-order barbed congruence

%first-order barbed equivalence

%higher-order barbed equivalence

%strong higher-order context bisimulation

%higher-order context bisimulation

%higher-order context congruence

%higher-order normal bisimulation

%higher-order triggered bisimulation

%index operation in quasi open bisimulations

%higher-order bisimulation defined in [San92]

%

%a symbol for barbed bisimulation in [San92]

%a symbol for early bisimulation in [San92]

%local bisimilarity

%structural equivalence
%\newcommand{\SE}{\sim_s }
%\newcommand{\SE}{\sim }
\newcommand{\SE}{\equiv }
\newcommand{\DEF}{\stackrel{\textrm{def}}{=}} %definition
 %cancel notation
%add to original head_pc
\newcommand{\lrangle}[1]{\langle #1 \rangle} % for abstraction and etc.
\makeatletter
\@ifundefined{newslide}{%

}{%
 %defined as \newslide when used in slides making
}
%whether to make a new input file
\makeatletter
\@ifundefined{endinput}{%

}{%
 %defined as \newslide when used in slides making
}

%
 %emphasize
 %emphasize
 %emphasize
%\newcommand{\stress}[1]{\textbf{#1}} %emphasize
 %length/size of #1
 %language of #1
 %separating items in a grammar definition
 %explicit space symbol \raisebox{-0.7ex}{aa}
%\newcommand{\showspace}{\textvisiblespace}
%\newcommand{\showspace}{\sqcup}
%---------------------------theory of computation end-------------------------------------------------------------------

%---------------------------others begin-------------------------------------------------------------------
%
%color for comments

 %for removing the color
\newcommand{\xxa}[1]{{\textcolor[RGB]{185,2,1}{#1}}}

\newcommand{\xxy}[1]{{\textcolor[RGB]{0,205,175}{#1}}}
\newcommand{\xxyrmcolor}[1]{#1} %removing the color
%\newcommand{\xxxx}[1]{{\textcolor[RGB]{0,185,205}{#1}}}

%\newcommand{\xiv}[1]{{\textcolor[RGB]{185,0,205}{#1}}}

%\newcommand{\xxx}[1]{{\textcolor[RGB]{205,185,0}{#1}}}

 %removing the color

\newcommand{\zwbrmcolor}[1]{#1}

\newcommand{\ntsrm}[1]{} %for removing the notes after usage

\newcommand{\ntsoorm}[1]{} %for removing the notes after usage

%switch for comments to tiny up
%\newcommand{\tdup}[1]{{\tiny #1}} %keep the related comments temporarily
\newcommand{\tdup}[1]{} %remove the related comments
\newcommand{\erase}[1]{} %annotation for tomorrow 
\newcommand{\caseHintRM}[1]{} %remove the hints for the cases in a proof
\newcommand{\annotate}[1]{} %annotation for tomorrow 

%---------------------------others end-------------------------------------------------------------------

%———————————————
%\noindent\textbf{macros for various bisimilarties and other gadgets (invisible in compile; see source)}.
% ---------------------------decidability HO Para begin-------------------------------------------------------------------
%names and other things
\newcommand{\HOmp}{\ensuremath{\Pi^{\mbox{\tiny mp}}}} %minimal higher-order process model with parameterization
\newcommand{\BHOParam}{\ensuremath{\Pi^{-}_{D,d}} } %name of the calculus: basic higher-order processes with parameterization
 %name of the calculus: basic higher-order processes with parameterization on processes
 %name of the calculus: basic higher-order processes with parameterization on names
\newcommand{\SCongru}{\ensuremath{\equiv} } %the structural congruence of the calculus under consideration (alias for other usage)

%bisimilarities
%\newcommand{\SHOIOB}{\ensuremath{\sim_{\rm hoio}^{\rm \circ}} } %strong HO-IO bisimilarity
\newcommand{\SHOIOB}{\ensuremath{\sim_{\mbox{\tiny hoio}}^{\mbox{\tiny $\circ$}}} } %strong HO-IO bisimilarity
\newcommand{\SHOB}{\ensuremath{\sim_{\mbox{\tiny ho}}} } %strong HO bisimilarity
\newcommand{\SCTXB}{\ensuremath{\sim_{\mbox{\tiny ctx}}} } %strong context bisimilarity
\newcommand{\SNRB}{\ensuremath{\sim_{\mbox{\tiny nr}}} } %strong normal bisimilarity
\newcommand{\OSNRB}{\ensuremath{\sim_{\mbox{\tiny nr}}^{\mbox{\tiny $\circ$}}} } %open strong normal bisimilarity

%triggers
\newcommand{\mtrigger}{\ensuremath{\overline{m}} } %trigger for non-parameterized processes 
\newcommand{\mtriggerD}{\ensuremath{\lrangle{Z}\overline{m}Z} } %trigger for process-parameterized processes 
\newcommand{\mtriggerDd}{\ensuremath{\lrangle{z}\overline{m}[\lrangle{Z}(Z\lrangle{z})] }} %trigger for name-parameterized processes 

%trigger names
\newcommand{\mtriggerName}{\ensuremath{{\rm Tr}_m} } %trigger for non-parameterized processes 
\newcommand{\mtriggerDName}{\ensuremath{{\rm Tr}_m^{D}} } %trigger for process-parameterized processes 
\newcommand{\mtriggerDdName}{\ensuremath{{\rm Tr}_m^{D,d}} } %trigger for name-parameterized processes 

%trigger names with index 
 %ndexed trigger for non-parameterized processes 
 %indexed trigger for process-parameterized processes
 %indexed trigger for name-parameterized processes 

%trigger names with parameter on the trigger name 
\newcommand{\genTrigger}[2][m]{\ensuremath{{\rm Tr}_{#1}^{#2}}} %default: m
%\newcommand{\genTrigger}[2]{\ensuremath{{\rm Tr}_{#1}^{#2}}}

%notations
%\newcommand{\fpv}[1]{\ensuremath{{\rm fpv}(#1)}} %free process variables —> this may be already defined: CHECK!

%
\newcommand{\pdepth}[1]{\ensuremath{{\rm depth}(#1)} } %depth of a term

%new nonations introduced in Axiomatization
\newcommand{\normalform}[1]{\ensuremath{{\rm nf}(#1)} }
\newcommand{\Nil}{0}% Process nil

\newcommand{\rewriteDIS}{\rightsquigarrow}%rewrite a term by DIS rule
\newcommand{\SB}{\ensuremath{\sim} } %strong HO-IO bisimilarity
\newcommand{\eat}[1]{} %strong HO-IO bisimilarity
\newcommand{\warn}[1]{} %strong HO-IO bisimilarity
%\newcommand{\nf}{n}%normal form

%for the algorithm
\newcommand{\db}{\textsf{db}}
\newcommand{\tree}{\textsf{Tree}}
\newcommand{\tnAbs}{\textsf{abs}}
\newcommand{\tnApp}{\textsf{app}}
\newcommand{\labelI}[1]{{#1}}
\newcommand{\labelO}[1]{{#1}^{O}}
%\newcommand{\eat}[1]{}

%---------------------------decidability HO Para end-------------------------------------------------------------------
%———————————————

%---------------------------
% Local Variables:
% mode: LaTeX
% TeX-master: "main.tex"
% End:
% End:

\usepackage{multicol}
\RequirePackage[normalem]{ulem}
\RequirePackage{color}\definecolor{RED}{rgb}{1,0,0}\definecolor{BLUE}{rgb}{0,0,1}

%DIF SAFE PREAMBLE

%DIF FLOATSAFE PREAMBLE

%
%\usepackage{etoolbox} %moved to head_pc.tex
%\newtoggle{appendixing} % whether to include appendices
%\toggletrue{appendixing}
%\togglefalse{appendixing}
%\iftoggle{appendixing}{%
  % using appendixing
%}{%
  % no appendixing
%}

% \iftoggle{mNotsON}{%
%   % use m notes
% } {%
%   % no m notes
% }%toggle m notes

%\vspace*{-.5cm} 
%\title{Interpreting Name-passing with Abstraction-passing}
%\title{A New Encoding of Name-passing with Process-passing}
%\title{Parameterizing Higher-order Processes over Names and Processes}
%\title{Higher-order Processes with Parameterization over Names and Processes}
%\title{Parameterizing Higher-order Processes on Names and Processes}
\title{On Decidability of the Bisimilarity on Higher-order Processes with Parameterization\thanks{This work is supported by ANR 12IS02001 PACE, NSF of China (61872142, 62072299, 61772336, 61572318, 61261130589), Shanghai Sailing Program (21YF1417000) and the Open Project of Shanghai Key Laboratory of Trustworthy Computing.}
}
\author{Xian Xu
%\thanks{The paper is an expanded version of the work presented at EXPRESS/SOS 2016, EPTCS 8704: 78-92, 2016. }
% \thanks{A preliminary version of this work was presented at EXPRESS/SOS 2016. %Comparatively, 
% This paper extends that version with full-fledged proofs and more (refined) discussions (more than half new materials), and moreover, the detailed analysis of a variant encoding of interest. This encoding, mentioned only as a further direction in the preliminary version, is given thorough examination in this work.}
%\institute{East China University of Science and Technology, Shanghai, China (200237)}
\institute{East China University of Science and Technology}
\email{xuxian@ecust.edu.cn}
\and
Wenbo Zhang 
%Shanghai Ocean University, Shanghai, China (201306) \\
%Shanghai Key Laboratory of Trustworthy Computing, Shanghai, China (200092)}
%\institute{Shanghai Ocean University, Shanghia, China (201306)}
\institute{Shanghai Ocean University \\
Shanghai Key Laboratory of Trustworthy Computing}
\email{wbzhang@shou.edu.cn}
%Qiang Yin \quad\qquad Huan Long
%\institute{Shanghai Jiao Tong University, China}
%\email{\quad yinqiang.sjtu@gmail.com \quad\qquad longhuan@sjtu.edu.cn}
}

\begin{document}
\maketitle

\begin{abstract}
\noindent\emph{\textbf{Abstract}}~ 
Higher-order processes with parameterization are capable of abstraction and application (migrated from the lambda-calculus), and thus are computationally more expressive. For the minimal higher-order concurrency, it is well-known that the strong bisimilarity (i.e., the strong bisimulation equality) is decidable in absence of parameterization. By contrast, whether the strong bisimilarity is still decidable for parameterized higher-order processes remains unclear. In this paper, we focus on this issue. There are basically two kinds of parameterization: one on names and the other on processes. 
We show that the strong bisimilarity is indeed decidable for higher-order processes equipped with both kinds of parameterization. Then we demonstrate how to adapt the decision approach to build an  axiom system for the strong bisimilarity. On top of these results, we provide an algorithm for the bisimilarity checking.

\vspace*{.1cm}
\noindent\emph{Keywords}: Decidability, Strong bisimilarity, %Bisimilarity checking, 
Parameterization, Higher-order, Processes 
%, Name-passing, Process-passing,

\vspace*{.1cm}
\noindent\emph{2000 MSC}: 68Q05, 68Q10, 68Q85
\end{abstract}

%\vspace*{-.25cm}
%----------------------------------------------------------------------------------------------
%\input{scribble.tex}\clearpage
%
%!TEX root = ./main.tex
\section{Introduction}\label{s:introduction}

%\TODOM{FROM HERE...}

%\subsection*{Background (significance) and motivation}
%\subsection*{Related work}
%Higher-order processes communicate in the fashion of process-passing (i.e., program-passing). ......

%Deciding the bisimilarity between processes is a topic frequently studied in concurrency theory. In \cite{LPSS10a}, ... ......

%<About the introduction section of the DecParam.>
%<in writing>

%<Consider write the introduction in a question-answer pattern? >

%\sepp
%—>
%What does bisimilarity checking mean?Why is it important? (good q&a)
Bisimulation is a most important concept for comparing the behaviour of computing systems, particularly concurrent systems.  An accompanying vital question is to check whether two given systems are equal in terms of bisimulation, hence the bisimilarity checking. 
Bisimilarity checking is an important %research 
topic in concurrency theory and formal verification. Basically there are two directions for this topic. One is to adopt an abstract manner, using process rewrite systems \cite{KJ06}. An advantage of this direction is that some core techniques can be extracted and potentially adapted to various models. The other is to work directly on concrete models \cite{Per10}. An edge of this direction is that some well-defined operators can be harnessed thoroughly to guide the checking. We focus on the second direction in this work. 
%[...]

%\sepp
%—>
\tdup{A lot of work has emerged, in the way of both process rewrite system and process calculi \cite{Jan95, Sch01, KJ06, CHM94, KH12}.} 
\tdup{\scriptsize [a few references here; survey in PRS; and process models, first see the references in \cite{LPSS10a})][More???]}

The bisimilarity checking, including checking bisimulation equalities, simulations, and preorders, has been attracting tremendous attention in the past few decades \cite{Jan95, Sch01, KJ06, CHM94, KH12}.
%In what follows, we sometimes add a modifier before ``bismilarity checking" to indicate the specific kind of bisimilarity of concern. For instance, strong (respectively weak) bismilarity checking means the checking of certain strong (respectively weak) bisimilarity, by default the standard bisimilarties. 
%\sepp
%—>
In contrast to the fruitful work of bisimilarity checking on first-order %process
 models,
checking bisimulation equalities for higher-order processes has been more challenging. Much fewer results have been known in higher-order process models. Indeed, a major reason is that higher-order processes communicate in the fashion of process-passing (i.e., program-passing), and have the innate capability of encoding recursion. Besides, the standard bisimulation for higher-order processes, i.e., the context bisimulation, is strikingly different from those for first-order processes. It requires the matching of two output processes to be compared in arbitrary contexts. To this point, simplifying the context bisimilarity has also been a significant topic \cite{San92, San96}. 

%[cite \cite{LPSS10a}]
To this day, the best known result of bisimilarity checking for higher-order processes is reported in \cite{LPSS10a}, to our knowledge. In that work, Lanese et al. show that the strong bisimilarity checking of HOcore processes is decidable. As a matter of fact, they show that all known strong bisimilarities in HOcore are decidable. HOcore is a minimal higher-order process model that only has the input, (asynchronous) output, and the concurrency operator (i.e., the parallel composition). HOcore is also proven to be Turing complete, and this result is somehow refined toward a more implementable interpretation \cite{BBLPPS17}, in the manner of encoding lambda-calculus in HOcore through abstract machines. 

That HOcore is Turing complete renders its process-termination problem undecidable. This fact adds to the contrast that the strong bisimilarity is decidable, which in turn implies the decidability of the barbed congruence. 
Technically, the decidability is achieved by showing all the strong bisimilarities to be coincident with a very special strong bisimilarity, called IO-bisimilarity, which is decidable by its definition in the first place. On the basis of this decidability outcome, a complete axiom system is also established, as well as an algorithm with acceptable complexity. It is then possible to implement the algorithm for bisimilarity checking HOcore processes in software systems \cite{ALS21}.
Intuitively, the essential element making the strong bisimilarity decidable is that HOcore does not have the restriction operator, and thus the capability of expressing recursion is weakened. It is also shown that if restriction is recovered, i.e., if at least four static (i.e., top concurrency level) restrictions are included in HOcore, then the strong bisimilarity immediately becomes undecidable. The undecidability is proven through a reduction from the PCP; similar reductions are also used in other settings, e.g., the Ambient calculus \cite{CT01}. 
%We review some more recent progresses to the decidability of higher-order processes, specifically those related to the results of this work.
%[focus on the decidability part]
Building upon \cite{LPSS10a,GPZ09} further studies the possibility of making the termination decidable, in the setting of a fragment of HOcore where nested higher-order outputs are disabled. Specifically, it is shown that in such a setting the termination of processes becomes decidable (though convergence is still undecidable), due to the reason that the Minsky machines are no longer expressible. Technically, such decidability is achieved using the well-structured transition systems employed in \cite{BGZ09}. Following \cite{GPZ09,Per10} shows that termination turns back to be undecidable if such a fragment of HOcore is enriched with a passivation operator \cite{SS05}, because Turing completeness is retained with the help of passivation.

%\sepp
%[\cite{BGHH09,Homer fragment dec}]
In \cite{BGHH09}, Bundgaard et al. study the decidable fragments of Homer \cite{BGH04}, a higher order process model with the capacity of expressing locations. They show that two subcalculi of Homer have decidable barbed bisimilarity, in both the strong and weak forms. 
Intuitively, Homer supports certain kind of pattern matching of name sequences that model the locations of resources, and this plays a central role in enhancing the expressiveness. For this reason, Homer can encode first-order processes and is computationally complete, leaving little hope for the decidability of bisimilarities. Therefore, to obtain decidability, some constraints have to be devised. 
Technically, such constraints are imposed through a finite control property. That is, some finite reachability criterion is excerpted on the semantics of Homer processes. Such a criterion is the key reason for the decidability of barbed bisimilarities. 
The approach of \cite{BGHH09} provides a valuable reference for acquiring decidability sub-models from a more powerful full model.

%----------------------------------------------------------
\annotate{
\input{part_of_related_work_1}
}
%----------------------------------------------------------

However, there is still much space one can exploit concerning bisimilarity checking for higher-order processes, as mentioned in \cite{LPSS10a}. HOcore is a minimal model, with somewhat low modelling capacity. It would be interesting to quest for a more expressive model by adding certain constructs, while still maintaining the decidability result. Parameterization has been known to be an effective approach of promoting the expressiveness of higher-order processes, that is, abstraction-passing is strictly more expressive than mere process-passing \cite{LPSS10}. In this work, we focus on the minimal higher-order processes with parameterization, {notation \HOmp}, basically HOcore extended with parameterization. 
This minimal model contains solely the most elementary parts to formalize higher-order concurrency, with extension of the abstraction and application, two operations originating from the lambda-calculus\cite{Bar84}.
We will show that in such a calculus, the strong bisimilarity remains decidable. Similar result is only conjectured in \cite{LPSS10a}.  To this point, we go beyond that conjecture in two respects. Firstly, although our general approach resembles that of \cite{LPSS10a}, the technical route has some key lemmas with essentially different proof structures, due to the complication brought by the parameterization. Secondly, we consider two kinds of parameterization, i.e., both on names and on processes themselves, rather than only one kind. %(thus the model is more expressive). 
\xxyrmcolor{
  Thus we are working on a potentially more expressive model. This is evidenced by the following two facts.
(1) Parameterization, in particular process parameterization, brings strictly more expressiveness to the higher-order process model \cite{LPSS10}.
(2) Moreover, name parameterization is more expressive than process parameterization \cite{YXL17}. Intuitively, this is true because we can somehow encode process parameterization with name parameterization, using an idea akin to that of encoding process-passing into name-passing.}
To the best of our knowledge, there has been little work about the decidability of bisimilarities in such a model. %for parameterized higher-order processes.
The decidability result of this work not only pushes outward the boundary of higher-order processes with decidable bisimilarity, but also digs more into the realm of bisimilarity checking more \zwbrmcolor{challenging} behavioural equalities, such as weak or branching bisimilarity. 
\annotate{On the basis of the decidability of the strong bisimilarity in the minimal higher-order process with parameterization, we build an axiom system and the corresponding bisimilarity checking algorithm, in roughly the same vein of that in \cite{LPSS10a}.}

\annotate{
\input{part_of_related_work_2}
}
%----------------------------------------------------------

%\sepp
%—>
%What is the main contribution of this paper?
\vspace*{.3cm}
\noindent\textbf{Contribution}~~ Now we summarize the main contribution of this paper.
 
%\begin{itemize}
%\item 
%The decidability of the strong bisimilarity on parameterized higher-order processes free of the restriction operation. [...]
%[Notice the different novel parts (due to the existence of parameterization): the completely new design of the bisimilarties, the congruence proof, the substitution lemma, coincidence of the bisimilarties, and ...]
\noindent$\bullet$~~ We show that in the minimal higher-order process model with parameterization, the strong bisimilarities, including the standard context bisimilarity together with other well-known bisimulation equalities, are all decidable. 
We borrow and revamp the ideas from \cite{LPSS10a}, i.e., defining a bisimilarity decidable from the very beginning and then showing that the bisimilarities of interest coincide with it. % (so we may simply say the strong bisimilarity). 
The major novel parts are those tackling the parameterization. Due to the presence of the parameterization, we have a completely new design of the key bisimilarities, particularly those defined directly over open processes (i.e., those processes carrying free variables), \xxyrmcolor{as well as the normal bisimulation that needs new forms of triggers for the two kinds of parameterizations}. In turn, the congruence proofs must take these changes into consideration. Moreover, some crucial properties for establishing the coincidence of the bisimilarities have entirely new proof methods, in particular, among others, the preservation of substitution that claims the closure of variable substitutions with respect to the strong bisimilarity (since now a variable can take an abstraction). Indeed, the discussion of the mutual inclusion of various bisimilarities calls for more rigorous and fine-grained investigation in the setting of parameterization. More explanation is given in Sections \ref{s:preliminary}, \ref{s:dec_big_d_small_d}.\\
%\item 
%The axiomatization of that strong bisimilarity. %[...]
\noindent$\bullet$~~ With the decidability in place, we design an axiom system and a checking algorithm, in roughly the same vein as those in \cite{LPSS10a}, with the following \zwbrmcolor{differences}. 
%\begin{itemize}
%\item 
(1) For the axiom system, the core part amounts to reducing the deduction of the strong bisimilarity to the extended structural congruence. Previously, such extension  includes a distribution law. Now with parameterization in the game, we have to further extend the structural congruence with the laws for the application operation. 
%\item
%The algorithm for checking the strong bisimilarity. We use a different approach (normalization optimization, variables manipulation, and ...)??? ... [...]
(2) For the bisimilarity-checking algorithm, the core is to transform a term (possibly with parameterization) into certain normal form with the help of a tree representation of the process, and then the bisimilarity checking can be readily done almost syntactically on the normal form. In presence of parameterization, we extend the tree to accommodate abstractions and applications, as well as the normalization procedure. 
In such an extended procedure, we execute applications as many times any possible, and operate the tree in a bottom-up fashion so as to improve on performance.  
The algorithm has \xxyrmcolor{linear space complexity} and polynomial time complexity slightly better than available ones. More details are given in Sections \ref{s:axiomatization}, \ref{s:dec_algorithm}.\\
\annotate{\rc{[\emph{\Large ...maybe rewrite to be consistent...}]Beyond that, we improve on two more respects. Firstly, we avoid using De Bruijn indices to remove variables, because (arguably) they are not part of the process model. Instead, the variables are  uniformly renamed. That is, regarding the tree representation of a process, we attach a same bound variable name to each level of the tree. This would not lead to conflict. This renaming pre-processing is executed at the beginning of the algorithm. 
Secondly, to make the tree more compact after normalization, we perform cleaning more frequently (in the algorithm for HOcore, this is done only once). That is, every time we transform the tree representation of a process, we do garbage collection by removing those sub-processes equal to $0$. As a result, this operation would render more efficient the comparison of trees, and the checking algorithm on the whole as well.
[...]
} 
}
%\end{itemize}
%\end{itemize}
\iftoggle{appendixON}{%
% use appendix
} {%
% no appendix, use online
\xxyrmcolor{An extended version of this paper with more details is available \cite{XZ21L}.}
}%toggle appendixON

%\sepp
%—>
%What is the organization of this paper?

\noindent\textbf{Organization} The remainder of this paper is organized as follows. %Section \ref{s:introduction} ...
Section \ref{s:preliminary} gives the definitions of the process model and the strong bisimilarities. 
Section \ref{s:dec_big_d_small_d} presents the decidability of the strong bisimilarities, with detailed proofs. 
Section \ref{s:axiomatization} does the axiomatization and proves its correctness. 
In Section \ref{s:dec_algorithm}, we demonstrate an algorithm for the bisimilarity checking, and analyses its complexity. 
Section \ref{s:conclusion} concludes this paper and points to some future work.

% \subsection*{Contribution}
% \TODOM{}

% \sepp
% \rc{[About technical routine]}~~ Basically, our approach of establishing the results in this work borrows idea from that in \cite{LPSS10a}. However, with the inclusion of parameterization, many parts (cornerstone and milestone results, and technical tools) become mich more involved, and some obstacles do rise accordingly. So the result of this work is not a trivial corollary of that in available works. As will be seen in the passing, the main results of this paper are somehow proved in a fresh and rather complicated fashion. This somewhat implies that the extension of parameterization in higher-order process models is definitely a non-trivial job. Nevertheless, it is still quite worth the effort, because parameterization is pretty useful in both theory and practice.

% \sepp
% %\paragraph{Paper organization}
% \noindent\textbf{Organization}
% The remainder of this paper is organized as follows. ...
% \TODOM{}

%---------------------------
% Local Variables:
% mode: LaTeX
% TeX-master: "main.tex"
% End:

%!TEX root = ./main.tex
%preliminary
\vspace*{-.4cm}
\section{Preliminary}\label{s:preliminary}  
In this section, we first define \HOmp, the minimal higher-order process model extended with parameterization. Then we introduce the strong bisimilarities to be discussed.

%——————————————
%
%\subsection{Syntax}

\iftoggle{mNotsON}{%
  % use m notes
} {%
  % no m notes
\TODO
\TODOM{Notice the part on parameterization}
\TODOM{Explain and convention...} 
}%toggle m notes

%\TODOM{FROM HERE...}

\noindent\textbf{Syntax}~~ Calculus \HOmp\ has the following syntax.  

\annotate{
\[
\begin{array}{lcll}
P, Q &:=& 0 \qquad & \mbox{nil} \\
&\,\Big{|}\,& X & \mbox{process variable} \\
&\,\Big{|}\,& m(X).P & \mbox{input prefix} \\
&\,\Big{|}\,& \overline{m}(Q) & \mbox{output} \\
&\,\Big{|}\,& P\para Q & \mbox{parallel composition} \\ 
&\,\Big{|}\,& \lrangle{X}P & \mbox{process abstraction} \\  
&\,\Big{|}\,& P\lrangle{Q} & \mbox{process  application} \\ 
&\,\Big{|}\,& \lrangle{x}P & \mbox{name abstraction} \\
&\,\Big{|}\,& P\lrangle{n} & \mbox{name application} \\
%(c)P \,\Big{|}\, 
%!m(x).P %\,\Big{|}\, !\overline{m}n.P
\end{array}
\]
}
%\[
%\begin{array}{l}

$\hspace*{3cm}
P, Q := 0 
\,\Big{|}\, X
\,\Big{|}\, m(X).P 
\,\Big{|}\, \overline{m}(Q) 
\,\Big{|}\, P\para Q 
\,\Big{|}\, \lrangle{X}P   
\,\Big{|}\, P\lrangle{Q} 
\,\Big{|}\, \lrangle{x}P 
\,\Big{|}\, P\lrangle{n} 
%(c)P \,\Big{|}\, 
%!m(x).P %\,\Big{|}\, !\overline{m}n.P
$
%\end{array}
%\]

%substitution 
%name variable 
\HOmp\ expressions (or terms, processes) are represented by capital letters. For the sake of convenience, we divide names (ranged over by $m,n,u,v...$) into two groups: one for name constants (ranged over by $a,b,c,d,e...$) and the other for name variables (ranged over by $x,y,z...$).
The elements of the calculus have their standard meaning. %, as exhibited alongside the syntax. 
One notices that the output is non-blocking, i.e., asynchronous. Sometimes we write $\overline{m}[Q]$ for output. 
Input $m(X).P$ and process abstraction $\lrangle{X}P$ bind the process variable $X$, and name abstraction $\lrangle{x}P$ binds the name variable $x$. Otherwise, a process or name variable is free. Bound variables can be replaced subject to $\alpha$-conversion, and the resulting term is deemed as the same. 
A term is closed if it does not have free \xxyrmcolor{process} variables. 
%(\rc{only process variables?{\scriptsize seems would not make much difference if considering all variables including name variables, but seems better keep it this way now, i.e., only consider process variable when defining 'open' since the calculus is HIGHER-ORDER! (notice the LTS is now on all names regardless of name variables or name constants) ($\longrightarrow$ to see if it works out...)}})
Otherwise it is open.
\annotate{We sometimes simply refer to closed terms as processes, and (general) terms as open processes. Terms $P\lrangle{Q}$ and $P\lrangle{n}$ are process application and name application respectively.
}%annotate-end
Operations $\fpv{\cdot}$, $\bpv{\cdot}$, $\pv{\cdot}$, $\fnv{\cdot}$, $\bnv{\cdot}$, $\nv{\cdot}$, $\nc{\cdot}$, $\n{\cdot}$ respectively return the free process variables, bound process variables, process variables, free name variables, bound name variables, name variables, name constants, and names of a set of terms. 
A variable or name is fresh if it does not appear in the terms under examination. 
We use $\ve{\cdot}$ for a tuple, for example, a tuple of terms $\ve{P}$ and a tuple of names $\ve{m}$. 
Process substitution $P\hosub{Q}{X}$ (respectively name substitution $P\hosub{m}{x}$) denotes the replacement of process variable $X$ (respectively name variable $x$) with the process $Q$ (respectively name $m$). Substitutions can be extended to tuples in the expected way, i.e., pairwise replacement. 

Parameterization refers to abstraction and application, and sometimes parameterization and abstraction are used interchangeably. 
Intuitively the process abstraction $\lrangle{X}P $ (respectively name abstraction $\lrangle{x}P$) \emph{abstracts} in $P$  the process variable $X$ (respectively name variable $x$), which is supposed to be instantiated by a concrete process $Q$ (respectively name $d$) in the application $(\lrangle{X}P)\lrangle{Q}$ (respectively $(\lrangle{x}P)\lrangle{d}$); then in turn the application gives rise to an applied form $P\hosub{P}{X}$ (respectively $P\fosub{d}{x}$). The constructs of abstract and application stem from the counterpart in the lambda-calculus, %which is the foundation of modern popular functional programming, 
and somehow extend the domain of the lambda-calculus to a concurrent setting.
{To ensure correct use of abstraction and application, a type system was designed by Sangiorgi in his seminal thesis \cite{San92}. The typing rules in the type system effect to exclude badly formed expressions, such as $(\lrangle{x}P)\lrangle{A}$ and $\lrangle{A}P$ in which $A$ is a (non-variable) term, $P\para \lrangle{X}Q$ (dangling abstraction), and so on.  
That type system is important but not essential for our work here, so we do not present it and always assume that terms are well-formed subject to typing; interested readers can refer to \cite{San92, SW01a} and reference thereof  for more details. 
}
Term $P_1\para P_2\para \cdots \para P_k$ is abbreviated as $\Pi_{i{=}1}^{k}P_i$. We also have some CCS-like operations defined as follows: $a.P\DEF a(X).P$ where $X\notin \pv{P}$; $a\DEF a.0$; $\overline{a} \DEF \overline{a}0$.
A context $C[\cdot]$ is an expression with some sub-expression replaced by the hole $[\cdot]$, and $C[A]$ means substituting the hole with $A$. 

%\TODOM{more conventions on need ...}

%——————————————
%
%\subsection{Semantics}\label{s:semantics}

% \iftoggle{mNotsON}{%
%   % use m notes
% } {%
%   % no m notes
% }%toggle m notes

\iftoggle{mNotsON}{%
  % use m notes
} {%
  % no m notes
\TODO
\TODOM{Notice that the input rule does not instantiate (late style) and that the part on parameterization can be adapted and inherited}
\TODOM{Explain and convention...}
}%toggle m notes

%\TODOM{FROM HERE...}

\vspace*{.1cm}
\noindent\textbf{Semantics}~~ We denote by $\SCongru$ the standard structural congruence extended by the rules for application, i.e., the smallest congruence meeting the following laws among which the last two formulate the application. 
\annotate{
\[
\begin{array}{lll}
(P\para Q)\para R \SCongru P\para (Q\para R) \quad & 
P\para Q \SCongru Q\para P \quad & 
P \para 0 \SCongru P \\
(\lrangle{X}P)\lrangle{Q} \SCongru P\hosub{Q}{X} \quad & 
(\lrangle{x}P)\lrangle{m} \SCongru P\hosub{m}{x} &
\end{array}
\]
}%annotate-end
\vspace*{.15cm}
$(P\para Q)\para R \SCongru P\para (Q\para R), \quad  
P\para Q \SCongru Q\para P, \quad 
P \para 0 \SCongru P, \quad
(\lrangle{X}P)\lrangle{Q} \SCongru P\hosub{Q}{X}, \quad  
(\lrangle{x}P)\lrangle{m} \SCongru P\hosub{m}{x}$
%\vspace*{.04cm}

Calculus \HOmp\ has the following operational semantics on open terms, with symmetric rules skipped. In the third rule, we assume $\bpv{\lambda}\,\cap\, \fpv{Q}=\emptyset$.
\annotate{Since the semantic rules are based on open terms, a transmitted term may contain free variables.} 
%\xxy{Also one notices that the rules work on names (either name variales or name constants)}.
% \[
% \begin{array}{ll}
% \infer{a(X).P \st{a(X)} P}{} \qquad &
% \infer{\overline{a}Q \st{\overline{a}Q} 0}{}  \\
% \infer{P\para Q\st{\lambda} P'\para Q}{P \st{\lambda} P' \quad \bpv{\lambda}\,\cap\, \fpv{Q}=\emptyset} \qquad &
% \infer{P\para Q \st{\tau} P'\para Q'\hosub{A}{X}}{P\st{\overline{a}A} P' \quad Q\st{a(X)} Q'} \\
% \infer{Q\st{\lambda} Q'}{Q\SCongru P\quad P\st{\lambda} P'\quad P'\SCongru Q'} & \\
% \end{array}
% \]
\[
\begin{array}{l}
\infer{m(X).P \st{m(X)} P}{} \quad 
\infer{\overline{m}Q \st{\overline{m}(Q)} 0}{}  \quad 
% \infer{P\para Q\st{\lambda} P'\para Q}{P \st{\lambda} P' \quad \bpv{\lambda}\,\cap\, \fpv{Q}=\emptyset} \qquad 
\infer{P\para Q\st{\lambda} P'\para Q}{P \st{\lambda} P'} \quad 
\infer{P\para Q \st{\tau} P'\para Q'\hosub{A}{X}}{P\st{\overline{m}(A)} P' \quad Q\st{m(X)} Q'} \quad
%\infer[(**)]{Q\st{\lambda} Q'}{Q\SCongru P\quad P\st{\lambda} P'\quad P'\SCongru Q'} \\
\infer{Q\st{\lambda} Q'}{Q\SCongru P\quad P\st{\lambda} P'\quad P'\SCongru Q'} \\
\end{array}
\]

The semantics grant a term three kinds of actions: input $P \st{a(X)} P'$ means that $P$ can receive a term on channel $a$ to replace the variable $X$ acting as a place-holder in $P$ (here we have a late instantiation style); output $P\st{\overline{a}(Q)} P'$ means that $P$ can send a term $Q$ (which could be an abstraction) on channel $a$ in an asynchronous fashion; interaction $P\st{\tau} P'$ means that $P$ makes a communication of some term between concurrent components. Actions are ranged over by $\alpha,\lambda$.
Operations $\fpv{\cdot}$, $\bpv{\cdot}$, $\pv{\cdot}$, $\fnv{\cdot}$, $\bnv{\cdot}$, $\nv{\cdot}$, $\nc{\cdot}$, $\n{\cdot}$ and also substitutions can be extended to actions in the expected way accordingly. We sometimes write $P\st{\lambda} \cdot$ to represent the transition $P\st{\lambda} P'$ for some $P'$ if $P'$ is not important.
Modelling application as part of the structural congruence follows the line of reduction in lambda-calculus, though there are other options (see \cite{San92,SW01a}).
%\begin{itemize}
%\item 
%\xxyrmcolor{
Thus up-to $\SCongru$, a term can be somehow turned into an equivalent one by applying applications as many times as possible, ending up with a term containing only those application of the form $X\lrangle{A}$. 
\annotate{For the sake of convenience, we often take the convention that every term satisfies this form as described, unless otherwise stated. }
%}
%\item 
%\xxyrmcolor{
As in \cite{San92}, we ensure that applications (substitutions) are bound to end (i.e., normalized), so as to avoid $\Omega$-like terms such as $O\lrangle{O}$ in which $O\DEF \lrangle{X}(X\lrangle{X})$. Said another way, in the sense of order, we focus on abstractions with finite order, not $\omega$ order. See \cite{San92,SW01a} for more discussion about this. 
%}
%\item 
%\xxyrmcolor{
We further notice that if infinite application were to be admitted (though this is a bit strange), then essentially one would retrieve replication, e.g., $!P\DEF O'\lrangle{O'}$ in which $O'\DEF \lrangle{X}(P\para X\lrangle{X})$. This would probably lead to a drastically different situation, which we do not tackle in this work. 
\xxyrmcolor{
Before moving on, we give an example to illustrate the modelling capability of \HOmp. We define two processes $P$ and $Q$ executing a simple protocol, making good use of the parameterization. 
\[P\DEF \overline{a}A \para b(X).(X\lrangle{B} \para O), \qquad Q\DEF a(X).(X\lrangle{c} \para c(Y).R), \qquad A\DEF \lrangle{x}(\overline{b}[\lrangle{Z}\overline{x}Z])
\]
\[
\begin{array}{lcll}
P\para Q &\st{\tau}\SCongru& b(X).(X\lrangle{B} \para O) \para A\lrangle{c} \para c(Y).R &\,\SCongru\, b(X).(X\lrangle{B} \para O) \para \overline{b}[\lrangle{Z}\overline{c}Z] \para c(Y).R \\
&\st{\tau}\SCongru& (\lrangle{Z}\overline{c}Z)\lrangle{B} \para O \para c(Y).R &\,\SCongru\, \overline{c}B \para O \para c(Y).R \\
&\st{\tau}\SCongru& R\hosub{B}{Y} \para O &
\end{array}
\]
The protocol goes as follows: (1) $P$ sends $Q$ an abstraction $A$ over channel $a$ (which is agreed upon beforehand); (2) $Q$ instantiates the name abstraction carried by $A$ with a name $c$ chosen by $Q$ alone (not necessarily negotiated with $P$ before starting the protocol); (3) Part of the code of $A$, i.e., $\lrangle{Z}\overline{c}Z$ is sent back to $P$ over channel $b$ chosen by $P$ alone; (4) Process $B$, e.g., some computational resource or data, is sent to $Q$ over channel $c$, so as to be used in $R$. In the entire protocol, $P$ and $Q$ only agree on the channel name $a$, and initially do not disclose on which channel the resource is to be transmitted. 
}

%}
%\end{itemize}
%\sepp
\annotate{
\textit{\tiny\bc{ \fbox{NOTICE}:
If one does not like the last semantic rule (**) using the structural congruence, it can be replaced with the original rule presented in Sangiorgi's thesis \cite{San92}. That is, one uses the following two rules for name and process abstractions respectively.
\[
\begin{array}{ll}
\infer{(\lrangle{x}P)\lrangle{m} \st{\lambda} P'}{P\fosub{m}{x} \st{\lambda} P'}
\qquad &
\infer{(\lrangle{X}P)\lrangle{Q} \st{\lambda} P'}{P\hosub{Q}{X} \st{\lambda} P'}
\end{array}
\]
Together with these rules, the following lemma helps to achieve the same effect of formulation as the one with the last LTS above. %(**). 
\begin{lemma}\label{l:struc_congru_prop}
If $P\st{\lambda} P'$ and $P\SCongru Q$, then there exists $Q'$ such that $Q\st{\lambda} Q'$ with $P'\SCongru Q'$.
\end{lemma}
\begin{proof}
The proof is an induction on the derivation of $P \SCongru Q$ and a case analysis on the action $P\st{\lambda} P'$. 
\end{proof}
}}
}%annotate end
%\TODOM{more conventions on need ...}
%——————————————
%
%\subsection{Properties about the semantics}
Below we give the notion of ``guarded'' and some relevant properties.
\begin{definition}\label{def:guarded} 
A variable $X$ \annotate{(or $x$) }
is guarded in $P$ 
\annotate{(\xxa{\scriptsize up-to $\SCongru$?? not necessary by convention, i.e., exhausted applications}) }
if $X$ \annotate{(or $x$)} 
merely occurs in the following two situations. 
%\begin{enumerate}
%\item 
(1) $X$ \annotate{(or $x$)} 
occurs in $P$'s subexpressions of the form $m(Y).P'$ (in which $Y$ could be the same as $X$),
  %$\lrangle{Y}P'$ (in which $Y$ could be the same as $X$), $\lrangle{x}P'$,
  or $Y\lrangle{P'}$ (in which $Y$ is not $X$).  
%\item 
(2) $X$ \annotate{(or $x$)} 
occurs free in $P$'s subexpressions of the form $\overline{m}P'$. ~
%\end{enumerate}
A term $P$ is guarded if any free variable of it is guarded.
\end{definition}

%We now have some observation concerning the transitions of terms. 
In what follows, we have the abbreviations: $\mtriggerName \DEF \mtrigger$, $\mtriggerDName \DEF \mtriggerD$, $\mtriggerDdName \DEF \mtriggerDd$. The proofs of the coming two lemmas are by transition induction.
\begin{lemma}\label{l:lts_prop1}
We have the following transition properties. \\ % concerning variables and names. 
%\begin{enumerate}
%\item 
\textbf{(1)} If $P\st{\lambda}P'$, then $P\hosub{R}{X} \st{\lambda\hosub{R}{X}} P'\hosub{R}{X}$ for every $R$ with $\fpv{R}\cap (\pv{P}\cup \pv{\lambda}\cup \{X\})= \emptyset$. \\
%\item 
\textbf{(2)} If $P\hosub{R}{X} \st{\lambda'} P_1$ with $X$ guarded in $P$ and $\fpv{R}\cap (\pv{P}\cup \{X\})= \emptyset$, then $P\st{\lambda} P'$, $P_1 \SCongru P'\hosub{R}{X}$, and $\lambda'$ is $\lambda\hosub{R}{X}$ with $\fpv{R}\cap \pv{\lambda} = \emptyset$. \\
%\item 
\textbf{(3)} If $P\hosub{\mtriggerName}{X} \st{\lambda'} P_1$ with $m$ fresh and not in $\lambda'$, then  $P\st{\lambda} P'$, $P_1 \SCongru P'\hosub{\mtriggerName}{X}$, and $\lambda'$ is $\lambda\hosub{\mtriggerName}{X}$. \\
%\item 
\textbf{(4)} If $P\hosub{\mtriggerDName}{X} \st{\lambda'} P_1$ with $m$ fresh and not in $\lambda'$, then  $P\st{\lambda} P'$, $P_1 \SCongru P'\hosub{\mtriggerDName}{X}$, and $\lambda'$ is $\lambda\hosub{\mtriggerDName}{X}$. \\
%\item 
\textbf{(5)} If $P\hosub{\mtriggerDdName}{X} \st{\lambda'} P_1$ with $m$ fresh and not in $\lambda'$, then  $P\st{\lambda} P'$, $P_1 \SCongru P'\hosub{\mtriggerDdName}{X}$, and $\lambda'$ is $\lambda\hosub{\mtriggerDdName}{X}$. \\
%\item 
\textbf{(6)} If $P\st{\lambda}P'$, then $P\fosub{g}{m} \st{\lambda\fosub{g}{m}} P'\fosub{g}{m}$.\\ %for every $g$.
%\item 
\textbf{(7)} If $P\fosub{g}{m} \st{\lambda'} P_1$ and $\lambda'$ is not $\tau$, then $P\st{\lambda} P'$ in which $\lambda'$ is $\lambda\fosub{g}{m}$, and $P_1\SCongru P'\fosub{g}{m}$.\\
%\item 
\textbf{(8)} If $P\fosub{g}{m} \st{\tau} P_1$, then there are several possibilities:
%\begin{enumerate}
%\item 
\textbf{(a)} $P\st{\tau} P'$ and $P_1\SCongru P'\fosub{g}{m}$.
%\item 
\textbf{(b)} $P\st{\overline{m}A} \cdot$ and $P\st{g(Y)} \cdot$. That is, $P\SCongru \overline{m}A \para g(Y).P_2 \para P_3$, and $P_1\SCongru (P_2\hosub{A}{Y}\para P_3)\fosub{g}{m}$.
%\item 
\textbf{(c)} $P\st{\overline{g}A} \cdot$ and $P\st{m(Y)} \cdot$. That is, $P\SCongru \overline{g}A \para m(Y).P_2 \para P_3$, and $P_1\SCongru (P_2\hosub{A}{Y}\para P_3)\fosub{g}{m}$.
%\end{enumerate}	
%\end{enumerate}
\end{lemma}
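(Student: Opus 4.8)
The plan is to prove Lemma~\ref{l:lts_prop1} by induction on the derivation of the transition, following the structure suggested by the paper: each clause is a statement about how transitions interact with a substitution, and all of them are proved ``by transition induction'' on the last rule used to derive the relevant transition. Since the operational semantics has only five rules (input, output, parallel, communication, and the structural-congruence closure rule), the case analysis is finite and uniform across the clauses. I would organize the proof clause by clause, proving the ``preservation'' clauses (1) and (6) first, since they push a substitution through an existing transition and require only that substitution commutes with the transition rules, and then handling the ``reflection'' clauses (2)--(5), (7), (8), which recover a transition of $P$ from a transition of the substituted term and are where the real content lies.

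For clause (1), I would induct on $P\st{\lambda}P'$. The base cases are the input and output axioms, where applying $\hosub{R}{X}$ to both sides is immediate after checking that the freshness condition $\fpv{R}\cap(\pv{P}\cup\pv{\lambda}\cup\{X\})=\emptyset$ prevents any capture; the parallel case is a direct appeal to the induction hypothesis together with the fact that substitution distributes over $\para$; and the communication case combines the two induction hypotheses, using that $(P'\para Q'\hosub{A}{Y})\hosub{R}{X}$ equals $P'\hosub{R}{X}\para Q'\hosub{R}{X}\hosub{A\hosub{R}{X}}{Y}$ up to the freshness side-conditions. The $\SCongru$-closure case requires the substitution to respect $\SCongru$, which follows because $\SCongru$ is a congruence and its generating laws (including the two application laws) are stable under $\hosub{R}{X}$. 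Clause (6), the name-substitution analogue, is entirely parallel, replacing process substitution by name substitution throughout.

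The reflection clauses are the heart of the lemma, and clause (2) is the template: given $P\hosub{R}{X}\st{\lambda'}P_1$ with $X$ guarded in $P$, I would induct on this transition and show the first move must originate from $P$ itself rather than from the substituted copies of $R$. This is exactly where guardedness is used: because $X$ occurs in $P$ only under an input prefix $m(Y).P'$, under an application $Y\lrangle{P'}$ with $Y\neq X$, or free inside an output $\overline{m}P'$, the substituted occurrences of $R$ sit behind a prefix or inside an output payload and cannot themselves fire the observed action; hence the transition is ``really'' a transition of $P$, and one reconstructs $P\st{\lambda}P'$ with $P_1\SCongru P'\hosub{R}{X}$ and $\lambda'=\lambda\hosub{R}{X}$. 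Clauses (3)--(5) are the specializations of (2) to the concrete triggers $\mtriggerName,\mtriggerDName,\mtriggerDdName$ (defined via the abbreviations just above the lemma): here $R$ is a fixed guarded trigger process of the appropriate parameterization flavour, so guardedness comes for free and freshness of $m$ gives that the trigger contributes no action matching $\lambda'$; I would verify in each case that the trigger's own transitions are all on the fresh name $m$ and therefore excluded by the hypothesis that $m$ is not in $\lambda'$.

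The main obstacle is clause (8), the $\tau$-case of name substitution, where collapsing two distinct names into one can manufacture \emph{new} communications that were impossible before. Here I would analyze the last rule of $P\fosub{g}{m}\st{\tau}P_1$ and split according to whether the synchronizing output/input pair already matched in $P$ or only matched after $g$ and $m$ were identified. The first situation yields case (a) directly from the induction hypothesis; the genuinely new phenomenon is that an output on $m$ and an input on $g$ (or symmetrically, an output on $g$ and an input on $m$) in $P$ become complementary after substitution, giving cases (b) and (c). The delicate points are establishing the structural-congruence decompositions $P\SCongru\overline{m}A\para g(Y).P_2\para P_3$ (respectively with $g$ and $m$ swapped) from the fact that the action became possible, and then checking that the residual $P_1$ is congruent to $(P_2\hosub{A}{Y}\para P_3)\fosub{g}{m}$; getting the bound-variable and freshness bookkeeping right across the $\SCongru$-closure rule, and ensuring no other coincidental synchronization has been overlooked, is the part that demands the most care. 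Clause (7), the non-$\tau$ reflection for name substitution, is comparatively routine since a visible action cannot be created purely by identifying names, so it follows the pattern of clause (2) without the extra cases.
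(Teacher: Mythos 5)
Your plan matches the paper's proof, which is exactly a transition induction on the derivation of the relevant transition, with guardedness blocking $R$-initiated moves in clause (2) and the identification of $g$ and $m$ producing the extra communication cases (b) and (c) in clause (8). One caution: clauses (3)--(5) are not literal specializations of clause (2), since there $X$ need not be guarded in $P$ (it may occur unguarded as $X$, $X\lrangle{A}$ or $X\lrangle{d}$, whose substituted forms reduce under $\SCongru$ to outputs on $m$); your fallback argument --- that every action the trigger can initiate is on the fresh name $m$ and hence excluded by the hypothesis that $m$ does not occur in $\lambda'$ --- is the correct one and should carry the weight there, rather than an appeal to (2).
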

% \begin{proof}
% By transition induction. 
% \end{proof}

\begin{lemma}\label{l:lts_prop2}
Assume that $P$ is a term and $X$ is a process variable. There are $P'$ in which $X$ is guarded and natural number $k\geqslant 0$ such that one of the following cases is true. 
%\begin{enumerate}
%\item 
\textbf{(1)} $P \SCongru P'\para \Pi^{k}_{i{=}1}X$, and $P\hosub{R}{X} \SCongru P'\hosub{R}{X} \para \Pi^{k}_{i{=}1}R$ for every $R$. 
%\item 
~~\textbf{(2)} $P \SCongru P'\para \Pi^{k}_{i{=}1}X\lrangle{A_i}$, and $P\hosub{R}{X} \SCongru P'\hosub{R}{X} \para \Pi^{k}_{i{=}1}R\lrangle{A_i\hosub{R}{X}}$ for every $R$. 
%\item 
~~\textbf{(3)} $P \SCongru P'\para \Pi^{k}_{i{=}1}X\lrangle{m_i}$, and $P\hosub{R}{X} \SCongru P'\hosub{R}{X} \para \Pi^{k}_{i{=}1}R\lrangle{m_i}$ for every $R$. 
%\end{enumerate}
\end{lemma}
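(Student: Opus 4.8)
The plan is to lean on the normalization convention stated just before the lemma: up to $\SCongru$ we may assume that every application occurring in $P$ has a variable at its head, so the only shapes an occurrence of $X$ can take outside a guarding position are the bare variable $X$, a process application $X\lrangle{A}$, or a name application $X\lrangle{m}$. I would therefore first write $P$, up to $\SCongru$, as a parallel composition of its top-level threads, each of which is a prefix $m(Y).P_1$, an output $\overline{m}Q$, an application $Z\lrangle{\cdots}$ headed by some variable $Z$, or a bare variable $Z$; the type discipline of \cite{San92} guarantees that a process carries no dangling top-level abstraction, so these are the only possibilities.

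Next I would classify the threads. Those headed by $X$ (the bare $X$, or $X\lrangle{A}$, or $X\lrangle{m}$) are precisely the unguarded occurrences of $X$, while in every remaining thread $X$ is guarded by Definition \ref{def:guarded}: the free $X$ inside an input body or an output is guarded, and any $X$ inside an application $Z\lrangle{\cdots}$ with $Z\neq X$ is guarded. Collecting the $X$-headed threads, letting $k$ be their number and $P'$ the parallel composition of the rest, I obtain $P \SCongru P' \para R_1 \para \cdots \para R_k$ with $X$ guarded in $P'$. The crucial point is that the $R_i$ all have the same shape: the type discipline assigns $X$ a single type, which is a process type in case (1), a process-abstraction type in case (2), and a name-abstraction type in case (3); these are mutually exclusive, so $X$ cannot occur both bare and applied, nor be applied once to a process and once to a name. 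Hence exactly one of the three patterns $\Pi_{i=1}^{k}X$, $\Pi_{i=1}^{k}X\lrangle{A_i}$, $\Pi_{i=1}^{k}X\lrangle{m_i}$ arises.

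Equivalently this can be organised as a structural induction on $P$: the base cases $0$, $X$, $X\lrangle{A}$, $X\lrangle{m}$, and $Y$ for $Y\neq X$ are immediate; for $m(Y).P_1$, $\overline{m}Q$, and $Z\lrangle{\cdots}$ with $Z\neq X$ we take $k=0$ and $P'=P$; and for $P_1\para P_2$ we merge the two decompositions furnished by the induction hypothesis, aligning them to a common case (a factor with $k=0$ is compatible with any case, and typing forbids a genuine clash). The residual $P'=P_1'\para P_2'$ keeps $X$ guarded, since $X$ is guarded in each factor. I would then verify the substitution identities by pushing $\hosub{R}{X}$ through the decomposition: it commutes with $\para$, yields $P'\hosub{R}{X}$ on the guarded residual, sends a bare $X$-thread to $R$, sends $X\lrangle{A_i}$ to $R\lrangle{A_i\hosub{R}{X}}$ because the substitution must also descend into the argument $A_i$ (which may itself contain $X$), and sends $X\lrangle{m_i}$ to $R\lrangle{m_i}$ since a process substitution leaves the name $m_i$ untouched.

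I expect the main obstacle to be the uniformity claim, namely that all unguarded occurrences of $X$ share one shape and that nested occurrences of $X$ buried inside an argument $A_i$ need not, and indeed must not, be extracted as separate threads. Both facts rest on the type discipline together with the finite-order and normalization assumptions that rule out $\Omega$-like self-applications such as $X\lrangle{X}$ at $\omega$ order; I would therefore state these assumptions explicitly at the outset and restrict attention to well-formed processes, so that the top-level thread decomposition and the exhaustion of applications are both available.
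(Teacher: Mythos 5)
Your argument is correct and is essentially the proof the paper intends: the paper only remarks that this lemma is proved by induction (relegating the details to the extended version), and your structural induction --- separating the unguarded $X$-headed top-level threads from the guarded residual, invoking the type discipline to obtain a uniform shape (bare $X$, $X\lrangle{A_i}$, or $X\lrangle{m_i}$) for those threads, and pushing $\hosub{R}{X}$ through the decomposition --- is the natural way to carry that out. Your explicit restriction to well-formed terms with applications exhausted and no dangling top-level abstraction is the right reading of the statement, since an occurrence of $X$ under a bare $\lrangle{Y}(\cdot)$ would be neither guarded nor extractable, and it is also what makes the uniformity claim for the extracted threads go through.
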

% \begin{proof}
% By structural induction.
% \end{proof}

%\myrule{\textwidth}{.2em}

% % Uncomment the following two lines if you want to have a bibliography
% %\bibliographystyle{apalike}
% %\bibliography{document}

% \end{document}

%\subsection{The strong bisimilarties}\label{s:strong_bisi_defs}
\vspace*{.1cm}
\noindent\textbf{The strong bisimilarities}~~ %\TODOM{FROM HERE...}
In the following, we first present a provably decidable strong bisimilarity, named strong HO-IO bisimilarity. Then we go head to define the various strong bisimilarities, including the strong context bisimilarity and other strong bisimilarities of concern. These strong bisimilarities turn out to be equal. 
\annotate{Specifically, we shall define a bundle of bisimilarities on \HOmp. Some of them are on top of open processes and others on closed processes. These bisimilarities can be extended to abstractions (if it is initially not defined on abstractions) and open processes (if it is initially defined on closed processes) in the usual way, i.e., imposing closure under instantiation (substitution) of the corresponding variables. 
}%annotate-end
%\subsubsection{Strong HO-IO bisimilarity}

\vspace*{.1cm}
\noindent\textbf{Strong HO-IO bisimilarity}~~ 
We define a bisimulation called strong HO-IO bisimulation, with the corresponding equality called strong HO-IO bisimilarity. As will be seen, the most desirable properties we want from this bisimilarity is that it is decidable.  %it coincides with strong context bisimilarity, and more importantly,  
{The definition needs to take into account the abstractions, because the terms transmitted to be compared may be abstractions.}
Jumping ahead, the other strong bisimulations to be defined also have this requirement for abstractions.
%\xx{\rc{So here the comparison of abstractions in terms of the strong HO-IO bisimulation is a built-in requirement, rather than an extended one. This shall be tantamount to some `extending' style definitional approach (i.e., first define bisimulation on top of non-abstractions, and then extend to abstractions). Jumping ahead, all the strong bisimulations in this work have this built-in style for abstractions.}}

\begin{definition}[Strong HO-IO bisimilarity]\label{d:shoio_bisi}
A symmetric binary relation $\R$ over \HOmp\ terms is a strong HO-IO bisimulation, if whenever $P\,\R \, Q$ the following properties hold. 
%\TODO
%\begin{enumerate}
%\item[---]
%\item 

\noindent\textbf{(1)} If $P$ is a non-abstraction, then so is $Q$. 
%%\item If $P$ is a process-abstraction $\lrangle{Y}A$, then $Q$ is a process-abstraction $\lrangle{Y}B$, and $P\lrangle{R} \,\R\, Q\lrangle{R}$ for every $R$. %closed $R$.
%\item 
\noindent~\textbf{(2)} If $P$ is a process-abstraction $\lrangle{Y}A$, then $Q$ is a process-abstraction $\lrangle{Y}B$, and $A \,\R\, B$. %closed $R$.
%%\item If $P$ is a name-abstraction $\lrangle{y}A$, then $Q$ is a name-abstraction $\lrangle{y}B$, and $P\lrangle{d} \,\R\, Q\lrangle{d}$ for every $d$.
%\item 
\noindent~\textbf{(3)} If $P$ is a name-abstraction $\lrangle{y}A$, then $Q$ is a name-abstraction $\lrangle{y}B$, and $A \,\R\, B$.
%\item[---]
%\item 
\noindent~\textbf{(4)} If $P \st{\overline{a}A} P'$, then $Q\st{\overline{a}B} Q'$ with $A \,\R\, B$ and $P' \,\R\, Q'$. 
%\item 
\noindent~\textbf{(5)} If $P \st{a(X)} P'$, then $Q \st{a(X)} Q'$ and $P' \,\R\, Q'$. 
%\item 
\noindent~\textbf{(6)} If $P \SE X \para P'$, then $Q \SE X \para Q'$ and $P' \,\R\, Q'$. 
% \item If $P \SE X\lrangle{A} \para P'$, then $Q \SE X\lrangle{A} \para Q'$ and $P' \,\R\, Q'$. \\
   \tdup{Is it NECESSAREY to use the following framed variant?? \rc{MAYBE YES; will see!}}
   \tdup{ \item[] \fbox{If $P \SE X\lrangle{A} \para P'$, then $Q \SE X\lrangle{B} \para Q'$, and $A \,\R\, B$ and $P' \,\R\, Q'$.}}
%\item 
\noindent~\textbf{(7)} If $P \SE X\lrangle{{A}} \para P'$, then $Q \SE X\lrangle{{B}} \para Q'$, and $A \,\R\, B$ and $P' \,\R\, Q'$. 
  \tdup{Is it NECESSAREY to use instead  the following framed variant?? \rc{?; will see!}}
\tdup{ \item[] \fbox{If $P \SE X\lrangle{\xxy{A}} \para P'$, then $Q \SE X\lrangle{\xxy{A}} \para Q'$ and $P' \,\R\, Q'$.}}
%\item 
\noindent~\textbf{(8)} If $P \SE X\lrangle{d} \para P'$, then $Q \SE X\lrangle{d} \para Q'$ and $P' \,\R\, Q'$.  
  \tdup{Is it POSSIBLE to use instead  the following framed variant?? \rc{?;Probably NOT! NOT making much sense!}}
\tdup{ \item[] \fbox{If $P \SE X\lrangle{\xxy{d}} \para P'$, then $Q \SE X\lrangle{\xxy{e}} \para Q'$ and $P' \,\R\, Q'$.}}
%\end{enumerate}
The strong HO-IO bisimilarity, notation \SHOIOB, is the largest strong HO-IO bisimulation. 
\end{definition}

%\subsubsection{Strong HO bisimilarity}
\vspace*{.1cm}
\noindent\textbf{Strong HO bisimilarity}~~ 
The concept of strong HO bisimilarity is due to Thomsen \cite{Tho93}.
\begin{definition}[Strong HO bisimilarity]\label{d:sho_bisi}
A symmetric binary relation $\R$ over closed \HOmp\ terms is a strong HO bisimulation, if whenever $P\,\R \, Q$ the following properties hold. 
%\TODO
%\begin{enumerate}
%% \item If $P$ is a non-abstraction, then so is $Q$. 
%% \item If $P$ is a process-abstraction $\lrangle{Y}A$, then $Q$ is a process-abstraction $\lrangle{Y}B$, and $P\lrangle{R} \,\R\, Q\lrangle{R}$ for every $R$. %closed $R$.
%% \item If $P$ is a process-abstraction $\lrangle{Y}A$, then $Q$ is a process-abstraction $\lrangle{Y}B$, and $A \,\R\, B$. %closed $R$.

%% \item If $P$ is a name-abstraction $\lrangle{y}A$, then $Q$ is a name-abstraction $\lrangle{y}B$, and $P\lrangle{d} \,\R\, Q\lrangle{d}$ for every $d$.
%% \item If $P$ is a name-abstraction $\lrangle{y}A$, then $Q$ is a name-abstraction $\lrangle{y}B$, and $A \,\R\, B$.
%\item[---]
%\item 
\noindent\textbf{(1)} If $P$ is a non-abstraction, then so is $Q$. 
%%\item If $P$ is a process-abstraction $\lrangle{Y}A$, then $Q$ is a process-abstraction $\lrangle{Y}B$, and $P\lrangle{R} \,\R\, Q\lrangle{R}$ for every $R$. %closed $R$.
%\item If $P$ is a process-abstraction $\lrangle{Y}A$, then $Q$ is a process-abstraction $\lrangle{Y}B$, and $A \,\R\, B$. %closed $R$.
%\item 
\noindent~\textbf{(2)} If $P$ is a process-abstraction $\lrangle{Y}P'$, then $Q$ is a process-abstraction $\lrangle{Y}Q'$, and $P'\hosub{A}{Y} \,\R\, Q'\hosub{A}{Y}$ for every closed $A$. %closed $R$.
%%\item If $P$ is a name-abstraction $\lrangle{y}A$, then $Q$ is a name-abstraction $\lrangle{y}B$, and $P\lrangle{d} \,\R\, Q\lrangle{d}$ for every $d$.
%\item 
\noindent~\textbf{(3)} If $P$ is a name-abstraction $\lrangle{y}A$, then $Q$ is a name-abstraction $\lrangle{y}B$, and $A \,\R\, B$.
%\item[---]
%\item 
\noindent~\textbf{(4)} If $P \st{\overline{a}A} P'$, then $Q\st{\overline{a}B} Q'$ with $A \,\R\, B$ and $P' \,\R\, Q'$. 
%\item 
\noindent~\textbf{(5)} If $P \st{a(X)} P'$, then $Q \st{a(X)} Q'$ and for every closed $A$, it holds that $P'\hosub{A}{X} \,\R\, Q'\hosub{A}{X}$. 
%\item 
%\item 
\noindent~\textbf{(6)} If $P \st{\tau} P'$, then $Q \st{\tau} Q'$ and $P' \,\R\, Q'$. ~~
%\end{enumerate}
The strong HO bisimilarity, notation \SHOB, is the largest strong HO bisimulation. 
\end{definition}
% \TODOM{expand...discussion...properties of strong HO bisimilarity}
% \TODOM{notice extension to abstraction and open processes...}
% \TODOM{expand...discussion...properties of strong HO-IO bisimilarity, including congruence and DECIDABILITY}
%\TODOM{expand...discussion...properties of strong HO-IO bisimilarity, including congruence and DECIDABILITY}

%\subsubsection{Strong context bisimilarity}
\vspace*{.1cm}
\noindent\textbf{Strong context bisimilarity}~~ 
We denote by $E(X)$ a process $E$ possibly with $X$ appearing free in it, i.e., $\fpv{E} \subseteq \{X\}$. Accordingly, $E(A)$ denotes $E(X)\hosub{A}{X}$. 
As a nearly standard version of the bisimilarity for higher-order processes, the context bisimulation was proposed by Sangiorgi \cite{San92}.
\begin{definition}[Strong context bisimilarity]\label{d:sctx_bisi}
A symmetric binary relation $\R$ over closed \HOmp\ terms %processes
is a strong context bisimulation, if whenever $P\,\R \, Q$ the following properties hold. 
%\TODO
%\begin{enumerate}
{%\small
%\item[---]
%\item 
\noindent\textbf{(1)} If $P$ is a non-abstraction, then so is $Q$. 
%\item 
\noindent~\textbf{(2)} If $P$ is a process-abstraction $\lrangle{Y}P'$, then $Q$ is a process-abstraction $\lrangle{Y}Q'$, and $P'\hosub{A}{Y} \,\R\, Q'\hosub{A}{Y}$ for every closed $A$. 
%\item 
\noindent~\textbf{(3)} If $P$ is a name-abstraction $\lrangle{y}A$, then $Q$ is a name-abstraction $\lrangle{y}B$, and $A \,\R\, B$.
}
\tdup{\scriptsize
\item[---]  \xxy{Do we need the above few clauses for abstractions? That is, may we need to modify the def of strong context bisimulation to admit comparison of abstractions? \rc{Seems YES; TO see... } } ...\\
%[Comment for the definition of strong context bisimulation; about requirements on abstractions]
\xxy{In the definition of the strong context bisimulation, these requirements about abstractions appear not necessary in terms of output, because the pair of terms (which can be abstractions) being outputted are never compared directly without any contexts in the situation of the context bisimulation. %Putting them here is simply for the sake of convenience.
However, they appear necessary for input, because the matching of an input can produce the comparison of two terms that are abstraction (this contrasts the situation where processes rather than general terms are considered in the definition of the strong context bisimulation). For example, $a(X).\lrangle{Y}0$ and $a(X).\lrangle{Z}0$ will ask one to compare $\lrangle{Y}0$ and $\lrangle{Z}0$. Though this may be simplified, we put these requirements here in the definition for the sake of safety and convenience.
}
}
%\item 
\noindent~\textbf{(4)} If $P \st{a(X)} P'$, then $Q \st{a(X)} Q'$ and for every closed $A$, it holds that $P'\hosub{A}{X} \,\R\, Q'\hosub{A}{X}$. 
%\item 
\noindent~\textbf{(5)} If $P \st{\overline{a}A} P'$ in which $A$ is a non-abstraction, process-abstraction , or name-abstraction, then $Q \st{\overline{a}B} Q'$ for some $B$ that is respectively a non-abstraction, process-abstraction , or name-abstraction, and for every $E(X)$, it holds that $E(A)\para P' \,\R\, E(B)\para Q'$. 
%\item 
\noindent~\textbf{(6)} If $P \st{\tau} P'$, then $Q \st{\tau} Q'$ and $P' \,\R\, Q'$.~~ 
%\end{enumerate}
The strong context bisimilarity, notation \SCTXB, is the largest strong context bisimulation. 
\end{definition}

We note that $\SCTXB$ can be extended to open process terms, similar for $\SHOB$. That is, for open terms $P$ and $Q$ with $\fpv{P,Q}=\ve{X}$, 
$
P \,\SCTXB\, Q \mbox{ if and only if } P\hosub{\ve{R}}{\ve{X}} \,\SCTXB\, Q\hosub{\ve{R}}{\ve{X}}  \mbox{ for any closed } \ve{R} .
$
% \[
% P \,\SCTXB\, Q \quad\mbox{ if and only if }\quad P\hosub{\ve{R}}{\ve{X}} \,\SCTXB\, Q\hosub{\ve{R}}{\ve{X}} \qquad\quad \mbox{ for any closed } \ve{R} .
% \]
% where we remember that $\hosub{\ve{R}}{\ve{X}}$ denotes pairwise substitution. 
%{\scriptsize The extension to abstractions is similar. MAY NOT be necessary because it is directed in the definition already. }

\iftoggle{mNotsON}{%
  % use m notes
} {%
  % no m notes
\TODOM{expand...discussion...properties of strong context bisimilarity}
\TODOM{notice extension to abstraction and open processes...}
}%toggle m notes

%\subsubsection{Strong normal bisimilarity}
\vspace*{.1cm}
\noindent\textbf{Strong normal bisimilarity}~~ 
Higher-order process expressions here may be parameterized over processes themselves or names. 
%For simplicity, we focus on unary parameterization, i.e., parameterization that only abstracts one variable, either a process variable or a name variable. Due to parameterization, 
Accordingly, abstractions can be transmitted in communications, and thus process variables have three types: non-abstraction, process-abstraction, and name-abstraction. %Types are needed but not essential in this work, so 
We refer the reader to \cite{San92} for the detailed formalization of types. 
% To cater for our need, we simply stipulate that a process variable may have three types: 
% \begin{enumerate}
% \item Type $0$: it should be substituted by a non-abstraction. 
% \item Type $1$: it should be substituted by a process-abstraction. 
% \item Type $2$: it should be substituted by a name-abstraction. 
% \end{enumerate}
To cater for our need, knowing which of the three types a process variable belongs to is sufficient for our work. For convenience, we may simply say that a process variable is a non-abstraction, process-abstraction, or name-abstraction. 

Before presenting the definition of the strong normal bisimilarity, we give the definition of triggers: $\mtriggerName \DEF \mtrigger$, $\mtriggerDName \DEF \mtriggerD$, $\mtriggerDdName \DEF \mtriggerDd$. These triggers correspond to the three types of process variables represented above, and will be used to handle abstractions bound to instantiate these process variables. The concept of triggers was proposed by Sangiorgi and plays a prevalent role in the manipulation of higher-order processes; see \cite{San92}\cite{Xu20}. % for more of explanation. 
\xxyrmcolor{
 %Intuitively, triggers are the gadgets proposed by Sangiorgi to define the normal bisimulation in order to simplify the context bisimulation. 
 We stress that the design of the normal bisimulation in this work requires new forms of triggers due to the  presence of parameterization. The work in \cite{LPSS10a} only needs the simplest form of triggers acting as synchronizers sending handshaking signals, i.e., $\mtrigger$. However in contrast, in the setting of parameterization, triggers should bear the responsibility of relocating the parameters for an abstraction. This design is non-trivial in general, and we harness the results in the previous work \cite{Xu20} to devise different forms of triggers used by the parameterization.
}
It is not hard to prove that the strong normal bisimilarity is a congruence \cite{San92}.
\begin{definition}[Strong normal bisimilarity]\label{d:snr_bisi}
A symmetric binary relation $\R$ over closed \HOmp\ terms %processes
 is a strong normal bisimulation, if whenever $P\,\R \, Q$ the following properties hold. 
%\TODO
%\begin{enumerate}

{%\small
%\item[---]
%\item ... 
%\item If $P$ is a non-abstraction, then so is $Q$.
%\item ... 
%\item If $P$ is a process-abstraction $\lrangle{Y}P'$, then $Q$ is a process-abstraction $\lrangle{Y}Q'$, and $P'\hosub{A}{X} \,\R\, Q'\hosub{A}{X}$ for every closed $A$. 
%\item ... 
%\item If $P$ is a name-abstraction $\lrangle{y}A$, then $Q$ is a name-abstraction $\lrangle{y}B$, and $A \,\R\, B$.
%\item 
\noindent\textbf{(1)} If $P$ is a non-abstraction, then so is $Q$. 
%\item 
\noindent~\textbf{(2)} If $P$ is a process-abstraction $\lrangle{Y}P'$, then $Q$ is a process-abstraction $\lrangle{Y}Q'$, and  for every closed $A$ it holds for fresh $m$ that: \\
%\begin{enumerate}
%\item 
\noindent~\textbf{(a)} $P'\hosub{\mtriggerName}{Y} \,\R\, Q'\hosub{\mtriggerName}{Y}$, if $Y$ is a non-abstraction. 
%\item 
\noindent~\textbf{(b)} $P'\hosub{\mtriggerDName}{Y} \,\R\, Q'\hosub{\mtriggerDName}{Y}$, if $Y$ is a process-abstraction. 
%\item 
\noindent~\textbf{(c)} $P'\hosub{\mtriggerDdName}{Y} \,\R\, Q'\hosub{\mtriggerDdName}{Y}$, if $Y$ is a name-abstraction. 
%\end{enumerate}
%\item 

\noindent\textbf{(3)} If $P$ is a name-abstraction $\lrangle{y}A$, then $Q$ is a name-abstraction $\lrangle{y}B$, and $A \,\R\, B$.
}
\tdup{\scriptsize
\item[---]  \xxy{Do we need the clauses for abstractions as the corresponding \#1,\#2,\#3 ones in the definition of strong context bisimulation (Definition \ref{d:sctx_bisi})?  \rc{Seems YES; TO see (synchronously with that in the def of strong context bisimulation)...}} ...
}

%\item 
\noindent\textbf{(4)} If $P \st{a(X)} P'$, then $Q \st{a(X)} Q'$ and for every closed $A$, it holds for fresh $m$ that: \\
%\begin{enumerate}
%\item 
\noindent~\textbf{(a)} $P'\hosub{\mtriggerName}{X} \,\R\, Q'\hosub{\mtriggerName}{X}$, if $X$ is a non-abstraction. 
%\item 
\noindent~\textbf{(b)} $P'\hosub{\mtriggerDName}{X} \,\R\, Q'\hosub{\mtriggerDName}{X}$, if $X$ is a process-abstraction. 
%\item 
\noindent~\textbf{(c)} $P'\hosub{\mtriggerDdName}{X} \,\R\, Q'\hosub{\mtriggerDdName}{X}$, if $X$ is a name-abstraction. 
%\end{enumerate}

%\item 
\noindent~\textbf{(5)} If $P \st{\overline{a}A} P'$, there are three possibilities: 
%\begin{enumerate}
%\item 
\noindent~\textbf{(a)} If $A$ is not an abstraction, then $Q\st{\overline{a}B} Q'$ for non-abstraction $B$, and it holds for fresh $m$ that $m.A\para P' \,\R\, m.B\para Q'$.
%\item 
\noindent~\textbf{(b)} If $A$ is a process-abstraction $\lrangle{Y}A_1$, then $Q\st{\overline{a}B} Q'$ for process-abstraction $B$ that is $\lrangle{Y}B_1$,  and it holds for fresh $m$ that $m(Z).A\lrangle{Z}\para P' \,\R\, m(Z).B\lrangle{Z}\para Q'$.
%\item 
\noindent~\textbf{(c)} If $A$ is a name-abstraction $\lrangle{y}A_1$, then $Q\st{\overline{a}B} Q'$ for name-abstraction $B$ that is $\lrangle{y}B_1$,  and it holds for fresh $m$ that $m(Z).Z\lrangle{A}\para P' \,\R\, m(Z).Z\lrangle{B}\para Q'$.
%\end{enumerate}

%\item 
\noindent~\textbf{(6)} If $P \st{\tau} P'$, then $Q \st{\tau} Q'$ and $P' \,\R\, Q'$.~~
%\end{enumerate}
The strong normal bisimilarity, notation \SNRB, is the largest strong normal bisimulation. 
\end{definition}

We can also extend $\SNRB$ to open terms. 
\annotate{Here we should explicitly discriminate between different types of process variables. }
For open terms $P$ and $Q$ with $\fpv{P, Q} = \{\ve{X_1}, \ve{X_2}, \ve{X_3}\}$, 
% \[
% \begin{array}{c}
% P \,\SNRB\, Q \\
% \mbox{if and only if} \\
% P
% \hosub{\ve{\genTrigger[m_1]{}}}{\ve{X_1}} 
% \hosub{\ve{\genTrigger[m_2]{D}}}{\ve{X_2}} 
% \hosub{\ve{\genTrigger[m_3]{D,d}}}{\ve{X_3}} 
% \,\SNRB\,
% Q
% \hosub{\ve{\genTrigger[m_1]{}}}{\ve{X_1}} 
% \hosub{\ve{\genTrigger[m_2]{D}}}{\ve{X_2}} 
% \hosub{\ve{\genTrigger[m_3]{D,d}}}{\ve{X_3}}
% \end{array}
% \]
$
P \,\SNRB\, Q ~~
\mbox{ if and only if } ~~
P
\hosub{\ve{\genTrigger[m_1]{}}}{\ve{X_1}} 
\hosub{\ve{\genTrigger[m_2]{D}}}{\ve{X_2}} 
\hosub{\ve{\genTrigger[m_3]{D,d}}}{\ve{X_3}} 
\,\SNRB\,
Q
\hosub{\ve{\genTrigger[m_1]{}}}{\ve{X_1}} 
\hosub{\ve{\genTrigger[m_2]{D}}}{\ve{X_2}} 
\hosub{\ve{\genTrigger[m_3]{D,d}}}{\ve{X_3}}
$,  
where 
each variable in $\ve{X_1}$, $\ve{X_2}$ and $\ve{X_3}$ is respectively a  non-abstraction, process-abstraction and name-abstraction, and is replaced with the corresponding trigger for that variable type. The corresponding tuples of triggers are respectively denoted by $\ve{\genTrigger[m_1]{}}$, $\ve{\genTrigger[m_2]{D}}$ and $\ve{\genTrigger[m_3]{D,d}}$, where the names of all the triggers are fresh. % w.r.t. $P$ and $Q$.
\tdup{\scriptsize This extension is also applicable to abstractions. MAYBE UNnecessary because it is already embedded in the definition. }

\iftoggle{mNotsON}{%
  % use m notes
} {%
  % no m notes
\TODOM{expand...discussion...properties of strong normal bisimilarity}
\TODOM{notice extension to abstraction and open processes...}
}%toggle m notes

%\subsubsection{Open strong normal bisimilarity}
\vspace*{.1cm}
\noindent\textbf{Open strong normal bisimilarity}~~ 
The following bisimilarity is a variant of the strong normal bisimilarity on open terms. It is basically an extension of the same bisimilarity in \cite{LPSS10a}.
\begin{definition}[Open strong normal bisimilarity]\label{d:opensnr_bisi}
A symmetric binary relation $\R$ over \HOmp\ terms is an open strong normal bisimulation, if whenever $P\,\R \, Q$ the following properties hold. 
%\TODO

%\begin{enumerate}
%\item[---]
%\item 
\noindent\textbf{(1)} If $P$ is a non-abstraction, then so is $Q$. 
%%\item If $P$ is a process-abstraction $\lrangle{Y}A$, then $Q$ is a process-abstraction $\lrangle{Y}B$, and $P\lrangle{R} \,\R\, Q\lrangle{R}$ for every $R$. %closed $R$.
%\item 
\noindent~\textbf{(2)} If $P$ is a process-abstraction $\lrangle{Y}A$, then $Q$ is a process-abstraction $\lrangle{Y}B$, and $A \,\R\, B$. %closed $R$.
%%\item If $P$ is a name-abstraction $\lrangle{y}A$, then $Q$ is a name-abstraction $\lrangle{y}B$, and $P\lrangle{d} \,\R\, Q\lrangle{d}$ for every $d$.
%\item 
\noindent~\textbf{(3)} If $P$ is a name-abstraction $\lrangle{y}A$, then $Q$ is a name-abstraction $\lrangle{y}B$, and $A \,\R\, B$.
%\item[---]
%\item 
\noindent~\textbf{(4)} If  $P \st{\overline{a}A} P'$ or $P \st{\tau} P'$, then $Q$ matches $P$ in the same way as in strong normal bisimilarity. 
%\item 
%\item 
\noindent~\textbf{(5)} If $P \st{a(X)} P'$, then $Q \st{a(X)} Q'$ and $P' \,\R\, Q'$. 
%\item 
\noindent~\textbf{(6)} If $P \SE X \para P'$, then $Q \SE X \para Q'$ and $P' \,\R\, Q'$. 
%\item If $P \SE X\lrangle{A} \para P'$, then $Q \SE X\lrangle{A} \para Q'$ and $P' \,\R\, Q'$.

{ %\item 
%\fbox{
%\begin{minipage}{8cm}
\noindent\textbf{(7)} If $P \SE X\lrangle{\xxyrmcolor{A}} \para P'$, then $Q \SE X\lrangle{\xxyrmcolor{B}} \para Q'$ and moreover the following is valid for fresh $m$.\\
%\begin{enumerate}
%\item 
\noindent~\textbf{(a)} If $A$ is not an abstraction, then so is $B$ and $m.A\para P' \,\R\, m.B\para Q'$.
%\item 
\noindent~\textbf{(b)} If $A$ is a process-abstraction, then so is $B$ and  $m(Z).A\lrangle{Z}\para P' \,\R\, m(Z).B\lrangle{Z}\para Q'$.
%\item 
\noindent~\textbf{(c)} If $A$ is a name-abstraction, then so is $B$ and $m(Z).Z\lrangle{A}\para P' \,\R\, m(Z).Z\lrangle{B}\para Q'$.
%\end{enumerate}

%\end{minipage}
%}%fbox-end
}%\sepp

  \tdup{Is it NECESSAREY to use instead the following framed variant?? \rc{?; will see!}}
\tdup{\item[] \fbox{If $P \SE X\lrangle{\rc{A}} \para P'$, then $Q \SE X\lrangle{\rc{B}} \para Q'$, and $A \,\R\, B$ and $P' \,\R\, Q'$. }}
%\sepp

  \tdup{OR Is it NECESSAREY to use instead the following another variant?? \rc{?; seems too demanding!}}
\tdup{ \item[] \fbox{If $P \SE X\lrangle{\xxy{A}} \para P'$, then $Q \SE X\lrangle{\xxy{A}} \para Q'$ and $P' \,\R\, Q'$.}}

%\item If $P \SE X\lrangle{d} \para P'$, then $Q \SE X\lrangle{d} \para Q'$ and $P' \,\R\, Q'$. 
%\item 
\noindent~\textbf{(8)} If $P \SE X\lrangle{d} \para P'$, then $Q \SE X\lrangle{d} \para Q'$ and $P' \,\R\, Q'$.  
  \tdup{Is it POSSIBLE to use instead  the following framed variant?? \rc{?;Probably NOT! NOT making much sense!}}
  \tdup{ \item[] \fbox{If $P \SE X\lrangle{\xxy{d}} \para P'$, then $Q \SE X\lrangle{\xxy{e}} \para Q'$ and $P' \,\R\, Q'$.}}
%\end{enumerate}

\noindent The open strong normal bisimilarity, notation \OSNRB, is the largest open strong normal bisimulation. 
\end{definition}

\iftoggle{mNotsON}{%
  % use m notes
} {%
  % no m notes
\TODOM{expand...discussion...properties of open strong HO-IO bisimilarity}
}%toggle m notes

\vspace*{-.4cm}
\section{Deciding the strong bisimilarity for \HOmp}\label{s:dec_big_d_small_d}

%\TODOM{(1) fetch from ``scrible.tex''; (2) defs of bisimilarities go to Section \ref{s:strong_bisi_defs}; (3) discussions (decidability, relationship and etc.) go to here. }

In this section, we first establish the decidability of the strong HO-IO bisimilarity. 
This is the cornerstone of the decidability for other bisimilarities. Then we discuss the relationship between the strong bisimilarities, and eventually obtain the coincidence between them. As such, all of the strong bisimilarities are decidable.

\vspace*{-.4cm}
\subsection{The decidability and properties of \SHOIOB}
% \SHOB,
% \SCTXB,
% \SNRB,
% \OSNRB

%\TODOM{FROM HERE...}

%
%\paragraph{Depth of a term} 
To facilitate discussion on decidability, we need a metric of the syntactical structure of a term. 
\begin{definition}[Depth of a term]\label{def:depth}
The depth $\pdepth{P}$ of a term $P$ is a mapping from terms to natural numbers defined as follows. \vspace*{.1cm}\\
\annotate{
\[
\begin{array}{ll}
P\qquad\qquad & \pdepth{P} \\\hline 
0 & 0 \\
X & 1 \\
m(X).P_1 & \pdepth{P_1} + 1 \\
\overline{m}(P_1) & \pdepth{P_1} + 1 \\
P_1\para P_2 & \pdepth{P_1} + \pdepth{P_2} \\
\lrangle{X}P_1 & \pdepth{P_1} + 1 \\
X\lrangle{P_1} & \pdepth{P_1} + 1 \\
P_1\lrangle{P_2} & \pdepth{P_3\hosub{P_2}{Y}} \qquad\, \mbox{where $P_1$ is $\lrangle{Y}P_3$} \\
\lrangle{x}P_1 & \pdepth{P_1} + 1 \\
X\lrangle{n} & 1 \\
P_1\lrangle{n} & \pdepth{P_3\fosub{n}{y}} \qquad\quad \mbox{where $P_1$ is $\lrangle{y}P_3$} \\
\end{array}
\]
}%annotate-end
$
\begin{array}{l}
\pdepth{0} = 0,~
\pdepth{X} = 1,~
\pdepth{m(X).P_1} = \pdepth{P_1} + 1,~
\pdepth{\overline{m}(P_1)} = \pdepth{P_1} + 1,~\\
\pdepth{P_1\para P_2} = \pdepth{P_1} + \pdepth{P_2},~
\pdepth{\lrangle{X}P_1} = \pdepth{P_1} + 1,~
\pdepth{X\lrangle{P_1}} = \pdepth{P_1} + 1,~\\
\pdepth{P_1\lrangle{P_2}} = \pdepth{P_3\hosub{P_2}{Y}} ~ \mbox{(where $P_1$ is $\lrangle{Y}P_3$)},~
\pdepth{\lrangle{x}P_1} = \pdepth{P_1} + 1,~\\
\pdepth{X\lrangle{n}} = 1,~
\pdepth{P_1\lrangle{n}} = \pdepth{P_3\fosub{n}{y}} ~ \mbox{(where $P_1$ is $\lrangle{y}P_3$)}
\end{array}
$

\end{definition}
%P, Q &:=& 0 \qquad & \mbox{nil} \\
%&\,\Big{|}\,& X & \mbox{process variable} \\
%&\,\Big{|}\,& m(X).P & \mbox{input prefix} \\
%&\,\Big{|}\,& \overline{m}(Q) & \mbox{output} \\
%&\,\Big{|}\,& P\para Q & \mbox{parallel composition} \\ 
%&\,\Big{|}\,& \lrangle{X}P & \mbox{process abstraction} \\  
%&\,\Big{|}\,& P\lrangle{Q} & \mbox{process  application} \\ 
%&\,\Big{|}\,& \lrangle{x}P & \mbox{name abstraction} \\
%&\,\Big{|}\,& P\lrangle{n} & \mbox{name application} \\
%(c)P \,\Big{|}\, 
%!m(x).P %\,\Big{|}\, !\overline{m}n.P

%<<<<<<< HEAD
%An immediate consequence is that the definition of depth respects the structural congruence \zwb{and bisimulation equivalence}. 
%=======
An immediate property is that both of $P\SCongru Q$ and $P\SHOIOB Q$ implies $\pdepth{P} = \pdepth{Q}$. The proof of this property is by induction over the depth of $P$. 
\iftoggle{appendixON}{%
% use appendix
\rc{The details are put in Appendix \ref{a:proofs_1}.}
} {%
% no appendix, use online
\xxyrmcolor{The details can be found in \cite{XZ21L}.}
}%toggle appendixON

\begin{lemma}\label{l:depth_struc_sim}
If $P\SCongru Q$ or $P\SHOIOB Q$, then $\pdepth{P} = \pdepth{Q}$.
\end{lemma}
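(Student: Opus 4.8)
The plan is to treat the two hypotheses separately, since they call for rather different arguments, but both ultimately rest on one observation: $\pdepth{\cdot}$ is \emph{additive} over parallel composition (with $\pdepth{0}=0$) and over the prefixing and abstraction constructors (each adding $1$), while on an application $P_1\lrangle{P_2}$ with $P_1$ an abstraction it is \emph{defined} to return the depth of the contractum $P_3\hosub{P_2}{Y}$ (and similarly for name application). The common engine will be induction on $\pdepth{P}$.

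For the case $P \SCongru Q$, I would first check that each of the five generating laws preserves depth. The three monoid laws for $\para$ preserve $\pdepth{\cdot}$ because it is additive over $\para$ with unit $0$, so associativity, commutativity and $P\para 0 \SCongru P$ merely rearrange a sum of naturals. The two application laws $(\lrangle{X}P)\lrangle{Q}\SCongru P\hosub{Q}{X}$ and $(\lrangle{x}P)\lrangle{m}\SCongru P\fosub{m}{x}$ preserve depth \emph{by definition}, since these are exactly the defining clauses of $\pdepth{\cdot}$ on applications. Reflexivity, symmetry and transitivity are immediate for equality of naturals. The only delicate point is closure under contexts through the application constructor: from $P_2\SCongru P_2'$ one must deduce $\pdepth{P_3\hosub{P_2}{Y}}=\pdepth{P_3\hosub{P_2'}{Y}}$, and this does \emph{not} follow from $\pdepth{P_2}=\pdepth{P_2'}$ alone. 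I would resolve this by normalizing: writing $\normalform{P}$ for the term obtained by firing all applications (which terminates by the finite-order assumption), one shows $\pdepth{P}=\pdepth{\normalform{P}}$ (the application clauses make $\pdepth{\cdot}$ invariant under contracting a redex), and that between normal forms $\SCongru$ reduces to the monoid laws, which are additive-depth preserving. Equivalently, one invokes the paper's standing convention that terms are already fully applied, so that applications occur only in the variable-headed forms $X\lrangle{A}$, $X\lrangle{n}$, whose depth is additive, leaving only the monoid laws.

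For the case $P \SHOIOB Q$, I would argue by induction on $\pdepth{P}$, working up to $\SCongru$ (legitimate once the first case is in hand). If $P$ is an abstraction, clauses (2)/(3) of Definition \ref{d:shoio_bisi} give $Q$ an abstraction of the same shape whose body is $\SHOIOB$-related to that of $P$; since the body has strictly smaller depth, the induction hypothesis yields equal body-depths, and adding $1$ gives $\pdepth{P}=\pdepth{Q}$. If $P$ is a non-abstraction, so is $Q$ by clause (1). If $P\SCongru 0$ then, by symmetry of the conditions, $Q$ has no transitions and no variable threads, whence $Q\SCongru 0$ and both depths are $0$. Otherwise $P$ has at least one ``handle'': a variable thread (clause 6), a process-application thread $X\lrangle{A}$ (clause 7), a name-application thread $X\lrangle{d}$ (clause 8), an input (clause 5), or an output (clause 4). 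In each case the matching clause produces the corresponding handle in $Q$ together with $\SHOIOB$-related residuals (and, for outputs and process-applications, related payloads), all of strictly smaller depth. The key point is that every such step changes depth by a fixed, matched amount: peeling a variable thread or a name-application removes $1$, peeling a process-application removes $\pdepth{A}+1$, firing an input removes exactly $1$, and firing an output removes $\pdepth{A}+1$; the matched step on $Q$ removes the same amount because the related payloads have equal depth by the induction hypothesis. Reassembling the additive contributions gives $\pdepth{P}=\pdepth{Q}$.

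The main obstacle is the application constructor. For the $\SCongru$ part it is what forces the normalization (or substitution-lemma) detour just described, because depth is genuinely \emph{not} a function of the immediate subterm-depths at an application node: substituting the two depth-equal abstractions $\lrangle{Z}Z$ and $\lrangle{Z}(m(W).0)$ into $X\lrangle{0}$ yields contracta $0$ and $m(W).0$ of different depths, so only the stronger hypothesis $P_2\SCongru P_2'$, not mere depth-equality, saves the argument. For the $\SHOIOB$ part the subtlety is purely bookkeeping: one must verify that a single peeling step strictly decreases depth and that the matched step on the other side removes an equal amount, which is precisely what makes the depth-induction go through; here the late-style input (no instantiation on input) is what guarantees that firing an input always costs exactly $1$, independently of the payload.
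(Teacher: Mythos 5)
Your proposal is correct and follows essentially the same route as the paper: the paper proves this lemma by induction over $\pdepth{P}$, checking that each generating law of $\SCongru$ and each clause of Definition~\ref{d:shoio_bisi} matches depth contributions exactly, which is what you do. Your explicit treatment of the application constructor (the counterexample showing depth-equality of the arguments is not enough, and the normalization/fully-applied-convention workaround) is a genuine subtlety worth making explicit, and your accounting of the per-clause depth costs in the $\SHOIOB$ case is sound.
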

\annotate{
\begin{lemma}\label{l:depth_sim}
If $P\SHOIOB Q$, then $\pdepth{P} = \pdepth{Q}$.
\end{lemma}
}
%=======

%——————————————
%
%\paragraph{Strong HO-IO bisimulation  up-to $\SCongru$}
\vspace*{.1cm}
\noindent\textbf{Strong HO-IO bisimulation  up-to $\SCongru$}~~ 
\annotate{Bisimulation up-to $\SCongru$ (and others equalities as well) is a useful technique to establish bisimulations; see \cite{Mil89, SW01a} for a thorough introduction and discussion. Here for our purpose, we define strong HO-IO bisimulation up-to $\SCongru$. }
Bisimulation up-to $\SCongru$ is a useful technique to establish bisimulations. Its definition is obtained by replacing $\R$ with $\SCongru\R\SCongru$ in every clause of Definition \ref{d:shoio_bisi}. The advantage of the up-to technique is that if \R is a strong HO-IO bisimulation up-to $\SCongru$, then $\R \subseteq \SHOIOB$.
See \cite{Mil89, SW01a} for a thorough introduction and discussion. 

\annotate{ 
\begin{definition}[Strong HO-IO bisimulation up-to $\SCongru$]
A symmetric binary relation $\R$ over (\HOmp) terms is a strong HO-IO bisimulation up-to $\SCongru$, if whenever $P\,\R \, Q$ the following properties hold. 
%\TODO
\begin{enumerate}
%%\item If $P$ is a non-abstraction, then so is $Q$. 
%%\item If $P$ is a process-abstraction $\lrangle{Y}A$, then $Q$ is a process-abstraction $\lrangle{Y}B$, and $P\lrangle{R} \,\R\, Q\lrangle{R}$ for every $R$. %closed $R$.
%%\item If $P$ is a process-abstraction $\lrangle{Y}A$, then $Q$ is a process-abstraction $\lrangle{Y}B$, and $A \,\R\, B$. %closed $R$.

%%\item If $P$ is a name-abstraction $\lrangle{y}A$, then $Q$ is a process-abstraction $\lrangle{y}B$, and $P\lrangle{d} \,\R\, Q\lrangle{d}$ for every $d$.
%%\item If $P$ is a name-abstraction $\lrangle{y}A$, then $Q$ is a process-abstraction $\lrangle{y}B$, and $A \,\R\, B$.

%\item[---]
\item If $P$ is a non-abstraction, then so is $Q$. 
%%\item If $P$ is a process-abstraction $\lrangle{Y}A$, then $Q$ is a process-abstraction $\lrangle{Y}B$, and $P\lrangle{R} \,\R\, Q\lrangle{R}$ for every $R$. %closed $R$.
\item If $P$ is a process-abstraction $\lrangle{Y}A$, then $Q$ is a process-abstraction $\lrangle{Y}B$, and $A \,\SCongru\R\SCongru\, B$. %closed $R$.

%%\item If $P$ is a name-abstraction $\lrangle{y}A$, then $Q$ is a name-abstraction $\lrangle{y}B$, and $P\lrangle{d} \,\R\, Q\lrangle{d}$ for every $d$.
\item If $P$ is a name-abstraction $\lrangle{y}A$, then $Q$ is a name-abstraction $\lrangle{y}B$, and $A \,\SCongru\R\SCongru\, B$.

%\item[---]

\item If $P \st{\overline{a}(A)} P'$, then $Q\st{\overline{a}(B)} Q'$ with $A \,\SCongru\R\SCongru\, B$ and $P' \,\SCongru\R\SCongru\, Q'$. 
\item If $P \st{a(X)} P'$, then $Q \st{a(X)} Q'$ and $P' \,\SCongru\R\SCongru\, Q'$. 

\item If $P \SCongru X \para P'$, then $Q \SCongru X \para Q'$ and $P' \,\SCongru\R\SCongru\, Q'$. 
% \item If $P \SE X\lrangle{A} \para P'$, then $Q \SE X\lrangle{A} \para Q'$ and $P' \,\R\, Q'$. \\
%   \xx{Is it NECESSAREY to use the following framed variant?? \rc{MAYBE YES; will see!}}
%   \xxy{ \item[] \fbox{If $P \SE X\lrangle{A} \para P'$, then $Q \SE X\lrangle{B} \para Q'$, and $A \,\R\, B$ and $P' \,\R\, Q'$.}}

\item If $P \SCongru X\lrangle{{A}} \para P'$, then $Q \SCongru X\lrangle{{B}} \para Q'$, and $A \,\SCongru\R\SCongru\, B$ and $P' \,\SCongru\R\SCongru\, Q'$. 
  \tdup{Is it NECESSAREY to use instead  the following framed variant?? \rc{?; will see!}}
\tdup{ \item[] \fbox{If $P \SCongru X\lrangle{\xxy{A}} \para P'$, then $Q \SCongru X\lrangle{\xxy{A}} \para Q'$ and $P' \,\SCongru\R\SCongru\, Q'$.}}

\item If $P \SCongru X\lrangle{d} \para P'$, then $Q \SCongru X\lrangle{d} \para Q'$ and $P' \,\SCongru\R\SCongru\, Q'$.  
  \tdup{Is it POSSIBLE to use instead  the following framed variant?? \rc{?;Probably NOT! NOT making much sense!}}
\tdup{ \item[] \fbox{If $P \SCongru X\lrangle{\xxy{d}} \para P'$, then $Q \SCongru X\lrangle{\xxy{e}} \para Q'$ and $P' \,\SCongru\R\SCongru\, Q'$.}}
\end{enumerate}
\end{definition}
\begin{lemma}\label{l:up-to_correctness}
If \R is a strong HO-IO bisimulation up-to $\SCongru$, then $\R \subseteq \SHOIOB$.
\end{lemma}
\begin{proof}
The proof employs a standard bisimulation-establishing procedure (as in \cite{Mil89, SW01a} for example) to show that the compositional relation $\SCongru\R\SCongru$ is a strong HO-IO bisimulation. 
\end{proof}
}%annotate-end

%——————————————
%
%\sepp\sepp

%We note that a context $C[\cdot]$ is an expression with some sub-expression replaced by the hole $[\cdot]$, and $C[A]$ means substituting the hole with $A$. 

%\paragraph{Congruence} 
\vspace*{.1cm}
\noindent\textbf{Congruence}~~ 
Through standard state-diagram-chasing argument, one can prove that $\SHOIOB$ is an equivalence relation. 
\annotate{We now show that $\SHOIOB$ is a congruence.}
It is also a congruence, as the follow-up lemma reveals. See \cite{San92, LPSS10a} for a reference of proof; 
\iftoggle{appendixON}{%
% use appendix
\rc{we also provide a proof in Appendix \ref{a:proofs_1}.}
} {%
% no appendix, use online
\xxyrmcolor{we also provide a proof in \cite{XZ21L}.}
}%toggle appendixON
.
\annotate{  
\begin{lemma}[Equivalence]
On \HOmp\ terms, $\SHOIOB$ is an equivalence relation.
\end{lemma}
\begin{proof}
The proof proceeds by a standard state-diagram-chasing argument, as in \cite{Mil89, SW01a}. 
\end{proof}
}%annotate-end

\begin{lemma}[Congruence]\label{l:congruence}
On \HOmp\ terms, $\SHOIOB$ is congruence. That is, suppose $P$ and $Q$ are \HOmp\  terms, then $P \,\SHOIOB\, Q$ implies: 
%\begin{enumerate}
%\item 
\noindent\textbf{(1)} $a(X).P \,\SHOIOB\, a(X).Q$;
%\item 
\noindent~\textbf{(2)} $\overline{a}(P) \,\SHOIOB\, \overline{a}(Q)$;
%\item 
\noindent~\textbf{(3)} $P\para R \,\SHOIOB\, Q\para R$;
%\item 
\noindent~\textbf{(4)} $\lrangle{X}P \,\SHOIOB\, \lrangle{X}Q$;
%\item 
\noindent~\textbf{(5)} $\lrangle{x}P \,\SHOIOB\, \lrangle{x}Q$; 
%\item 
\noindent~\textbf{(6)} $Y\lrangle{P} \,\SHOIOB\, Y\lrangle{Q}$.
%\item $A\lrangle{P} \,\SHOIOB\, A\lrangle{Q}$; % falls onto other cases?
%\item $P\lrangle{A} \,\SHOIOB\, Q\lrangle{A}$;
%\item $P\lrangle{d} \,\SHOIOB\, Q\lrangle{d}$;
%\end{enumerate}
\end{lemma}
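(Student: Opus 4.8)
My plan is to prove each of the closure properties (1)--(6) separately, in every case exhibiting a candidate relation $\R$ that closes $\SHOIOB$ under the single operator at hand and then checking that $\R$ is a strong HO-IO bisimulation \emph{up-to} $\SCongru$; since, as recorded just above, any strong HO-IO bisimulation up-to $\SCongru$ is contained in $\SHOIOB$, this yields the desired inclusion. Working up-to $\SCongru$ will be important throughout, because both the matching of abstractions and the structural clauses (6)--(8) of Definition \ref{d:shoio_bisi} are stated modulo $\SE$, so the residuals produced by the matching will generally be only $\SCongru$-equal to the terms I actually want to relate.

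The guarding and abstraction contexts are light. For the input prefix I take $\R = \{(a(X).P,\,a(X).Q) : P \SHOIOB Q\} \cup \SHOIOB$: the only transition $a(X).P \st{a(X)} P$ is matched by $a(X).Q \st{a(X)} Q$ with $P \,\SHOIOB\, Q \in \R$ (clause (5)); $a(X).P$ is a non-abstraction matched by the non-abstraction $a(X).Q$ (clause (1)); and since a prefixed term has a single prime parallel factor that is not a variable, it is never $\SE$ to a parallel form exposing a top-level (applied) free variable, so clauses (6)--(8) hold vacuously. The output $\overline{a}(P)$ is handled identically, its sole transition $\overline{a}(P) \st{\overline{a}(P)} 0$ being matched with emitted object $P \,\SHOIOB\, Q$ and residual $0 \,\SHOIOB\, 0$ (clause (4)). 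The abstractions $\lrangle{X}P$ and $\lrangle{x}P$ are matched by $\lrangle{X}Q$ and $\lrangle{x}Q$ through clauses (2) and (3), which reduce directly to the hypothesis $P \,\SHOIOB\, Q$, with no transition or decomposition clause applicable. For $Y\lrangle{P}$ the free head variable $Y$ blocks every transition, so only clause (7) is relevant: the sole decomposition $Y\lrangle{P} \SE Y\lrangle{A}\para 0$ with $A \SE P$ is matched by $Y\lrangle{Q} \SE Y\lrangle{Q}\para 0$, and $A \,\SCongru\SHOIOB\SCongru\, Q$ together with $0 \,\SHOIOB\, 0$ closes it up-to $\SCongru$.

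The substantial case is parallel composition, clause (3), for which I use $\R = \{(P\para R,\,Q\para R) : P \SHOIOB Q,\ R \mbox{ arbitrary}\} \cup \SHOIOB$; this set is already closed under all the residuals generated below. The decisive simplification is that Definition \ref{d:shoio_bisi} has \emph{no} clause for $\tau$-transitions, so I never need to analyse communications between the component being varied and $R$, and only track input/output transitions together with the structural clauses. An I/O transition of $P\para R$ originates in exactly one component: if it comes from $P$, say $P \st{\overline{a}A} P'$, then $P \SHOIOB Q$ supplies $Q \st{\overline{a}B} Q'$ with $A \,\SHOIOB\, B$ and $P' \,\SHOIOB\, Q'$, whence $Q\para R \st{\overline{a}B} Q'\para R$ with $A \,\R\, B$ and $(P'\para R)\,\R\,(Q'\para R)$; if it comes from $R$, then $Q\para R$ fires the same transition with the same object and residual $Q\para R'$, and $(P\para R')\,\R\,(Q\para R')$. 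Inputs are handled the same way, and more simply since the bisimilarity is late. For the abstraction clause (1) I use that, by Lemma \ref{l:depth_struc_sim}, depth is $\SE$-invariant and a term of depth $0$ is $\SE 0$, so $P\not\SE 0$ forces $Q\not\SE 0$; combined with well-formedness forbidding dangling abstractions, this makes $P\para R$ a non-abstraction exactly when $Q\para R$ is, while the degenerate cases in which a component is $\SE 0$ collapse to that component, where clauses (2)/(3) apply through $P \SHOIOB Q$.

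The structural clauses (6)--(8) for parallel are the crux. Viewing $P\para R$ modulo $\SCongru$ as a multiset of prime parallel factors, any top-level free variable $X'$ (respectively applied variable $X'\lrangle{A}$ or $X'\lrangle{d}$) is one such factor and must be contributed by $P$ or by $R$. When it comes from $P$, I invoke the matching clause of $P \SHOIOB Q$ to produce the corresponding factor in $Q$ (with $A \,\SHOIOB\, B$ in the case of clause (7)) and recombine with $R$; when it comes from $R$, then $R$ matches itself and I recombine with $Q$. In both situations the residuals are again of the form $(\,\cdot\para R\,)$ and hence lie in $\R$ up-to $\SCongru$. I expect the main obstacle to be precisely this routing argument: one must argue carefully that the AC-decomposition of $P\para R$ into prime factors splits cleanly across $P$ and $R$ (no top-level variable is ``shared''), and that the rearrangements needed to bring $Q\para R$ into the prescribed $X'\para Q'$ or $X'\lrangle{B}\para Q'$ shape are all absorbed by the up-to $\SCongru$ closure. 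Once this bookkeeping is settled, every clause of Definition \ref{d:shoio_bisi} is discharged for each candidate relation, and the six congruence properties follow.
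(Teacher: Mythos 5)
Your proposal is correct and follows essentially the same route as the paper's proof (which, as indicated, is the standard construction from \cite{San92,LPSS10a}): for each operator you exhibit a candidate relation and verify it is a strong HO-IO bisimulation up-to $\SCongru$, exploiting the absence of a $\tau$-clause and the fact that $\SCongru$ preserves the multiset of prime parallel factors so that transitions and exposed (applied) variables of $P\para R$ route cleanly to one component. The only superfluous step is the depth-invariance detour for clause~(1) in the parallel case, since well-formedness already rules out abstractions as parallel components; everything else, including the bookkeeping you flag for clauses (6)--(8), goes through as you describe.
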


%\paragraph{Decidability} 
\vspace*{.1cm}
\noindent\textbf{Decidability}~~ 
We now establish the decidability of $\SHOIOB$.  
As a premise, we have the following structural property, whose proof is a simple induction over the semantic rules. 
\begin{lemma}\label{l:struc_prop1}
Suppose $P$ is a \HOmp\ term. Then: 
%\begin{enumerate}
%\item 
\noindent\textbf{(1)} If $P\st{\overline{a}(A)} P'$, then $P\SCongru \overline{a}(A) \para P'$.
%\item 
\noindent~\textbf{(2)} If $P\st{a(X)} P'$, then $P\SCongru a(X).P_1 \para P_2$ and $P'\SCongru P_1 \para P_2$.
%\end{enumerate}
\end{lemma}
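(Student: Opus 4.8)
The plan is to prove both parts by induction on the derivation of the transition, performing a case analysis on the last operational rule applied. In each part the communication rule is immediately irrelevant, since it can only produce a $\tau$-action, never an output or an input label; likewise the two abstraction/application manipulations live inside $\SCongru$ rather than in separate transition rules. Hence for each item only three rules can be the final step of the derivation: the corresponding axiom, the parallel-composition rule (together with its symmetric variant), and the structural-congruence rule. I will treat part (1) first and then part (2) in the same style.

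For part (1), the base case is the output axiom $\overline{a}A \st{\overline{a}(A)} 0$, where $P = \overline{a}A$ and $P' = 0$; here $\overline{a}(A) \para 0 \SCongru \overline{a}(A)$ by the unit law $R \para 0 \SCongru R$, which gives $P \SCongru \overline{a}(A) \para P'$. In the parallel case $P = P_1 \para P_2 \st{\overline{a}(A)} P_1' \para P_2$ derived from $P_1 \st{\overline{a}(A)} P_1'$, the induction hypothesis yields $P_1 \SCongru \overline{a}(A) \para P_1'$, whence by congruence of $\SCongru$ and associativity of $\para$ we obtain $P \SCongru (\overline{a}(A) \para P_1') \para P_2 \SCongru \overline{a}(A) \para (P_1' \para P_2) = \overline{a}(A) \para P'$; the symmetric variant is identical. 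In the structural-congruence case, $P \SCongru P_0 \st{\overline{a}(A)} P_0' \SCongru P'$, and applying the induction hypothesis to $P_0 \st{\overline{a}(A)} P_0'$ gives $P_0 \SCongru \overline{a}(A) \para P_0'$; transitivity of $\SCongru$ together with $P_0' \SCongru P'$ then delivers $P \SCongru \overline{a}(A) \para P'$.

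Part (2) is entirely analogous. The base case is the input axiom $a(X).P_1 \st{a(X)} P_1$ (note that this is late-style, so no instantiation occurs); taking $P_2 = 0$ gives both $a(X).P_1 \SCongru a(X).P_1 \para 0$ and $P_1 \SCongru P_1 \para 0$. In the parallel case, with $P_1 \st{a(X)} P_1'$ inside $P = P_1 \para Q$, the induction hypothesis supplies a decomposition $P_1 \SCongru a(X).R_1 \para R_2$ and $P_1' \SCongru R_1 \para R_2$; absorbing $Q$ into the second component by setting $P_2 := R_2 \para Q$ and reassociating yields $P \SCongru a(X).R_1 \para P_2$ and $P' \SCongru R_1 \para P_2$. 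The structural-congruence case is closed by transitivity of $\SCongru$ exactly as in part (1).

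There is no genuine obstacle here: the only point requiring care is the structural-congruence rule, where one must use that $\SCongru$ is a congruence (hence preserved by $\para$) and transitive, and keep in mind that the application laws are part of $\SCongru$, so any transition enabled after normalizing applications is reflected back to the original term through the $\SCongru$-closure of the semantics. Everything else is a routine rearrangement of parallel components using associativity, commutativity, and the unit law for $\para$.
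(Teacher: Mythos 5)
Your proof is correct and takes exactly the route the paper indicates, namely a straightforward induction over the derivation of the transition (the paper describes its own argument as ``a simple induction over the semantic rules''), with the right observations that the communication rule is excluded by the action label and that the structural-congruence rule is closed off by transitivity and congruence of $\SCongru$. The case analysis and the bookkeeping of parallel components via associativity, commutativity, and the unit law are all as expected; nothing is missing.
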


\begin{lemma}[Decidability]\label{l:decidability_hoio_bisi}
On \HOmp\ terms, $\SHOIOB$ is decidable.
\end{lemma}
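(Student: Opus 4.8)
The plan is to decide $P \SHOIOB Q$ by well-founded recursion on the depth $\pdepth{P}$, exploiting the fact that Definition~\ref{d:shoio_bisi} is a \emph{syntactic} bisimilarity whose every clause only ever re-relates terms of strictly smaller depth. The first step I would record is precisely this strict-decrease observation. For the output clause, Lemma~\ref{l:struc_prop1}(1) gives $P \SCongru \overline{a}(A) \para P'$, so by Definition~\ref{def:depth} together with Lemma~\ref{l:depth_struc_sim} we get $\pdepth{P} = \pdepth{A} + 1 + \pdepth{P'}$, whence both $\pdepth{A} < \pdepth{P}$ and $\pdepth{P'} < \pdepth{P}$; for the input clause Lemma~\ref{l:struc_prop1}(2) yields $\pdepth{P'} = \pdepth{P} - 1$; the abstraction clauses peel off one constructor, so $\pdepth{A} = \pdepth{P} - 1$; and the three variable-decomposition clauses $P \SCongru X \para P'$, $P \SCongru X\lrangle{A} \para P'$, $P \SCongru X\lrangle{d} \para P'$ satisfy $\pdepth{P} = \pdepth{P'} + 1$, $\pdepth{P} = \pdepth{A} + 1 + \pdepth{P'}$, and $\pdepth{P} = \pdepth{P'} + 1$ respectively, again forcing all re-related components below $\pdepth{P}$.

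Next I would establish the two effectivity ingredients. \textbf{(i)} Structural congruence $\SCongru$ is decidable on $\HOmp$ terms: under the standing normal-form convention that all applications are exhausted (finite order, no $\Omega$-like terms), the beta-like laws of $\SCongru$ terminate, and what remains is the associativity, commutativity and unit laws for $\para$, for which equality is decided by comparing the sorted multiset of top-level parallel components. \textbf{(ii)} For every $P$, the sets of transitions and decompositions quantified over in Definition~\ref{d:shoio_bisi} are finite and effectively computable up to $\SCongru$: the outputs $P \st{\overline{a}A} P'$, the inputs $P \st{a(X)} P'$, and the decompositions $P \SCongru X \para P'$, $P \SCongru X\lrangle{A} \para P'$, $P \SCongru X\lrangle{d} \para P'$ all arise from the finitely many top-level parallel components of the normal form of $P$ — this is exactly the content of Lemma~\ref{l:struc_prop1} and the shape of the transition rules. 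Hence each clause reduces to a finite number of ``$\exists$ a matching component of $Q$ such that the strictly smaller components are related'' tests.

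With these in place I would define a procedure $\mathrm{bisim}(P,Q)$ by recursion on $\pdepth{P}$. It first rejects when $\pdepth{P} \ne \pdepth{Q}$ — legitimate by Lemma~\ref{l:depth_struc_sim} — and otherwise checks the non-recursive type condition (clause~(1)) and then, for each of clauses~(2)--(8), enumerates the finitely many relevant transitions/decompositions of $P$ and of $Q$ and calls $\mathrm{bisim}$ on the resulting pairs, every one of which has both components of depth strictly below $\pdepth{P}$ by the first paragraph. Termination is immediate from the strict depth decrease; the base case is $\pdepth{P} = 0$, i.e.\ $P \SCongru 0$, where the only requirement is $Q \SCongru 0$. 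For correctness I would argue that, because the defining functional of Definition~\ref{d:shoio_bisi} inspects the candidate relation only on pairs of strictly smaller depth, it is a well-founded (contractive) functional and therefore has a \emph{unique} fixed point; this unique fixed point is in particular the greatest one, namely $\SHOIOB$, and it coincides with the relation computed by $\mathrm{bisim}$.

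The step I expect to be the main obstacle is ingredient~(ii): making the finite-branching-up-to-$\SCongru$ claim fully rigorous in the presence of parameterization, where I must show that normalizing applications creates no spuriously distinct transmitted terms and that the top-level decompositions $P \SCongru X\lrangle{A} \para P'$ range over only finitely many $(A,P')$ modulo $\SCongru$. A secondary delicate point is the uniqueness-of-fixed-point argument, which must be phrased so that the greatest fixed point (the definition of $\SHOIOB$) is provably the one produced by the depth-stratified recursion; here Lemma~\ref{l:depth_struc_sim} is essential, since it guarantees that stratifying by depth is sound and that the recursion never needs to compare terms of unequal depth.
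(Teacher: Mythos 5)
Your proposal is correct and follows essentially the same route as the paper: induction on $\pdepth{P}$, a finite check of each clause of Definition~\ref{d:shoio_bisi} using Lemma~\ref{l:struc_prop1} and the strict depth decrease of all re-related components. You additionally spell out the effectivity of $\SCongru$ and the unique-fixed-point justification for why the depth-stratified recursion computes the greatest bisimulation, which the paper leaves implicit.
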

\vspace*{-.4cm}
\begin{proof}[Proof of Lemma \ref{l:decidability_hoio_bisi}]
\iftoggle{mNotsON}{%
  % use m notes
} {%
  % no m notes
\TODOM{in steps...DN}
}%toggle m notes

We decide whether $P \SHOIOB Q$ by induction on $\pdepth{P}$. \annotate{Passing all checks would entail $P \SHOIOB Q$. } 

%\paragraph{Induction basis.}
\noindent\textbf{Induction basis.}~
In this case $\pdepth{P}$ is $0$ or $1$. The case $\pdepth{P}$ is $0$, i.e., $P$ is $0$, is trivial because no action is possible from $P$ and it has no free variables. 
%{\scriptsize There should be nothing visible in $Q$ (up-to $\SCongru$), actions or free variables. }
If $\pdepth{P}$ is $1$, i.e., $P$ is $X$ or $X\lrangle{d}$, then no action is possible from $P$. One simply checks that $Q$ is also $X$ or $X\lrangle{d}$ respectively. 

\annotate{
\begin{enumerate}
\item $\pdepth{P}$ is $0$, i.e., $P$ is $0$. This case is trivial because no action is possible from $P$ and it has no free variables. 
{\scriptsize There should be nothing visible in $Q$ (up-to $\SCongru$), actions or free variables. }
\item $\pdepth{P}$ is $1$, i.e., $P$ is $X$ or $X\lrangle{d}$. In this case, no action is possible from $P$. One simply checks that $Q$ is also $X$ or $X\lrangle{d}$ respectively. 
\end{enumerate}
}

%\paragraph{Induction step.}
\noindent\textbf{Induction step.}~
%\TODOM{FHFHFHFHFH...}
We perform a (finite) check of each clause of $\SHOIOB$ in an inductive way.

%\begin{enumerate}
%\item 
\caseHintRM{() (non-abstraction form) ~~}
\noindent\textbf{(1)} If $P$ is a non-abstraction, then check that $Q$ is also a non-abstraction. 

%\item 
\caseHintRM{() (process-abstraction form) ~~}
\noindent\textbf{(2)} If $P$ is a process-abstraction $\lrangle{X}A$, then check that $Q$ is also a process-abstraction $\lrangle{X}B$ (up-to $\alpha$-conversion), and continue with checking $A \,\SHOIOB\, B$ using induction hypothesis since the depth of $A$ decreases with respect to $P$, i.e., $\pdepth{A} < \pdepth{P}$. 

%\item 
\caseHintRM{() (name-abstraction form) ~~}
\noindent\textbf{(3)} If $P$ is a name-abstraction $\lrangle{x}A$, then check that $Q$ is also a name-abstraction $\lrangle{x}B$ (up-to $\alpha$-conversion), and continue with checking $A \,\SHOIOB\, B$ using induction hypothesis since the depth of $A$ decreases with respect to $P$, i.e., $\pdepth{A} < \pdepth{P}$.

%\item 
\caseHintRM{() (Input simulation 1) ~~}  
\noindent\textbf{(4)} If $P \st{a(X)} P'$, we check that $Q \st{a(X)} Q'$ and $P' \SHOIOB Q'$. There might be a few (but finite) possibilities concerning $Q'$. If all such checks fail, then we conclude that $P$ and $Q$ are not strong HO-IO bisimilar. 
For each possible check, we know by Lemma \ref{l:struc_prop1} that $P\SCongru a(X).P_1 \para P_2$ and $P'\SCongru P_1 \para P_2$. Since the depth of the terms decrease, i.e., $\pdepth{P'} < \pdepth{P}$, we use induction hypothesis to continue checking $P'\SHOIOB Q'$. 

%\TODOM{FHFHGHFHFHFHere......}

%\item 
\caseHintRM{() (Output simulation) ~~} 
\noindent\textbf{(5)} If $P \st{\overline{a}(A)} P'$, we check that $Q \st{\overline{a}(B)} Q'$ with $A \SHOIOB B$ and $P' \SHOIOB Q'$. There might be a few (but finite) possibilities concerning $B$ and $Q'$. If all such checks fail, then we conclude that $P$ and $Q$ are not strong HO-IO bisimilar. 
For each possible check, we know by Lemma \ref{l:struc_prop1} that $P\SCongru \overline{a}(A) \para P'$.
Since the depth of the terms decrease, i.e., $\pdepth{P'} < \pdepth{P}$ and $\pdepth{A} < \pdepth{P}$, we use induction hypothesis to continue checking $A\SHOIOB B$ and $P'\SHOIOB Q'$.

%\item 
\caseHintRM{() (Open-form simulation case 1) ~~} 
\noindent\textbf{(6)} If $P \SCongru X \para P'$, 
we check that $Q \SCongru X \para Q'$ and $P' \SHOIOB Q'$. There might be a few (but finite) possibilities concerning $Q'$. If all such checks fail, then we conclude that $P$ and $Q$ are not strong HO-IO bisimilar. 
For each possible check, since the depth of the terms decrease, i.e., $\pdepth{P'} < \pdepth{P}$, we use induction hypothesis to continue checking $P'\SHOIOB Q'$. 

%\item 
\caseHintRM{() (Open-form simulation case 2) (\rc{similar to 'Open-form simulation case 1' and 'Output simulation'}) ~~} 
\noindent\textbf{(7)} If $P \SCongru X\lrangle{A} \para P'$, 
we check that $Q \SCongru X\lrangle{B} \para Q'$ with $A \SHOIOB B$ and $P' \SHOIOB Q'$. There might be a few (but finite) possibilities concerning $B$ and $Q'$. If all such checks fail, then we conclude that $P$ and $Q$ are not strong HO-IO bisimilar. 
For each possible check, since the depth of the terms decrease, i.e., $\pdepth{P'} < \pdepth{P}$ and $\pdepth{A} < \pdepth{P}$, we use induction hypothesis to continue checking $A\SHOIOB B$ and $P'\SHOIOB Q'$.

%\item 
\caseHintRM{() (Open-form simulation case 3) (\rc{similar to 'Open-form simulation case 1'}) ~~} 
\noindent\textbf{(8)} If $P \SCongru X\lrangle{d} \para P'$, 
we check that $Q \SCongru X\lrangle{d} \para Q'$ and $P' \SHOIOB Q'$. There might be a few (but finite) possibilities concerning $Q'$. If all such checks fail, then we conclude that $P$ and $Q$ are not strong HO-IO bisimilar. 
For each possible check, since the depth of the terms decrease, i.e., $\pdepth{P'} < \pdepth{P}$, we use induction hypothesis to continue checking $P'\SHOIOB Q'$. 
%\end{enumerate}
%The proof is now completed. 
\end{proof}
\vspace*{-.1cm}

%\TODOM{FROM HERE...}

%\paragraph{Decidability transfer}
%\paragraph{Bisimilarity Preservation}
\noindent\textbf{Bisimilarity Preservation}~~ 
To connect the strong HO-IO bisimilarity with the strong context bisimilarity and other bisimilarities, we need some preparation. In what follows, we show that the strong HO-IO bisimilarity preserves substitutions and $\tau$ simulation, as stated in the following two lemmas in a sequel. 
\tdup{Do we need preserving-substitution of names? Seems no, because no name-passing!? But seems technically needed in showing the coincidence between the bisimilarities!? TO SEE if technicality passes!}

%Anyhow, we can have the following property. 

\begin{lemma} [Name-substitution-preserving]\label{l:name-sub-preserving}
Assume $P \SHOIOB Q$. Then $P\fosub{g}{m} \SHOIOB Q\hosub{g}{m}$ for all $g,m$.
\end{lemma}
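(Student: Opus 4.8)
The plan is to build a witnessing relation and show it is a strong HO-IO bisimulation up to $\SCongru$. Writing $\sigma$ for the name substitution $\fosub{g}{m}$ (which is what appears on both sides of the statement), I set $\R \DEF \{\,(P\sigma,\,Q\sigma)\mid P\SHOIOB Q\,\}$. Since $\SHOIOB$ is symmetric, $\R$ is symmetric. By the soundness of the up-to-$\SCongru$ technique recalled above, it is enough to verify that $\R$ meets the eight clauses of Definition \ref{d:shoio_bisi} with $\R$ everywhere replaced by $\SCongru\R\SCongru$; this yields $\R\subseteq\SHOIOB$ and hence the claim. Throughout I use the standard freshness convention, choosing bound names apart from $\{g,m\}$ so that $\sigma$ induces no capture.

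The verification rests on two elementary properties of name substitution. First, $\sigma$ preserves the top-level shape of a term: it maps non-abstractions to non-abstractions, process-abstractions $\lrangle{Y}A$ to $\lrangle{Y}(A\sigma)$, and name-abstractions $\lrangle{y}A$ to $\lrangle{y}(A\sigma)$, since it acts homomorphically and never touches process variables. Second, $\sigma$ commutes with $\SCongru$ and fixes process variables; consequently it both preserves and reflects the head decompositions used in clauses (6)--(8). Concretely, using the reading of $\SCongru$ on application-normalized terms as equality of the multiset of parallel prime components, from $P\sigma \SCongru X\para P_1$ one recovers $P \SCongru X\para P'$ with $P_1 \SCongru P'\sigma$ (the component $X$ must already be present in $P$, as $\sigma$ leaves process variables unchanged; cancelling the shared $X$ gives $P_1\SCongru P'\sigma$), and similarly $P\sigma \SCongru X\lrangle{A'}\para P_1$ forces $P\SCongru X\lrangle{A}\para P'$ with $A'\SCongru A\sigma$ and $P_1\SCongru P'\sigma$, while $P\sigma\SCongru X\lrangle{d}\para P_1$ forces $P\SCongru X\lrangle{e}\para P'$ with $e\sigma = d$ and $P_1\SCongru P'\sigma$.

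With these in hand the eight clauses are routine. Clauses (1)--(3) follow from shape preservation together with $P\SHOIOB Q$ and the definition of $\R$ (e.g. if $P\sigma$ is a process-abstraction then so is $P$, whence $Q=\lrangle{Y}B$ with the bodies related by $\SHOIOB$, so their $\sigma$-instances lie in $\R$). For the action clauses (4) and (5): given a transition of $P\sigma$, Lemma \ref{l:lts_prop1}(7) (output and input are not $\tau$) pulls it back to a transition $P\st{\lambda}P'$ whose label becomes the given one under $\sigma$, with $P_1\SCongru P'\sigma$; matching this against $P\SHOIOB Q$ yields a move $Q\st{\lambda}Q'$, which Lemma \ref{l:lts_prop1}(6) pushes forward to $Q\sigma\st{\lambda\sigma}Q'\sigma$, the carried term and residual being related in $\SCongru\R\SCongru$ because $P'\SHOIOB Q'$ (and, for output, $A\SHOIOB B$). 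For the open-form clauses (6)--(8), the reflection above pulls the head decomposition of $P\sigma$ back to $P$, the corresponding clause of $\SHOIOB$ supplies the matching decomposition of $Q$, and applying $\sigma$ produces the matching decomposition of $Q\sigma$ with residuals related via $\R$ up to $\SCongru$. Note that $\SHOIOB$ imposes no $\tau$-clause, so Lemma \ref{l:lts_prop1}(8) is never needed.

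The main obstacle is precisely the structural bookkeeping in clauses (6)--(8): I must argue that $\sigma$ not only transports a top-level process-variable component forward but also reflects one backward, and that the shared component can be cancelled to recover $P_1\SCongru P'\sigma$. This is exactly where the multiset-of-primes reading of $\SCongru$ (equivalently, cancellation and unique parallel decomposition for the monoid of $\para$ after exhausting applications) is used; everything else is a mechanical transcription of the corresponding clause of $\SHOIOB$ through $\sigma$, supported by Lemma \ref{l:lts_prop1}(6),(7).
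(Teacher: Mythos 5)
Your proposal is correct and takes essentially the same route as the paper: close $\SHOIOB$ under the name substitution, then verify the clauses of Definition~\ref{d:shoio_bisi} (up to $\SCongru$) using the transfer properties of Lemma~\ref{l:lts_prop1}(6)--(7) for the action clauses and the fact that a name substitution preserves and reflects top-level shape and parallel decomposition for the abstraction and open-form clauses. Your observation that the $\tau$-case of Lemma~\ref{l:lts_prop1}(8) is not needed (since $\SHOIOB$ has no $\tau$-clause) is also accurate.
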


%\sepp

{We note that to keep well-formed, substitutions should not (and are always assumed not to) break the legality of the terms under operation.} 
The following lemma states that \SHOIOB\ is invariant with respect to process substitution.
As mentioned, the proof of this lemma (i.e., the process substitution preserving property) is entirely different from the counterpart in \cite{LPSS10a}, in that we are obliged to conduct an induction on the sizes of the terms because a term in the position of application, say the term $A$ in $X\lrangle{A}$, may introduce extra structures. % so that the original proof method of \cite{LPSS10a} no longer works. 
\xxyrmcolor{
 That is, the proof of the preservation of process substitution becomes much more involved due to the process parameterization. In the current setting, a process variable can be instantiated by a process abstraction which is in turn fed with a process from the context. This would give rise to certain circular arguments, so the original proof method of \cite{LPSS10a} no longer works. To work around this difficulty, one has to use induction based approach. This approach somewhat reminds one of the difficulty in proving the congruence properties for higher-order processes.
}
\begin{lemma} [Process-substitution-preserving] \label{l:process-sub-preserving}
Let $P \SHOIOB Q$. Then $P\hosub{R}{X} \SHOIOB Q\hosub{R}{X}$ for all $R,X$.
\end{lemma}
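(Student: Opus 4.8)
The plan is to exhibit a relation containing all pairs $(P\hosub{R}{X},Q\hosub{R}{X})$ with $P\SHOIOB Q$, show it is a strong HO-IO bisimulation up to $\SCongru$, and discharge the clauses that the problematic application case creates by a well-founded induction. Working up to $\SCongru$ is essential, since substituting $R$ for $X$ and then executing the applications of Definition \ref{def:depth} rearranges a term only up to structural congruence; the soundness of the up-to-$\SCongru$ principle was recorded just above (it gives $\R\subseteq\SHOIOB$). The two structural tools are Lemma \ref{l:lts_prop2}, which writes $P$ (up to $\SCongru$) as a part $P'$ in which $X$ is guarded together with $k$ exposed occurrences of $X$, of one of the three shapes $X$, $X\lrangle{A_i}$, or $X\lrangle{m_i}$; and Lemma \ref{l:lts_prop1}, which transports transitions across $\hosub{R}{X}$ (item (1) forwards, item (2) backwards when $X$ is guarded), together with Lemma \ref{l:name-sub-preserving} for the name-application shape. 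Lemma \ref{l:depth_struc_sim} guarantees that $\pdepth{\cdot}$ is invariant under both $\SCongru$ and $\SHOIOB$, so it is a legitimate induction measure on the objects we manipulate.

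I would first dispose of the behaviour originating in the guarded part $P'$. Since $X$ is guarded there, Lemma \ref{l:lts_prop1}(2) turns any move of $P'\hosub{R}{X}$ into a move $P'\st{\lambda}P''$ with the residual $\SCongru\, P''\hosub{R}{X}$; the matching $Q$-move exists because $P\SHOIOB Q$, and Lemma \ref{l:lts_prop1}(1) pushes it back through $\hosub{R}{X}$. In the output clause the emitted object $A$ and the continuation $P''$ both have strictly smaller depth than $P$ (Definition \ref{def:depth} and Lemma \ref{l:struc_prop1}), so the induction hypothesis supplies $A\hosub{R}{X}\SHOIOB B\hosub{R}{X}$ and $P''\hosub{R}{X}\SHOIOB Q''\hosub{R}{X}$; input and the plain open-form clause (6) are analogous. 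For an exposed plain $X$ the copies of $R$ are common to both sides: I would peel off all $k$ occurrences of $X$ simultaneously (iterating clause (6) of Definition \ref{d:shoio_bisi}) to obtain $P'\SHOIOB Q'$ on the guarded remainders, which have strictly smaller depth, apply the induction hypothesis, and re-attach the identically evolved copies of $R$ using the parallel-composition case of the congruence Lemma \ref{l:congruence}.

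The real difficulty — and the point at which the argument parts company with \cite{LPSS10a} — is the process-application clause (7) when $X$ is a process-abstraction and $R=\lrangle{Y}R_1$. There the exposed $X\lrangle{A_i}$ becomes, after substitution and one structural reduction, the term $R_1\hosub{A_i\hosub{R}{X}}{Y}$, which injects the fresh body $R_1$ into the process and can fire moves belonging neither to $P'$ nor to a single copy of $R$. Matching such a move forces a second, dual substitution principle: that substituting the \emph{related} arguments $A_i\hosub{R}{X}$ and $B_i\hosub{R}{X}$ into the \emph{same} body $R_1$ preserves $\SHOIOB$. These two principles are mutually recursive through the $\beta$-like reduction of $R\lrangle{A}$ — substituting a fixed abstraction feeds into substituting bisimilar arguments into its body, and vice versa — so neither can be established in isolation. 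I would therefore generalise the target to a single statement about substituting pointwise bisimilar substitutions into bisimilar terms and prove that by induction, using $A_i\SHOIOB B_i$ with $\pdepth{A_i}<\pdepth{P}$ to obtain $A_i\hosub{R}{X}\SHOIOB B_i\hosub{R}{X}$, and the fact that $R_1$ is a strict subterm of $R$ to treat the substitution into the body. The name-application clause (8) is the easy analogue: $R\lrangle{d}$ reduces to $R_1\hosub{d}{y}$ and is handled through Lemma \ref{l:name-sub-preserving}.

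The main obstacle is thus to choose a well-founded measure along which both recursive appeals strictly decrease at once. Depth alone is delicate because a discarding abstraction such as $\lrangle{Y}0$ can make a substituted argument larger than the term it sits in, whereas $\beta$-reduction leaves $\pdepth{\cdot}$ unchanged by design (Definition \ref{def:depth}); the argument appeals decrease the depth of the \emph{base} term, while the body appeals decrease the structural size of the abstraction $R$. I expect the clean formulation to be a lexicographic induction on the depth of the base term together with the size of the substituted abstractions, the finiteness of the latter being exactly the normalisation (finite-order) assumption already imposed on \HOmp, which bounds the nesting of redex firings produced by substituting abstractions and so makes the recursion well founded.
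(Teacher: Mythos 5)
Your overall strategy is the one the paper itself describes: it likewise abandons the direct method of \cite{LPSS10a} and resorts to an induction on term sizes precisely because the argument $A$ in $X\lrangle{A}$ ``may introduce extra structures'', and you have correctly isolated the crux --- the mutual recursion between substituting the single abstraction $R$ into the two bisimilar terms and substituting the two bisimilar arguments $A_i\hosub{R}{X}$, $B_i\hosub{R}{X}$ into the single body $R_1$ --- together with the right repair, namely generalising to pointwise bisimilar simultaneous substitutions. Your use of Lemmas \ref{l:lts_prop1} and \ref{l:lts_prop2} to separate the guarded part from the exposed occurrences of $X$, of Lemma \ref{l:name-sub-preserving} for the shape $X\lrangle{d}$, and of the up-to-$\SCongru$ principle to absorb the $\beta$-rearrangements, is all as intended.

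The gap is in the induction measure, which is exactly where this argument lives or dies. Consider the body appeal generated by an exposed $X\lrangle{A_i}$ with $R=\lrangle{Y}R_1$: you must establish $R_1\hosub{A_i\hosub{R}{X}}{Y}\,\SHOIOB\, R_1\hosub{B_i\hosub{R}{X}}{Y}$. As an instance of your generalised statement, the base term here is $R_1$, whose depth bears no relation to $\pdepth{P}$, and the substituted terms are $A_i\hosub{R}{X}$ and $B_i\hosub{R}{X}$, which need not be smaller than $R$ in any syntactic sense (they may contain copies of $R$). Hence neither component of your proposed pair --- depth of the base term, size of the substituted abstractions --- decreases, in either lexicographic order, and the recursion is not well founded as stated. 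What does decrease is the type-theoretic \emph{order} of the variable being instantiated: $Y$ is the parameter of the abstraction type carried by $X$, so its order is strictly below that of $X$, and the finite-order discipline imposed on \HOmp\ is precisely the guarantee that this quantity cannot decrease forever. Your closing remark about normalisation points at this, but order is not size, and the distinction matters because the syntactic size of the arguments grows along the recursion. The measure should be lexicographic with the (maximal) order of the substituted variables as the \emph{primary} component --- unchanged in the clauses that only shrink the base (where Lemmas \ref{l:struc_prop1} and \ref{l:depth_struc_sim} give $\pdepth{A_i},\pdepth{P''}<\pdepth{P}$), strictly smaller in the body appeal --- and the depth of the base term as the secondary one. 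With that correction the scheme you describe goes through.
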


\iftoggle{appendixON}{%
% use appendix
\rc{The proofs of Lemmas \ref{l:name-sub-preserving} and \ref{l:process-sub-preserving} are put in Appendix \ref{a:proofs_1}.}
} {%
% no appendix, use online
\xxyrmcolor{The proofs of Lemmas \ref{l:name-sub-preserving} and \ref{l:process-sub-preserving} are referred to \cite{XZ21L}.}
}%toggle appendixON
An observation as a corollary from these two lemmas 
%Lemmas \ref{l:name-sub-preserving} and \ref{l:process-sub-preserving} 
is that $\SHOIOB$ is closed under abstraction, both name abstraction and process-abstraction.
\begin{corollary}\label{cor:shoiob-closed-abstr}
Assume $P \SHOIOB Q$. It holds that $\lrangle{X}P \,\SHOIOB\, \lrangle{X}Q$ and $\lrangle{x}P \,\SHOIOB\, \lrangle{x}Q$.
\end{corollary}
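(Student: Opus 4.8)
The plan is to prove the corollary by exhibiting a single strong HO-IO bisimulation that contains both pairs, reading the required conditions straight off Definition~\ref{d:shoio_bisi}. Concretely, assuming $P \SHOIOB Q$, I would consider the relation
\[
\R = \SHOIOB \,\cup\, \{\, (\lrangle{X}P,\lrangle{X}Q) : P \SHOIOB Q \,\} \,\cup\, \{\, (\lrangle{x}P,\lrangle{x}Q) : P \SHOIOB Q \,\},
\]
which is symmetric because $\SHOIOB$ is, and then check that $\R$ meets every clause of Definition~\ref{d:shoio_bisi}. Pairs already in $\SHOIOB$ satisfy the clauses by definition, so all the work is confined to the freshly added abstraction pairs.

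For a pair $(\lrangle{X}P,\lrangle{X}Q)$ the term $\lrangle{X}P$ is a process-abstraction, so clause~(1) is vacuous (its hypothesis ``$\lrangle{X}P$ is a non-abstraction'' is false), clause~(3) does not apply, and clause~(2) demands exactly that $\lrangle{X}Q$ be a process-abstraction with the same binder whose body is $\R$-related to $P$. This holds, since that body is $Q$ and $P \SHOIOB Q \subseteq \R$. The symmetric reasoning via clause~(3) handles the name-abstraction pairs $(\lrangle{x}P,\lrangle{x}Q)$. The crucial point that disposes of the transition and open-form clauses~(4)--(8) is that a bare abstraction is \emph{inert}: under the operational semantics no rule grants a transition to $\lrangle{X}P$ or $\lrangle{x}P$ (abstractions compute only once applied, and application is folded into $\SCongru$), and no abstraction is structurally congruent to any of the open forms $X \para P'$, $X\lrangle{A} \para P'$, or $X\lrangle{d} \para P'$, because $\SCongru$ cannot expose a free variable at the top level of a term headed by an abstraction. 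I would establish this by a short, exhaustive case analysis on the transition rules and on $\SCongru$; making this inertness claim fully rigorous, rather than merely plausible, is the only step I expect to require care, and is therefore the main (mild) obstacle. Once it is settled, $\R$ is a strong HO-IO bisimulation and hence $\R \subseteq \SHOIOB$, giving both conclusions.

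Finally, I would record the complementary, more semantic reading that justifies phrasing this as a corollary of the substitution lemmas. Lemmas~\ref{l:name-sub-preserving} and~\ref{l:process-sub-preserving} guarantee that $P \SHOIOB Q$ is preserved under every process instantiation $\hosub{R}{X}$ and every name instantiation $\fosub{g}{m}$, so the two abstractions agree not only structurally but also under arbitrary arguments. This is precisely the content needed when these abstractions are later compared through application (as in the context and normal bisimilarities), so the closure under the two abstraction constructors can equally be read off directly from the substitution-preserving lemmas; it is also consistent with clauses~(4) and~(5) of the Congruence Lemma~\ref{l:congruence}, which serves as a sanity check on the statement.
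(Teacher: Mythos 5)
Your proof is correct, but it leads with a different derivation than the paper. The paper positions this corollary as a consequence of the two substitution-preserving lemmas (Lemmas~\ref{l:name-sub-preserving} and~\ref{l:process-sub-preserving}); your primary argument instead builds the bisimulation candidate directly and discharges it using only clauses~(2) and~(3) of Definition~\ref{d:shoio_bisi} plus the inertness of abstractions. Your route is the more elementary one: since \SHOIOB{} compares abstraction bodies \emph{syntactically} (clause~(2) asks only that the bodies be related, not that all instantiations be), the hypothesis $P \SHOIOB Q$ is literally the condition the definition demands, and the substitution lemmas are not logically needed. What the paper's reading buys is the semantically meaningful content — that the two abstractions stay bisimilar under every argument — which is exactly what gets used later when \SHOIOB{} is related to \SHOB{} and \SCTXB{}, whose abstraction clauses do quantify over arguments; you capture this in your closing paragraph, so both readings are covered. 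One refinement on the step you rightly flag as delicate: it is not quite true that no abstraction is structurally congruent to a parallel composition, since $\lrangle{X}P \SCongru (\lrangle{X}P) \para 0$ by the unit law; the claim you actually need (and which does hold, by induction on the derivation of $\SCongru$) is that no term in the $\SCongru$-class of an abstraction exposes a top-level input or output prefix, a bare variable, or a variable application in parallel position, so that none of the premises of clauses~(4)--(8) can fire. With that phrasing the case analysis goes through and your relation $\R$ is a strong HO-IO bisimulation.
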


%\sepp

The following lemma is important and directly attributed to the open-style nature of \SHOIOB. %strong HO-IO bisimulation. 
\begin{lemma}[$\tau$-preserving]\label{l:tau-preserving}
Assume $P \SHOIOB Q$. If $P\st{\tau} P'$, then $Q\st{\tau} Q'$ and $P' \SHOIOB Q'$.
\end{lemma}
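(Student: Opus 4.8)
The plan is to exploit the fact that $\SHOIOB$ has no dedicated $\tau$-clause: a $\tau$-step is an internal communication, so I would recover the matching $\tau$ from $Q$ by separately matching the output and the input that together produce it, and then recombining them. Concretely, since $P\st{\tau}P'$ can only arise from the communication rule (possibly after an $\SCongru$-rewrite), up to $\SCongru$ we may write $P \SCongru \overline{a}A \para a(X).R_1 \para R_2$ with $P' \SCongru R_1\hosub{A}{X}\para R_2$, where $X\notin\fpv{R_2}$ after $\alpha$-conversion. This inversion of the semantics, together with Lemma \ref{l:struc_prop1}, is the first step.

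The second step matches the two halves in $Q$. From $P\st{\overline{a}A} (a(X).R_1\para R_2)$ and clause (4) of Definition \ref{d:shoio_bisi}, I get $Q\st{\overline{a}B} Q_1$ with $A\SHOIOB B$ and $a(X).R_1\para R_2 \SHOIOB Q_1$; by Lemma \ref{l:struc_prop1}(1), $Q\SCongru \overline{a}B\para Q_1$. Next, since $a(X).R_1\para R_2 \st{a(X)} R_1\para R_2$, clause (5) applied to $a(X).R_1\para R_2 \SHOIOB Q_1$ yields $Q_1\st{a(X)} Q_2$ with $R_1\para R_2 \SHOIOB Q_2$; by Lemma \ref{l:struc_prop1}(2), $Q_1 \SCongru a(X).S_1\para S_2$ and $Q_2 \SCongru S_1\para S_2$. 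Recombining, $Q\SCongru \overline{a}B\para a(X).S_1\para S_2 \st{\tau} S_1\hosub{B}{X}\para S_2 =: Q'$, so indeed $Q\st{\tau}Q'$; the channels automatically agree because the input is matched inside the very residual of the matched output.

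The final and hardest step is to show $P'\SHOIOB Q'$, i.e. $R_1\hosub{A}{X}\para R_2 \SHOIOB S_1\hosub{B}{X}\para S_2$, from the two independently obtained facts $A\SHOIOB B$ and $R_1\para R_2 \SHOIOB S_1\para S_2$. Because the input clause is late, the residual keeps $X$ free and uninstantiated, so I can instantiate it freely. I would first apply the process-substitution-preserving Lemma \ref{l:process-sub-preserving} with the message $A$ to obtain $(R_1\para R_2)\hosub{A}{X} \SHOIOB (S_1\para S_2)\hosub{A}{X}$, i.e. $R_1\hosub{A}{X}\para R_2 \SHOIOB S_1\hosub{A}{X}\para S_2$. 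It then remains to bridge the two messages, namely $S_1\hosub{A}{X}\para S_2 \SHOIOB S_1\hosub{B}{X}\para S_2$, and conclude by transitivity of the equivalence $\SHOIOB$.

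This last bridge is the substitutivity of bisimilar messages into a common context: $A\SHOIOB B$ should imply $C\hosub{A}{X}\SHOIOB C\hosub{B}{X}$ for $C = S_1$ (and then closing under $\para S_2$ by the congruence Lemma \ref{l:congruence}). I would prove it by structural induction on $C$, using the single-operator cases of Lemma \ref{l:congruence} for prefixes, parallel composition, and abstraction. The main obstacle, and precisely where parameterization bites, is the application case $C\SCongru X\lrangle{D}$: under $\hosub{A}{X}$ it becomes $A\lrangle{D\hosub{A}{X}}$, which is a redex whenever $A$ is an abstraction, so one cannot simply appeal to clause (6) of Lemma \ref{l:congruence} (which varies only the argument of a \emph{fixed variable} head). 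Resolving this needs one to normalize the application via $\SCongru$ and relate $A\lrangle{\cdot}$ to $B\lrangle{\cdot}$ through $A\SHOIOB B$, which is exactly the difficulty that made Lemma \ref{l:process-sub-preserving} delicate; I would therefore package the bridge as a strong HO-IO bisimulation up-to $\SCongru$ on $\{(C\hosub{A}{X},\,C\hosub{B}{X}) : A\SHOIOB B\}$, reusing the machinery behind Lemma \ref{l:process-sub-preserving}.
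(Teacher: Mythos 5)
Your proposal is correct and follows essentially the same route as the paper's proof: invert the $\tau$ (up to $\SCongru$) into an output followed by an input, match each via clauses (4) and (5) of Definition \ref{d:shoio_bisi}, recombine into a $\tau$ on the $Q$ side, and close using Lemma \ref{l:process-sub-preserving} together with a congruence argument relating the two bisimilar messages. Your explicit treatment of the ``bridge'' $A\SHOIOB B \Rightarrow C\hosub{A}{X}\SHOIOB C\hosub{B}{X}$ is simply a more careful spelling-out of the step the paper dispatches with ``by the congruence properties'', and your flattening of the paper's induction on the derivation into a single $\SCongru$-inversion is an inessential presentational difference.
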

\vspace*{-.4cm}
\begin{proof}[Proof of Lemma \ref{l:tau-preserving}]

\iftoggle{mNotsON}{%
  % use m notes
} {%
  % no m notes
\TODOM{in steps...DN}
\TODOM{...\rc{USE Lemma \ref{l:process-sub-preserving}... } DN}
}%toggle m notes

%...
The proof is by induction on the derivation of $P\st{\tau} P'$. 
\annotate{There are several cases based on the LTS.} 

%\begin{enumerate}
%\item 
\noindent\textbf{(1)} $P\st{\tau} P'$ comes from the interaction of components of $P$. 
\annotate{
That is, 
\[
\begin{array}{l}
P\st{\overline{a}(A)} \cdot \\
P\st{a(X)} \cdot \\
P\st{\tau} P'
\end{array}
\] 
}%annotate-end
That is, $P\st{\overline{a}(A)} \cdot$, $P\st{a(X)} \cdot$, and $P\st{\tau} P'$. 
One can assume $X$ to be fresh as it is bound. So this  can be rewritten as 
\annotate{\[
\begin{array}{l}
P \st{\overline{a}(A)} \cdot \st{a(X)}   P_1 \\
\mbox{ where } P_1\hosub{A}{X}\SCongru  P'
\end{array}
\]
}%annotate-end
$P \st{\overline{a}(A)} \cdot \st{a(X)} P_1 \mbox{ where } P_1\hosub{A}{X}\SCongru P'$.
Because $P \SHOIOB Q$, $Q$ can simulate by
\annotate{\[
\begin{array}{l}
Q \st{\overline{a}(B)} \cdot \st{a(X)}  Q_1 \,\SHOIOB\, P_1 \\
\mbox{ where } ~ A \,\SHOIOB\, B \\
\mbox{ and } ~ Q_1\hosub{B}{X}\DEF  Q'
\end{array}
\]
}%annotate-end
$Q \st{\overline{a}(B)} \cdot \st{a(X)}  Q_1 \,\SHOIOB\, P_1, 
\mbox{ where } ~ A \,\SHOIOB\, B, 
\mbox{ and } ~ Q_1\hosub{B}{X}\DEF  Q'.
$
Now since the higher-order output is non-blocking, % (i.e., the calculus is asynchronous), 
the two consecutive actions can contribute to forming a $\tau$ action, i.e., 
\annotate{\[
\begin{array}{l}
Q\st{\tau} Q'
\end{array}
\] 
}%annotate-end
$Q\st{\tau} Q'$, 
and we are left with showing $P' \SHOIOB Q'$. From $P_1 \,\SHOIOB\, Q_1$, Lemma \ref{l:process-sub-preserving}, we know 
$
P' \SCongru P_1\hosub{A}{X}  \,\SHOIOB\, Q_1\hosub{A}{X}\DEF  Q''.
$ 
By the congruence properties, we can derive due to $A \,\SHOIOB\, B$ that
$
Q'' \SCongru Q_1\hosub{A}{X} \,\SHOIOB\, Q_1\hosub{B}{X} \SCongru Q'.
$ Hence we conclude $P' \SHOIOB Q'$. 
%......
%\TODOM{FHFHFHFHFH...}
%\begin{enumerate}
%\item[OUTLINE]
%\item $P$'s $\tau$ with instantiation;
%\item $P$'s decomposed actions (output and input) without instantiation;
%\item $Q$'s matching decomposed actions (output and input) without instantiation;
%\item $Q$'s matching $\tau$ with instantiation, with the help of Lemma \ref{l:process-sub-preserving};
%
%\end{enumerate}

%\item 
\noindent\textbf{(2)} $P\st{\tau} P'$ comes from a component of $P$ alone. That is, $P\SCongru P_1\para P_2$, $P_1\st{\tau} P_1'$, and $P'\SCongru P_1'\para P_2$. Then we conclude by induction hypothesis. 
%\item 
%\end{enumerate}
%\TODOM{FHFHFHFHFH...}
%...
%Proof. If rule Tau1 was used, then we can decompose P’stransition into an output P aR
%??? P1 followed by an input P1 ?a?(?x) P2  P3, with P = P2{R/x}  P3 (that is, the structure ofaRP is P ? aR  a(x). P2  P3). By definition of ?o IO, Q is capable of matching these two transitions, and the final derivativeis a process Q2 with Q2 ?o IO P2  P3. Further, as HOcore has no output prefixes (i.e., it is an asynchronous calculus) the twotransitions from Q can be combined into a ? -transition. Finally, since ?o IO is preserved by substitutions (Lemma 4.10), wecan use rule Tau1 to derive a process Q = Q2{R/x}  Q3 that matches the ? -transition from P, with (P, Q) ? ?o IO. 
\end{proof}
\vspace*{-.6cm}

%----------------------------------------------------
%----------------------------------------------------

\subsection{Relating the strong bisimilarities}
We represent detailed relationship between the strong bisimilarities defined so far.
Such relationship will be established step by step.
Eventually, as our ultimate goal, it will be demonstrated that all these strong bisimilarities coincide with each other. 
This coincidence immediately entails that every and each of them is decidable, and moreover paves way for further discussion on the axiomatization and algorithm.
We first establish the coincidence between $\SHOB$ and $\SHOIOB$, and then move on to the remainder parts.

%\subsubsection{The strong HO bisimilarity coincides with the strong HO-IO bisimilarity} %, i.e., $\SHOB \;=\; \SHOIOB$
\vspace*{.1cm}
\noindent\textbf{\SHOB\ and \SHOIOB\ coincide}~~ 
%——————————————
%
%\TODOM{expand...discussion...properties of strong HO-IO bisimilarity, including congruence and DECIDABILITY}
%\TODOM{expand...discussion...properties of strong HO bisimilarity, including coincidence with strong HO-IO bisimilarity}\\
%\TODOM{expand...discussion...properties of strong HO bisimilarity}
%\TODOM{notice extension to abstraction and open processes...}
% The strong HO bisimilarity $\SHOB$ can be extended to open process terms in the usual fashion. Suppose $P$ and $Q$  are open terms with $\fpv{P,Q}=\ve{X}$. Then 
% \[
% P \,\SHOB\, Q \quad\mbox{ if and only if }\quad P\hosub{\ve{R}}{\ve{X}} \,\SHOB\, Q\hosub{\ve{R}}{\ve{X}} \qquad\quad \mbox{ for any closed } \ve{R},
% \] where 
% $\hosub{\ve{R}}{\ve{X}}$ denotes $\hosub{R_1}{X_1}\cdots \hosub{R_n}{X_n}$ for tuples $\ve{R}$ and $\ve{X}$. 
% {\small This extension is also applicable to abstractions. BUT it is unnecesss because this is already coordinated in the definition of strong HO bisimulation.}
\annotate{
We recall that $\hosub{\ve{R}}{\ve{X}}$ denotes pairwise substitution $\hosub{R_1}{X_1}\cdots \hosub{R_n}{X_n}$ for tuples $\ve{R}$ and $\ve{X}$ that are $R_1,...,R_n$ and $X_1,...,X_n$ respectively. 
As usual $\SHOB$ can be extended to open terms $P$ and $Q$ with $\fpv{P,Q}=\ve{X}$ in the following fashion.
\[
P \,\SHOB\, Q \quad\mbox{ if and only if }\quad P\hosub{\ve{R}}{\ve{X}} \,\SHOB\, Q\hosub{\ve{R}}{\ve{X}} \qquad\quad \mbox{ for any closed } \ve{R}
\]
}
\tdup{\scriptsize The extension to abstractions is similar, and actually this is already captured in the definition of strong HO bisimulation. \rc{So this is a built-in element of the strong HO bisimulations (and other related strong bisimulations as well). }}
The following lemma gives a characteristic of the strong HO bisimilarity. Its proof is a standard bisimulation deduction, 
\iftoggle{appendixON}{%
% use appendix
\rc{with details in Appendix \ref{a:proofs_1}.}
} {%
% no appendix, use online
\xxyrmcolor{with details in \cite{XZ21L}.}
}%toggle appendixON
\begin{lemma}\label{l:exd2open4HOB}
%\rc{About the extension of strong HO bisimilarity to open terms ... See ANNOTATION that follows ... }
Suppose $\fpv{P,Q} {=} \ve{X} {=} {X_1, ..., X_n}$.
%whose size is $n$. %= {X_1, ..., X_n}$. 
For fresh $h$ and any closed $\ve{R}$, it holds that 
%\[
%\begin{array}{c}
$
P\hosub{\ve{R}}{\ve{X}} \,\SHOB\, Q\hosub{\ve{R}}{\ve{X}} 
~\mbox{ if and only if }~ 
h(X_1).\cdots.h(X_n).P \,\SHOB\, h(X_1).\cdots.h(X_n).Q.
$
%\end{array}
%\]
%Lemma 4.13. Let P, Q be two processes and {x1, . . . , xn} be their free variables. We have P ?. HO Q iff P{Ri/xi} ?HO Q{Ri/xi} for any tuple of closed processesRi.
\end{lemma}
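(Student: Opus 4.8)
The plan is to prove both directions of the biconditional by exhibiting suitable strong HO bisimulations, exploiting the fact that $\SHOB$ is defined on closed terms and extended to open terms via closure under substitution. Throughout, $h$ is fresh, so the inputs $h(X_i)$ cannot interfere with any action of $P$ or $Q$ and serve purely as a mechanism to sequentially feed the closing substitution. First I would fix notation: write $\widehat{P} \DEF h(X_1).\cdots.h(X_n).P$ and $\widehat{Q} \DEF h(X_1).\cdots.h(X_n).Q$, and observe that the only initial action available to $\widehat{P}$ is the input $\widehat{P} \st{h(X_1)} h(X_2).\cdots.h(X_n).P$, with the instantiation of $X_1$ handled in the late style by clause (5) of Definition \ref{d:sho_bisi}.

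For the forward direction, assume $P\hosub{\ve{R}}{\ve{X}} \,\SHOB\, Q\hosub{\ve{R}}{\ve{X}}$ for every closed $\ve{R}$. The plan is to show $\widehat{P} \,\SHOB\, \widehat{Q}$ by building the candidate relation
\[
\R = \{(\widehat{P},\widehat{Q})\} \cup \{(h(X_{k+1}).\cdots.h(X_n).(P\sigma_k),\; h(X_{k+1}).\cdots.h(X_n).(Q\sigma_k)) : 0\le k\le n\} \cup \SHOB,
\]
where $\sigma_k = \hosub{A_1}{X_1}\cdots\hosub{A_k}{X_k}$ ranges over all partial closing substitutions by closed terms $A_1,\dots,A_k$. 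For $k<n$ the only possible transition is the input on $h$ (since $h$ is fresh and $P,Q$ are guarded behind the prefixes), which by clause (5) must be matched with the same closed instantiation on both sides, landing inside $\R$ at stage $k+1$. At the terminal stage $k=n$ the pair becomes $(P\hosub{\ve{A}}{\ve{X}}, Q\hosub{\ve{A}}{\ve{X}})$, which lies in $\SHOB$ by hypothesis; from there all clauses are discharged because $\SHOB$ is itself a strong HO bisimulation. I must also check that at each intermediate stage $P$ is a non-abstraction / abstraction in lockstep with $Q$, but this follows because the outermost construct is always the prefix $h(X_{k+1}).(\cdots)$, a non-abstraction, until the final stage where the hypothesis applies.

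For the converse, assume $\widehat{P} \,\SHOB\, \widehat{Q}$ and derive $P\hosub{\ve{R}}{\ve{X}} \,\SHOB\, Q\hosub{\ve{R}}{\ve{X}}$ for an arbitrary closed $\ve{R}$. The idea is to replay the $n$ input actions on $\widehat{P}$ one at a time, each time choosing the instantiation $R_i$ for $X_i$, and use the bisimulation game to force $\widehat{Q}$ to respond with the identical input on $h$ (no other action is enabled) under the same instantiation, as dictated by clause (5). After $n$ such steps we reach the pair $(P\hosub{\ve{R}}{\ve{X}}, Q\hosub{\ve{R}}{\ve{X}})$, and since $\SHOB$ is preserved along bisimulation transitions we obtain $P\hosub{\ve{R}}{\ve{X}} \,\SHOB\, Q\hosub{\ve{R}}{\ve{X}}$ directly. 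Because $\ve{R}$ was arbitrary, this yields $P \,\SHOB\, Q$ by the definition of the extension of $\SHOB$ to open terms.

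The main obstacle I anticipate is the late-instantiation bookkeeping in clause (5): the input transition on $h$ removes the prefix without substituting, and the hypothesis $P'\hosub{A}{X}\,\R\,Q'\hosub{A}{X}$ must be invoked \emph{for every closed $A$} simultaneously, so the candidate relation must be closed under all partial instantiations at once rather than a single chosen one. Care is also needed to confirm that freshness of $h$ genuinely blocks every non-$h$ action at the intermediate stages (so that the only moves to check are the $h$-inputs), and that the abstraction-shape clauses (1)--(3) are satisfied trivially at those stages because the head symbol is an input prefix. Once these points are pinned down, both directions reduce to routine diagram chasing.
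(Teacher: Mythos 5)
Your proof is correct and follows essentially the same route as the paper, which establishes this lemma by a standard bisimulation deduction: exhibiting the relation of partially instantiated prefixed pairs (closed under all partial substitutions, union $\SHOB$) for one direction, and replaying the $n$ fresh $h$-inputs through the bisimulation game for the other. The only cosmetic point is that your candidate relation should be taken up to symmetric closure, and the freshness of $h$ is really just hygiene here since the prefixes themselves guard all other actions.
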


As Lemma \ref{l:coinHOandHOIO} states, the strong HO bisimilarity and the strong HO-IO bisimilarity are actually coincident. With the help of Lemma \ref{l:exd2open4HOB}, and Lemmas \ref{l:name-sub-preserving}, \ref{l:process-sub-preserving}, \ref{l:tau-preserving} as well, we can prove the mutual inclusion of the two strong bisimilarities.  
 \iftoggle{appendixON}{%
% use appendix
\rc{The details are provided in Appendix \ref{a:proofs_1}.}
} {%
% no appendix, use online
\xxyrmcolor{The details are provided in \cite{XZ21L}.}
}%toggle appendixON
\begin{lemma}\label{l:coinHOandHOIO}
On \HOmp\ terms, $\SHOB \;=\; \SHOIOB$.
%The strong HO bisimilarity coincides with the strong HO-IO bisimilarity, i.e., $\SHOB \;=\; \SHOIOB$.
\end{lemma}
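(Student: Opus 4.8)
The plan is to establish the two inclusions $\SHOIOB \subseteq \SHOB$ and $\SHOB \subseteq \SHOIOB$ separately, in each case exhibiting the candidate relation and verifying that it satisfies the defining clauses of the target bisimulation (Definitions \ref{d:sho_bisi} and \ref{d:shoio_bisi}).

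For $\SHOIOB \subseteq \SHOB$ I would first show that the restriction of $\SHOIOB$ to closed terms is a strong HO bisimulation. The non-abstraction, name-abstraction, and output clauses are inherited essentially verbatim from the corresponding clauses of $\SHOIOB$. The two places where the definitions genuinely differ are the process-abstraction clause and the input clause, since $\SHOB$ demands closure under instantiation by every closed $A$ whereas $\SHOIOB$ compares the bodies/residuals directly; both gaps are bridged by the process-substitution-preserving property (Lemma \ref{l:process-sub-preserving}). Finally, the $\tau$ clause of $\SHOB$ is supplied by the $\tau$-preserving property (Lemma \ref{l:tau-preserving}), which is tailored exactly for this step. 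This gives $\SHOIOB \subseteq \SHOB$ on closed terms, and the extension to open terms follows because, by Lemma \ref{l:process-sub-preserving}, $P \SHOIOB Q$ entails $P\hosub{\ve{R}}{\ve{X}} \SHOIOB Q\hosub{\ve{R}}{\ve{X}}$ for every closing $\ve{R}$, whence these closed instances are $\SHOB$-related by the closed case — which is precisely the open-extension condition defining $\SHOB$.

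For the converse $\SHOB \subseteq \SHOIOB$ I would show that the open extension of $\SHOB$ is a strong HO-IO bisimulation. The abstraction and transition clauses (2)--(5) of $\SHOIOB$ are obtained from the matching clauses of $\SHOB$: for the input/output clauses I reduce to closing instances, where the $\SHOB$ clauses apply, and transfer the resulting match back to the open terms using the explicit structural form of these transitions (Lemmas \ref{l:struc_prop1} and \ref{l:lts_prop1}), with the direct body/residual comparison required by $\SHOIOB$ recovered as the open $\SHOB$ of the components and repackaged through Lemma \ref{l:exd2open4HOB}. The genuinely new work is the open-form clauses (6), (7), (8), which detect a free variable occurring at the top level in one of its three shapes: bare $X$, a process application $X\lrangle{A}$, or a name application $X\lrangle{d}$. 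For each shape I would instantiate $X$ by a tailored probe on a fresh channel $t$ — namely $\overline{t}0$ for the bare case, $\lrangle{Z}\overline{t}Z$ (so that $X\lrangle{A}$ reduces to $\overline{t}A$) for the process-application case, and the analogous name probe for the last — so that the top-level occurrence, and only it, emits an observable $\overline{t}$ action. Since $t$ is fresh and the type of $X$ forces the same shape of occurrence in $Q$, Lemma \ref{l:lts_prop2} pins down exactly how the $\overline{t}$ outputs of $P$ and $Q$ arise; the $\SHOB$ output match then forces the symmetric structure $Q \SE X \para Q'$ (resp. $Q \SE X\lrangle{B} \para Q'$, $Q \SE X\lrangle{d} \para Q'$), with $A \SHOB B$ in the process-application case coming from comparing the emitted $\overline{t}A$ and $\overline{t}B$. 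Quantifying the residual bisimilarity over all closing substitutions finally yields $P' \SHOB Q'$ in the open sense, again via Lemma \ref{l:exd2open4HOB}.

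The main obstacle I anticipate is precisely this probe argument for clauses (6)--(8). One must (i) design probes that are well-typed for each of the three variable types and whose fresh observable is producible only by the intended top-level occurrence, and never by a guarded occurrence (which Lemma \ref{l:lts_prop2} isolates into the guarded remainder) nor by $Q$ on its own channels; and (ii) track the multiplicities of the $\overline{t}$ outputs so that removing one copy on each side leaves genuinely matching residuals. Getting the bookkeeping of copies and of the substituted-away probes to line up, so that the residual relation is the honest open $\SHOB$ rather than a contaminated variant, is where the care is needed, and where the new forms of probes demanded by parameterization make the argument diverge from the HOcore proof of \cite{LPSS10a}.
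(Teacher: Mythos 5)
Your proposal is correct and follows essentially the same route as the paper: mutual inclusion, with the $\SHOIOB\subseteq\SHOB$ direction discharged by the substitution-preserving and $\tau$-preserving properties (Lemmas \ref{l:process-sub-preserving}, \ref{l:name-sub-preserving}, \ref{l:tau-preserving}) and the converse handled via the open extension, Lemma \ref{l:exd2open4HOB}, and the trigger-style probes whose behaviour is pinned down by Lemmas \ref{l:lts_prop1} and \ref{l:lts_prop2}. The probes you describe for clauses (6)--(8) are exactly the paper's triggers $\mtriggerName$, $\mtriggerDName$, $\mtriggerDdName$, so the bookkeeping concerns you flag are the ones the paper's appendix proof also has to resolve, not a deviation from it.
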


%\sepp

%-----------------------------------------------------------
%-----------------------------------------------------------

%\subsubsection{Relating the strong HO bisimilarity and the other strong bisimilarities}
%\vspace*{.1cm}
\noindent\textbf{Relating \SHOB\ and the other strong bisimilarities}~~ 
We now tackle the relationship between the strong HO bisimilarity and other strong bisimilarities, including the strong context bisimilarity ($\SCTXB$), strong normal bisimilarity ($\SNRB $), and open strong normal bisimilarity ($\OSNRB$), as well as the remaining part of the overall picture of coincidence. 
To do this, we need some preparation. %several properties of these bisimilarities. 
%Jumping ahead, t
The following lemma will be useful. % for establishing the relationship between the strong bisimilarities. 
The proof employs the usual bisimulation-establishing method. 
\iftoggle{appendixON}{%
% use appendix
\rc{The details are in Appendix \ref{a:proofs_1}.}
} {%
% no appendix, use online
\xxyrmcolor{The details can be found in \cite{XZ21L}.}
}%toggle appendixON

\iftoggle{mNotsON}{%
  % use m notes
} {%
  % no m notes
\TODOM{expand...discussion...relationships of all the strong bisimilarities }\\
%\TODOM{expand...discussion...properties of strong HO bisimilarity}
\TODOM{notice extension to abstraction and open processes...}
}%toggle m notes

%TEST: Lemma \ref{l:tau-preserving}

%\noindent\myrule{7cm}{.1em}

\iftoggle{mNotsON}{%
  % use m notes
} {%
  % no m notes
\TODOM{*****\bc{TO UPDATE from VerbTex} ...***** ... DN!!}
}%toggle m notes

% \begin{lemma}\label{l:prop1_OSNRB}
%  \rc{[[This lemma is reproduced in VerbTex; caution when copying from VerbTex to GIT repo.!...!...!...]]}

% If $m.P_1\para P \,\OSNRB\, m.Q_1\para Q$, where $m$ is fresh, then it holds that $P_1 \,\OSNRB\, Q_1$ and $P \,\OSNRB\, Q$.
% \end{lemma}
% \begin{proof}
% We show that ... 

% \TODOM{ TO REFER ... and ADAPT / REWRITE ...}
% \end{proof}

\iftoggle{mNotsON}{%
  % use m notes
} {%
  % no m notes
\TODOM{expand...discussion...relationships of all the strong bisimilarities ... DN }\\
%\TODOM{expand...discussion...properties of strong HO bisimilarity}
\TODOM{notice extension to abstraction and open processes...DN}
}%toggle m notes

% Jumping ahead, the lemma below will be the corner stone of establishing the relationship between the strong bisimilarities. The proof employs the usual bisimulation-establishing method. The details can be found in Appendix \ref{a:proofs_1}.
\begin{lemma}\label{l:prop1_OSNRB}
 %\rc{[[This lemma is reproduced in VerbTex; caution when copying from VerbTex to GIT repo.!...!...!...]]}

%[about the to-do lemma about characterizing the strong normal bisimilarity]
%- 在待证lemma中加入对应另两种trigger的性质（采用对应的两种接收环境）。
%- 加入note：证明方法除参考已有之外，兼顾自己的证明方法（利用m为 fresh，逐步消耗另一部分的并发进程P和Q；input和output仅有有限多（语法制限））。
%- 哈哈，...、、、わ．＇'
%^_^，？！、、、？、.．＇

%- P'和Q'

Assume that $m$ is fresh with respect to $P_1$, $Q_1$, $P$ and $Q$, said otherwise $m\notin \fn{P_1,Q_1,P,Q}$. 
%\begin{enumerate}
%\item 
\noindent~\textbf{(1)} If $m.P_1\para P \,\OSNRB\, m.Q_1\para Q$, %where $m$ is fresh,
 then $P_1 \,\OSNRB\, Q_1$ and $P \,\OSNRB\, Q$.
%\item 
\noindent~\textbf{(2)} If $m(Z).P_1\lrangle{Z} \para P \,\OSNRB\, m(Z).Q_1\lrangle{Z}\para Q$, then $P_1 \,\OSNRB\, Q_1$ and $P \,\OSNRB\, Q$.
%\item 
\noindent~\textbf{(3)} If $m(Z).Z\lrangle{P_1} \para P \,\OSNRB\, m(Z).Z\lrangle{Q_1}\para Q$, then  $P_1 \,\OSNRB\, Q_1$ and $P \,\OSNRB\, Q$.
%\item 
%\item 
%\end{enumerate}
\end{lemma}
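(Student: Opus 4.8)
The plan is to prove the two conclusions separately, each by exhibiting an explicit open strong normal bisimulation and letting the freshness of $m$ do the essential work. The three parts are structurally parallel: they differ only in the shape of the fresh guard --- $m.P_1$, $m(Z).P_1\lrangle{Z}$, or $m(Z).Z\lrangle{P_1}$ --- which is exactly the wrapper attached to the three trigger types in clause~(7) of Definition~\ref{d:opensnr_bisi}. I would therefore carry out part~(1) in detail and then indicate the adjustments for (2) and~(3).

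For $P \,\OSNRB\, Q$ I would show that
\[
\R \;=\; \{(P,Q) : m.P_1 \para P \,\OSNRB\, m.Q_1 \para Q \text{ for some } P_1, Q_1 \text{ and fresh } m\}
\]
is an open strong normal bisimulation (it is symmetric since $\OSNRB$ is). The crux is that, as $m \notin \fn{P}$, the guard $m.P_1$ offers only the input on $m$ and contributes neither a top-level variable nor a top-level application. Hence every clause triggered by $P$ --- an action (input on $a \neq m$, output, or $\tau$) or an open-form decomposition $P \SCongru X \para P'$, $P \SCongru X\lrangle{A}\para P'$, $P \SCongru X\lrangle{d}\para P'$ --- lifts verbatim to $m.P_1 \para P$ with the guard carried along inertly. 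The hypothesis then makes $m.Q_1 \para Q$ respond, and freshness forces the response to be supplied by $Q$ (the guard $m.Q_1$ can only fire the $m$-input, which $P$ never requests, and it cannot interact with $Q$ because $m \notin \fn{Q}$). Each residual keeps its guard, so the new pair again lies in $\R$; and for the output and clause-(7) obligations the trigger condition, e.g. $m'.A \para (m.P_1\para P') \,\OSNRB\, m'.B \para (m.Q_1\para Q')$, simply re-associates to $m.P_1 \para (m'.A\para P') \,\OSNRB\, m.Q_1 \para (m'.B\para Q')$, which is a witness for $(m'.A\para P',\, m'.B\para Q') \in \R$. Well-formedness of $m.P_1\para P$ forces $P,Q$ to be non-abstractions, so the abstraction clauses are vacuous. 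Thus $\R \subseteq \OSNRB$ and $P \,\OSNRB\, Q$.

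For $P_1 \,\OSNRB\, Q_1$ I would first fire the fresh guard once. Choosing the bound name fresh, $m.P_1 \para P \st{m(X)} P_1 \para P$; by the input clause $m.Q_1 \para Q \st{m(X)} T$ with $P_1 \para P \,\OSNRB\, T$, and since $m \notin \fn{Q}$ the only $m$-input comes from the guard, so $T \SCongru Q_1 \para Q$ and
\[
P_1 \para P \,\OSNRB\, Q_1 \para Q. \qquad (\star)
\]
Now I have $(\star)$ together with $P \,\OSNRB\, Q$ from the first part, and it remains to cancel the common, bisimilar context. This I would reduce to a cancellation principle for parallel composition: if $P_1 \para P \,\OSNRB\, Q_1 \para Q$ and $P \,\OSNRB\, Q$, then $P_1 \,\OSNRB\, Q_1$, which holds for a finitary calculus (terms of finite depth, no replication) by a unique-decomposition argument in the style of Milner and Moller.

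The cancellation step is the main obstacle, and the reason the plain-HOcore argument of \cite{LPSS10a} cannot be reused here. The difficulty is interference: once the guard is consumed, an action of the released $P_1$ (say an output $\overline{c}A$) need not be answered on the right by $Q_1$, because the context $Q$ may offer the same action; so $P_1$'s contribution cannot be read off from $(\star)$ directly, and a naive ``re-guarding'' bisimulation runs into the same entanglement. Establishing the decomposition of $\OSNRB$-equivalent terms into primes --- now in the presence of higher-order application --- is the delicate ingredient, and it is exactly where finiteness of depth (Definition~\ref{def:depth}) is used. Parts (2) and (3) follow the same fire-then-cancel pattern, using the guards $m(Z).P_1\lrangle{Z}$ and $m(Z).Z\lrangle{P_1}$: firing releases $P_1\lrangle{Z} \para P \,\OSNRB\, Q_1\lrangle{Z} \para Q$ (resp. $Z\lrangle{P_1}\para P \,\OSNRB\, Z\lrangle{Q_1}\para Q$) with $Z$ fresh, after which cancellation yields $P_1\lrangle{Z} \,\OSNRB\, Q_1\lrangle{Z}$ (resp. $Z\lrangle{P_1} \,\OSNRB\, Z\lrangle{Q_1}$); the process-abstraction clause~(2), the name-abstraction clause~(3), and freshness of $Z$ then recover $P_1 \,\OSNRB\, Q_1$. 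The name-abstraction case needs the most care, since the released term is the application $Z\lrangle{P_1}$ of a free variable, which must be unwound through the name-abstraction trigger rather than reduced.
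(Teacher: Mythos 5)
Your first half --- the guarded-context relation showing $P \,\OSNRB\, Q$ --- is sound and is exactly the ``usual bisimulation-establishing method'' the paper alludes to: the fresh guard is inert for every clause of Definition~\ref{d:opensnr_bisi}, and the re-association of the trigger conditions keeps the residuals inside the relation. The gap is in the second half. You reduce $P_1 \,\OSNRB\, Q_1$ to the cancellation principle ``$P_1 \para P \,\OSNRB\, Q_1 \para Q$ and $P \,\OSNRB\, Q$ imply $P_1 \,\OSNRB\, Q_1$'', but you neither prove it nor may you import it: the paper's cancellation result (Proposition~\ref{pps:cancellation}) lives in the axiomatization section, is proved by instantiating $\SB$ as $\SHOIOB$, and its transfer to $\OSNRB$ rests on the coincidence theorem --- which depends on the present lemma through Lemma~\ref{l:prop3_REL}. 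So invoking it here is circular, and proving Milner--Moller cancellation from scratch for $\OSNRB$ (which, unlike $\SHOIOB$, has a $\tau$-clause and trigger-guarded output and application clauses) is a piece of work comparable in weight to the lemma itself; ``a unique-decomposition argument in the style of Milner and Moller'' is a pointer, not a proof.

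There is a second, independent hole in part (3) that survives even if cancellation is granted. Firing the guard and cancelling yields $Z\lrangle{P_1} \,\OSNRB\, Z\lrangle{Q_1}$ for a fresh variable $Z$, but the only clause of Definition~\ref{d:opensnr_bisi} that inspects such a pair is clause (7), and for a name-abstraction argument its item (c) hands you back $m'(Z').Z'\lrangle{P_1} \,\OSNRB\, m'(Z').Z'\lrangle{Q_1}$ --- precisely the hypothesis of part (3) again, not $P_1 \,\OSNRB\, Q_1$. (The same looping occurs, one level up, when one tries to prove cancellation for $\OSNRB$, whose application clause only ever returns trigger-guarded pairs.) Breaking this circle requires an explicit well-founded induction, e.g.\ on $\pdepth{\cdot}$ using Lemma~\ref{l:depth_struc_sim}-style invariance, threaded through the bisimulation construction so that the guarded pair of strictly larger depth may assume the conclusion for the strictly smaller $P_1, Q_1, P, Q$. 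Your proposal correctly identifies both pressure points but sets up neither the inductive measure nor the cancellation argument, so the proof as written does not go through.
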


Using a similar proof strategy as Lemma \ref{l:prop1_OSNRB}, one can prove similarly the result for $\SNRB$.
\annotate{(we prove it for $\OSNRB$ because it is harder); hence the following corollary.} 
\begin{corollary} \label{cor:prop1_OSNRB2SNRB}
The result of Lemma \ref{l:prop1_OSNRB} also holds if one replaces $\OSNRB$ with $\SNRB$ in the statement. % of that lemma. 
\end{corollary}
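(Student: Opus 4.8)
The plan is to transcribe the argument for Lemma~\ref{l:prop1_OSNRB} into the closed, trigger-based world of $\SNRB$, replacing every appeal to the open-form clauses (6)--(8) of $\OSNRB$ by the corresponding trigger clauses (4),(5) of the strong normal bisimilarity and inserting the (here vacuous or controllable) trigger substitutions that those clauses carry. Throughout I would use that $\SNRB$ is an equivalence, is a congruence, and is preserved by $\SCongru$, together with the freshness hypothesis $m\notin\fn{P_1,Q_1,P,Q}$. The three items are treated uniformly: the wrapper is the non-abstraction guard $m.(\cdot)$ in item (1), the process-abstraction guard $m(Z).(\cdot)\lrangle{Z}$ in item (2), and the name-abstraction guard $m(Z).Z\lrangle{\cdot}$ in item (3). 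These are exactly the three shapes produced by the output clause (5a)--(5c), so each will reduce to the same two conclusions once the wrapper is peeled off.

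First I would establish the ``context'' half $P\,\SNRB\,Q$. Consider the candidate relation $\R=\{(P,Q):\,m.P_1\para P\,\SNRB\,m.Q_1\para Q\text{ for some fresh }m,P_1,Q_1\}$ (with the evident analogues for items (2),(3)). Since $m$ is fresh, the guard $m.P_1$ offers only an $m$-input and can never take part in a non-$m$ transition; hence every move $P\st{\lambda}P'$ lifts to $m.P_1\para P\st{\lambda}m.P_1\para P'$ and, by freshness of $m$ on the $Q$-side, must be answered by a move originating in $Q$, leaving the guard intact. For the input clause the carried trigger substitution touches only $P',Q'$ (the input variable is not free in $m.P_1$), and for the output clause the residual $m'.A\para(m.P_1\para P')$ is rearranged by $\SCongru$ to $m.P_1\para(m'.A\para P')$, keeping the pair in $\R$. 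Thus $\R$ is a strong normal bisimulation and $P\,\SNRB\,Q$.

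For the ``content'' half $P_1\,\SNRB\,Q_1$ I would fire the fresh trigger. In item (1), $m.P_1\para P\st{m(X)}P_1\para P$ with $X$ fresh, and by freshness this input can only be matched by $m.Q_1\para Q\st{m(X)}Q_1\para Q$; as $X\notin\pv{P_1,Q_1}$ the trigger substitution of clause (4) is vacuous, giving $P_1\para P\,\SNRB\,Q_1\para Q$. In items (2),(3) firing $m(Z)$ exposes the applied forms $P_1\lrangle{Z}$ respectively $Z\lrangle{P_1}$, so there one must additionally recover the abstraction $P_1$ from its application to a trigger; this is exactly what the abstraction clauses (2),(3) of $\SNRB$ and the trigger transition properties of Lemma~\ref{l:lts_prop1} provide, again leaving $P_1$ against $Q_1$ in parallel with $P,Q$. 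Combining $P_1\para P\,\SNRB\,Q_1\para Q$ with $P\,\SNRB\,Q$ from the first half, congruence (fixing $P_1$) and transitivity yield $P_1\para Q\,\SNRB\,Q_1\para Q$, reducing the goal to cancelling the common component $Q$.

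The main obstacle is precisely this cancellation: from $P_1\para Q\,\SNRB\,Q_1\para Q$ one must conclude $P_1\,\SNRB\,Q_1$, and the difficulty is that a move of $P_1$ on the left may be answered by a move internal to the shared $Q$ on the right, so the matching does not a priori decompose componentwise. This is the step where the open-form clauses of $\OSNRB$ make Lemma~\ref{l:prop1_OSNRB} smoother, since they let the top-level variable be observed directly, whereas for $\SNRB$ the isolation has to be re-run through the trigger semantics. I would discharge it by a bisimulation-up-to-$\SCongru$ argument for the relation $\{(P_1,Q_1):\exists P,Q.\ P\,\SNRB\,Q\text{ and }P_1\para P\,\SNRB\,Q_1\para Q\}$, inducting on the purely syntactic measure $\pdepth{P_1\para P}$ of Definition~\ref{def:depth} and exploiting that \HOmp\ has neither restriction nor replication (so every process is finite and admits a unique parallel decomposition up to $\SNRB$). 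This finiteness is what rules out the interfering case and guarantees that the componentwise matching terminates, completing the reduction and hence the corollary.
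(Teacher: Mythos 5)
Your first half is consistent with what the paper does: the paper proves Lemma~\ref{l:prop1_OSNRB} by ``the usual bisimulation-establishing method'' and discharges the corollary by the same strategy, and your candidate relation $\R=\{(P,Q):\,m.P_1\para P\,\SNRB\,m.Q_1\para Q\}$, exploiting that a fresh guard is inert and survives every non-$m$ challenge, is exactly that method. The bookkeeping you describe for the input clause (the trigger substitution misses the guard by $\alpha$-conversion and freshness of the new trigger name) and for rearranging the output residual by $\SCongru$ is correct, so $P\,\SNRB\,Q$ goes through.

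The genuine gap is in how you obtain $P_1\,\SNRB\,Q_1$. Firing the fresh $m$ only yields $P_1\para P\,\SNRB\,Q_1\para Q$ (or its applied-trigger variants), and you then reduce the remaining step to cancelling the common factor $Q$. But cancellation (Proposition~\ref{pps:cancellation}) and unique prime decomposition (Proposition~\ref{prop:UniDecomp}) are established only in Section~\ref{s:axiomatization}, for the coincident bisimilarity $\SB$ instantiated as $\SHOIOB$; they become available for $\SNRB$ only \emph{after} the coincidence theorem, and the coincidence chain (Lemma~\ref{l:prop3_REL}, hence Lemma~\ref{l:prop5_REL}) itself relies on the present corollary. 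As written your plan is therefore circular, and the one-sentence repair is circular even internally: ``every process admits a unique parallel decomposition up to $\SNRB$'' is a consequence of the cancellation property you are trying to supply. At the point where this corollary is needed you only know $\SHOIOB\subseteq\SNRB$, not the converse, so cancellation for $\SHOIOB$ cannot be transferred. To make the plan sound you must either prove cancellation for $\SNRB$ from scratch (a genuine piece of work, since $\SNRB$'s clauses quantify over closed instantiations and triggers, unlike the purely structural $\SHOIOB$ for which the depth induction is straightforward), or restructure the argument so that the component $P_1$ is isolated by the freshness of $m$ alone, without ever forming the pair $P_1\para Q\,\SNRB\,Q_1\para Q$; the paper's terse pointer to ``the same strategy as Lemma~\ref{l:prop1_OSNRB}'' and its silence on cancellation suggest the intended proof takes the latter route.
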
 

\iftoggle{mNotsON}{%
  % use m notes
} {%
  % no m notes
\TODOM{cont...}

\TODOM{notice Ref...}

\TODOM{expand...more...}
}%toggle m notes

%\sepp\sepp

Next in Lemma \ref{l:prop1_REL}, % and \ref{l:prop2_REL}, 
we present the first two implications about the strong bisimilarities. 
Basically, the proof of Lemma \ref{l:prop1_REL}\rc{(1)} utilizes the congruence of $\SHOB$ and the proof of Lemma \ref{l:prop1_REL}\rc{(2)} uses the fact that the requirements of $\SNRB$ are actually special cases of $\SCTXB$. 
\iftoggle{appendixON}{%
% use appendix
\rc{The details can be found in Appendix \ref{a:proofs_1}.}
} {%
% no appendix, use online
\xxyrmcolor{The details can be found in \cite{XZ21L}.}
}%toggle appendixON
\begin{lemma}\label{l:prop1_REL}
%$\SHOB$ implies $\SCTXB$ on general (open) \HOmp\ processes. 
\noindent\textbf{(1)} $\SHOB$ implies $\SCTXB$ on \HOmp\ processes. ~~\textbf{(2)} $\SCTXB$ implies $\SNRB$ on \HOmp\ processes. 
\end{lemma}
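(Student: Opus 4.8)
The plan is to prove both implications by the same recipe: display the finer relation as a bisimulation of the coarser kind and conclude by maximality. Thus for \textbf{(1)} I would check that $\SHOB$ is a strong context bisimulation, giving $\SHOB\subseteq\SCTXB$, and for \textbf{(2)} that $\SCTXB$ is a strong normal bisimulation, giving $\SCTXB\subseteq\SNRB$.

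For \textbf{(1)}, assume $P\SHOB Q$ and traverse the clauses of Definition~\ref{d:sctx_bisi}. The clauses for non-abstractions, process-abstractions, name-abstractions, input, and $\tau$ read verbatim like the corresponding clauses of Definition~\ref{d:sho_bisi}; in particular the input clause of $\SHOB$ already quantifies over every closed $A$, exactly as $\SCTXB$ requires, so these need no work. The only real obligation is the output clause: from $P\st{\overline{a}A}P'$ the strong HO bisimulation supplies $Q\st{\overline{a}B}Q'$ with $A\SHOB B$ and $P'\SHOB Q'$, and the demand that $B$ have the same kind as $A$ is met by applying clauses 1--3 of Definition~\ref{d:sho_bisi} to $A\SHOB B$. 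It then remains to prove $E(A)\para P'\SHOB E(B)\para Q'$ for an arbitrary $E(X)$. Here I would invoke congruence: by Lemma~\ref{l:coinHOandHOIO}, $\SHOB=\SHOIOB$, and Lemma~\ref{l:congruence} gives closure under the individual operators; a structural induction on $E$ lifts this to $A\SHOB B\Rightarrow E(A)\SHOB E(B)$, after which closure under $\para$ together with transitivity yields $E(A)\para P'\SHOB E(B)\para P'\SHOB E(B)\para Q'$.

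For \textbf{(2)}, assume $P\SCTXB Q$ and verify the clauses of Definition~\ref{d:snr_bisi} with the candidate relation $\SCTXB$. The governing observation is that each trigger-based requirement of $\SNRB$ is a single instance of the universally quantified requirement of $\SCTXB$. In the abstraction clauses (2), (3) and the input clause (4), $\SNRB$ substitutes one of the closed, type-matched triggers $\mtriggerName$, $\mtriggerDName$, $\mtriggerDdName$ for the bound variable, so the relations needed there are obtained by instantiating the ``for every closed $A$'' of the matching $\SCTXB$ clause with that very trigger. In the output clause (5), the three required relations $m.A\para P'\,\SCTXB\,m.B\para Q'$, $m(Z).A\lrangle{Z}\para P'\,\SCTXB\,m(Z).B\lrangle{Z}\para Q'$ and $m(Z).Z\lrangle{A}\para P'\,\SCTXB\,m(Z).Z\lrangle{B}\para Q'$ are precisely the instances $E(A)\para P'\,\SCTXB\,E(B)\para Q'$ of the $\SCTXB$ output clause for the legal contexts $E(X)=m.X$, $E(X)=m(Z).X\lrangle{Z}$ and $E(X)=m(Z).Z\lrangle{X}$, each with $\fpv{E}\subseteq\{X\}$, the kind-matching of $B$ again being inherited from $\SCTXB$. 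The non-abstraction clause (1) and the $\tau$ clause (6) coincide in the two definitions.

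The main obstacle lies in part \textbf{(1)}, in upgrading the operator-wise congruence of Lemma~\ref{l:congruence} to full contextual closure $A\SHOB B\Rightarrow E(A)\SHOB E(B)$. Though routine in outline, the application step of that induction needs care: an application with the hole in head position is well-formed only when the hole receives an abstraction, and there one must argue through the abstraction clauses of $\SHOB$ (bisimilar abstractions yield bisimilar applied forms) rather than through a direct operator rule. Part \textbf{(2)} carries no comparable difficulty; the only thing to check is that each trigger is a closed term of the correct type and each trigger-shaped context is a well-formed $E(X)$, so that specializing the context-bisimulation clauses is legitimate.
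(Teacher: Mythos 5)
Your proposal is correct and follows essentially the same route as the paper: part (1) is discharged by exhibiting $\SHOB$ as a strong context bisimulation, with the only nontrivial obligation being the output clause, which is settled via the congruence of $\SHOB$ (through $\SHOB=\SHOIOB$ and Lemma~\ref{l:congruence}, lifted to contexts $E(X)$); part (2) observes that every trigger- and context-shaped requirement of $\SNRB$ is a particular instance of the universally quantified clauses of $\SCTXB$. Your cautionary remark about the application case in the contextual-closure induction (arguing through the abstraction clauses and substitution preservation rather than a direct operator rule) is exactly the point that needs care, and the use of Lemma~\ref{l:coinHOandHOIO} there is not circular since that coincidence is established beforehand.
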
%\vspace*{-.4cm}
\vspace*{-.1cm}

%\vspace*{-.4cm}
\annotate{
\begin{lemma}\label{l:prop2_REL}
%$\SCTXB$ implies $\SNRB$ on general (open) \HOmp\ processes. 
$\SCTXB$ implies $\SNRB$ on \HOmp\ processes. 
\end{lemma}%\vspace*{-.4cm}
\vspace*{-.1cm}
}

We now demonstrate, in Lemma \ref{l:prop3_REL}, the two last inclusions about the strong bisimilarities, so as to finalize the jigsaw. The proofs of them use the usual bisimulation construction approach, by exploiting Lemmas \ref{l:lts_prop1},\ref{l:prop1_OSNRB} and Corollary \ref{cor:prop1_OSNRB2SNRB}.
\iftoggle{appendixON}{%
% use appendix
\rc{The details are placed in Appendix \ref{a:proofs_1}.}
} {%
% no appendix, use online
\xxyrmcolor{The details are referred to \cite{XZ21L}.}
}%toggle appendixON
\begin{lemma}\label{l:prop3_REL}
%$\SNRB$ implies $\OSNRB$ on open  \HOmp\ processes. 
\noindent\textbf{(1)} $\SNRB$ implies $\OSNRB$ on \HOmp\ processes. ~~\textbf{(2)} $\OSNRB$ implies $\SHOIOB$ on \HOmp\ processes.
\end{lemma}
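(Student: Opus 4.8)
The plan is to prove each of the two inclusions by exhibiting the identity relation on the source equality and checking directly that it satisfies the defining clauses of the target bisimulation, working up to $\SCongru$ throughout (harmless, since all the relations absorb $\SCongru$). Part (2) is the routine direction, so I would dispatch it first; part (1) is where the trigger machinery and the freshness arguments do the real work. Symmetry of all the relations means only one direction of matching needs to be examined in each clause.

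\emph{Part (2): $\OSNRB \Rightarrow \SHOIOB$.} I would take $\R = \OSNRB$ and verify it is a strong HO-IO bisimulation. The abstraction clauses (1)--(3), the input clause (5), and the open clauses (6) and (8) of Definition \ref{d:shoio_bisi} are literally identical to the corresponding clauses of $\OSNRB$, so nothing is needed there. The only real content is in the output clause (4) and the open-application clause (7): in both, $\SHOIOB$ demands the plain conjunction $A\,\R\,B$ together with $P'\,\R\,Q'$, whereas $\OSNRB$ supplies instead the trigger-guarded statements $m.A\para P'\,\R\,m.B\para Q'$ (or the $m(Z).A\lrangle{Z}\para P'$, $m(Z).Z\lrangle{A}\para P'$ variants) for fresh $m$. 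The bridge is exactly Lemma \ref{l:prop1_OSNRB}: applying parts (1), (2), (3) according to whether $A$ is a non-abstraction, process-abstraction, or name-abstraction peels the fresh prefix off and yields precisely $A\,\OSNRB\,B$ and $P'\,\OSNRB\,Q'$. This closes part (2).

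\emph{Part (1): $\SNRB \Rightarrow \OSNRB$, the standard clauses.} Here I would take $\R=\SNRB$ (the open extension) and show it is an open strong normal bisimulation. Write $\sigma$ for the tuple of trigger substitutions replacing each free variable by the trigger matching its type, so that $P\,\SNRB\,Q$ means $P\sigma\,\SNRB\,Q\sigma$ on closed terms; note that every trigger channel of $\sigma$ is fresh and that every trigger emits only on its own channel and can never perform an input. These two facts are the backbone of the argument. For the input clause (5), the abstraction clauses (1)--(3), and the output/$\tau$ clause (4), I would push the transition of $P$ through $\sigma$ using Lemma \ref{l:lts_prop1}(1) (its side condition is automatic, as the triggers have no free process variables), invoke the matching clause of $\SNRB$ on $P\sigma\,\SNRB\,Q\sigma$, and then pull the reply back to $Q$. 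For input labels the pullback is by Lemma \ref{l:lts_prop1}(3)--(5), since no trigger channel occurs in an input on an original channel; for output and $\tau$ I would instead use Lemma \ref{l:struc_prop1} together with freshness of the trigger channels to locate the acting redex inside $Q$ itself (freshness also rules out any spurious $\tau$ from trigger interaction). In each case the residual relation between the trigger-substituted continuations is exactly $P'\,\SNRB\,Q'$, extended by the newly exposed variable in the input and abstraction cases.

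\emph{The obstacle in Part (1).} The genuinely delicate clauses are the open-form ones (6), (7), (8), where a free variable $X$ sits at top level. Under $\sigma$ such a variable becomes a trigger, so $X\para P'$, $X\lrangle{A}\para P'$ and $X\lrangle{d}\para P'$ turn into \emph{outputs on the fresh trigger channel}: e.g. for a process-abstraction $X$, $(X\lrangle{A})\sigma \SCongru \overline{m}(A\sigma)$, and for a name-abstraction $X$, $(X\lrangle{d})\sigma \SCongru \overline{m}[\lrangle{Z}(Z\lrangle{d})]$. I would feed this emission into clause (5) of $\SNRB$ to obtain $Q\sigma$'s matching output, and then---crucially using freshness of the trigger channel together with the decomposition Lemma \ref{l:lts_prop2}(1)/(2)/(3)---argue that the emission can only originate from a corresponding $X$-factor of $Q$, thereby recovering $Q\SE X\para Q'$ (resp. $X\lrangle{B}\para Q'$, $X\lrangle{d}\para Q'$); the trigger-guarded residual equalities are then reduced to the required $\OSNRB$ conditions by Corollary \ref{cor:prop1_OSNRB2SNRB}. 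I expect the name-abstraction case (clause (8), trigger $\mtriggerDdName=\lrangle{z}\overline{m}[\lrangle{Z}(Z\lrangle{z})]$) to be the hardest point: the trigger is two-layered, so after peeling one must still match the carried name, which forces an auxiliary argument that names are observable under $\SNRB$, guaranteeing that $Q$ offers $X\lrangle{d}$ with the \emph{same} $d$. Keeping the fresh trigger channels globally consistent between the closed equality $P\sigma\,\SNRB\,Q\sigma$ and the open equalities asserted of the residuals is the bookkeeping that makes the whole bisimulation step go through.
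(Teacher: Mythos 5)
Your proposal follows essentially the same route as the paper: part (2) is discharged by checking that $\OSNRB$ satisfies the clauses of Definition \ref{d:shoio_bisi} and peeling the trigger-guarded residuals with Lemma \ref{l:prop1_OSNRB}, and part (1) by the usual bisimulation construction that pushes transitions through the trigger substitution via Lemma \ref{l:lts_prop1}, locates the matching top-level occurrences of a free variable with Lemma \ref{l:lts_prop2}, and reduces the trigger-guarded residuals with Corollary \ref{cor:prop1_OSNRB2SNRB}. The subtleties you flag (notably the exact-name matching in clause (8) of the open-form cases) are precisely the points the paper's own argument must also address, so the proposal is correct and matches the intended proof.
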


\annotate{
\begin{lemma}\label{l:prop4_REL}
%$\OSNRB$ implies $\SHOIOB$ on open  \BHOParam\ processes. 
$\OSNRB$ implies $\SHOIOB$ on \BHOParam\ processes. 
\end{lemma}
}%\annotate-end

\noindent The follow-up lemma essentially fills what is left in the relationship between the strong bisimilarities.
\begin{lemma}\label{l:prop5_REL}
%$\SHOB$, $\OSNRB$, and $\SCTXB$ coincide on open (and closed)  \HOmp\ processes. 
%relations ∼HO, ∼o NOR and ∼CON coincide on closed processes.
$\SHOB$, $\OSNRB$, and $\SCTXB$ coincide on open and closed  \HOmp\ processes. 
\end{lemma}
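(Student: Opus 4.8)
The plan is to assemble the isolated implications already proved into a single cycle of inclusions and then invoke the coincidence $\SHOB = \SHOIOB$ to close it. Once the inclusions form a cycle, every containment in it is forced to be an equality, so coincidence drops out immediately. The real content of the lemma therefore lies not in new bisimulation constructions but in bookkeeping that ties together Lemmas \ref{l:prop1_REL}, \ref{l:prop3_REL} and \ref{l:coinHOandHOIO}, and in making the argument go through on open as well as closed terms.

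First I would record the chain on closed processes. Lemma \ref{l:prop1_REL} gives $\SHOB \subseteq \SCTXB \subseteq \SNRB$, Lemma \ref{l:prop3_REL} gives $\SNRB \subseteq \OSNRB \subseteq \SHOIOB$, and Lemma \ref{l:coinHOandHOIO} gives $\SHOIOB = \SHOB$. Concatenating these, $\SHOB \subseteq \SCTXB \subseteq \SNRB \subseteq \OSNRB \subseteq \SHOIOB = \SHOB$, so all the containments collapse to equalities and $\SHOB = \SCTXB = \SNRB = \OSNRB = \SHOIOB$ holds on closed processes. In particular the three relations named in the statement coincide on closed processes.

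Next I would lift the result to open processes. Since $\SHOB$ and $\SCTXB$ are both extended to open terms by exactly the same recipe, namely closure under arbitrary closing process-substitutions, and they agree on every closed instance by the previous step, they agree on open terms as well; and by Lemma \ref{l:coinHOandHOIO}, which is stated for all \HOmp\ terms rather than only closed ones, $\SHOB$ and $\SHOIOB$ already coincide on open terms. It remains to place $\OSNRB$ into the picture on open terms. The inclusion $\OSNRB \subseteq \SHOIOB$ on open terms is Lemma \ref{l:prop3_REL}(2). For the converse I would start from $P \SHOIOB Q$, instantiate the free variables by the triggers $\mtriggerName$, $\mtriggerDName$, $\mtriggerDdName$ appropriate to their types (this is exactly the recipe by which $\SNRB$ and $\OSNRB$ are extended to open terms), and use the substitution-preservation Lemmas \ref{l:name-sub-preserving} and \ref{l:process-sub-preserving} together with the trigger-transition correspondences of Lemma \ref{l:lts_prop1}(3)--(5) to transfer the closed coincidence back to the open level.

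The main obstacle I expect is precisely this last reconciliation of the differing conventions for extending the bisimilarities to open terms: $\SHOB$ and $\SCTXB$ use plain closing substitutions, $\SNRB$ uses the type-indexed triggers, while $\OSNRB$ and $\SHOIOB$ are defined directly over open terms through their open-form clauses. Making the converse inclusion rigorous requires checking that instantiating a variable with its trigger neither creates nor destroys observable behaviour beyond what the open-form clauses already account for, and this is where Lemma \ref{l:lts_prop1}(3)--(5) and the substitution-preservation lemmas carry the technical weight; the remaining details (symmetry of the relations, and the freshness of the trigger names) are routine.
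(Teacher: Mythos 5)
Your proposal is correct and follows essentially the same route as the paper: the proof is exactly the circular chain $\SHOIOB \Rightarrow \SHOB \Rightarrow \SCTXB \Rightarrow \SNRB \Rightarrow \OSNRB \Rightarrow \SHOIOB$ assembled from Lemmas \ref{l:coinHOandHOIO}, \ref{l:prop1_REL} and \ref{l:prop3_REL}, which forces all inclusions to be equalities. Your additional care about reconciling the different open-term extensions is reasonable but goes beyond what the paper itself records, since it simply invokes the cited lemmas as already holding on (suitably extended) open processes.
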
\vspace*{-.3cm}
\begin{proof}[Proof of Lemma \ref{l:prop5_REL}]

\iftoggle{mNotsON}{%
  % use m notes
} {%
  % no m notes
\TODOM{ TO REFER ... and ADAPT / REWRITE ...DN}
}%toggle m notes

\annotate{ 
The following circular implications prove this lemma. 
\[
\begin{array}{rl}
& \SHOIOB \\
\,\st{\mbox{\scriptsize Lemma \ref{l:coinHOandHOIO}}}\, & \SHOB \\
\,\st{\mbox{\scriptsize Lemma \ref{l:prop1_REL}}}\, & \SCTXB \\
\,\st{\mbox{\scriptsize Lemma \ref{l:prop2_REL}}}\, & \SNRB \\
\,\st{\mbox{\scriptsize Lemma \ref{l:prop3_REL}(1)}}\, & \OSNRB \\
%\,\st{\mbox{\scriptsize Lemma \ref{l:prop4_REL}}}\, & \SHOIOB \\
\,\st{\mbox{\scriptsize Lemma \ref{l:prop3_REL}(2)}}\, & \SHOIOB \\
\end{array}
\]
%Proof. This is an immediate consequence of previous results. In fact, we have proved (on open processes) the followingimplications:1. ∼o
%IO implies ∼. HO (Lemma 4.14).2. ∼.
%HO implies ∼. CON (Lemma 4.16);3. ∼.
%CON implies ∼. NOR (Lemma 4.17);4. ∼.
%NOR implies ∼o NOR (Lemma 4.18);5. ∼o
%NOR implies ∼o IO (Lemma 4.19). 
It can be shown in the following diagram.
 
\[ %\nsepvs{.2} 
\xymatrix@C=30pt{
 % & P \ar@{.}[rr]|-{\WCB}\ar@{->}[d]_{\overline{a}A}  &  & Q \ar@{=>}[d]^{\overline{a}B}  &  \\
 \SHOIOB 
 \ar@{.>}[rr]^{\mbox{\scriptsize Lemma \ref{l:coinHOandHOIO}}} 
 \ar@/_2.8pc/@{<.}[0,8]|{\mbox{\scriptsize Lemma \ref{l:prop4_REL}}} 
 %or put the literals over the doted line: \ar@/_2.8pc/@{<.}[0,8]^{\mbox{\scriptsize Lemma \ref{l:prop4_REL}}} 
 &&  \SHOB \ar@{.>}[rr]^{\mbox{\scriptsize Lemma \ref{l:prop1_REL}}} 
 && \SCTXB \ar@{.>}[rr]^{\mbox{\scriptsize Lemma \ref{l:prop2_REL}}}
 &&  \SNRB \ar@{.>}[rr]^{\mbox{\scriptsize Lemma \ref{l:prop3_REL}}} 
 && \OSNRB   \\
%& & \xiv{Q'\para (m)(E[\overline{m}.0] \para !m.B)} 
}
\]
}%annotate-end

The following circular implications prove this lemma.
\[ %\nsepvs{.2} 
\xymatrix@C=30pt{
 % & P \ar@{.}[rr]|-{\WCB}\ar@{->}[d]_{\overline{a}A}  &  & Q \ar@{=>}[d]^{\overline{a}B}  &  \\
 \SHOIOB 
 \ar@{.>}[rr]^{\mbox{\scriptsize Lemma \ref{l:coinHOandHOIO}}} 
 \ar@/_2.8pc/@{<.}[0,8]|{\mbox{\scriptsize Lemma \ref{l:prop3_REL}\rc{(2)}}} 
 %or put the literals over the doted line: \ar@/_2.8pc/@{<.}[0,8]^{\mbox{\scriptsize Lemma \ref{l:prop4_REL}}} 
 &&  \SHOB \ar@{.>}[rr]^{\mbox{\scriptsize Lemma \ref{l:prop1_REL}\rc{(1)}}} 
 && \SCTXB \ar@{.>}[rr]^{\mbox{\scriptsize Lemma \ref{l:prop1_REL}\rc{(2)}}}
 &&  \SNRB \ar@{.>}[rr]^{\mbox{\scriptsize Lemma \ref{l:prop3_REL}\rc{(1)}}} 
 && \OSNRB   \\
%& & \xiv{Q'\para (m)(E[\overline{m}.0] \para !m.B)} 
}
\]
\end{proof}
\vspace*{-.3cm}

%\subsection{The main theorem}
%\vspace*{.1cm}
\noindent\textbf{The main theorem}~~ 
From Lemma \ref{l:prop5_REL} and Lemma \ref{l:decidability_hoio_bisi}, we now have the main result of this section.
\begin{theorem}\label{l:prop6f_REL}
All the strong bisimilarities, that is, $\SHOIOB$, $\SHOB$, $\SCTXB$, $\SNRB$, and $\OSNRB$, coincide on open and closed \HOmp\ processes, and are all decidable. 
\end{theorem}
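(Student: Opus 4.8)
The plan is to obtain the theorem as the confluence of two strands already prepared in the preceding development: a chain of mutual inclusions that forces all five relations to collapse into one relation, and the standalone decidability of $\SHOIOB$ which then transfers to the whole family. Since the statement asserts both \emph{coincidence} and \emph{decidability}, I would dispatch these two claims separately and combine them at the end.

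For the coincidence, I would assemble the circular chain of inclusions supplied by Lemmas \ref{l:coinHOandHOIO}, \ref{l:prop1_REL} and \ref{l:prop3_REL}. Concretely, Lemma \ref{l:coinHOandHOIO} gives $\SHOIOB \,=\, \SHOB$; Lemma \ref{l:prop1_REL}(1) gives $\SHOB \,\subseteq\, \SCTXB$; Lemma \ref{l:prop1_REL}(2) gives $\SCTXB \,\subseteq\, \SNRB$; Lemma \ref{l:prop3_REL}(1) gives $\SNRB \,\subseteq\, \OSNRB$; and Lemma \ref{l:prop3_REL}(2) closes the loop with $\OSNRB \,\subseteq\, \SHOIOB$. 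Chaining these yields
\[
\SHOIOB \,=\, \SHOB \,\subseteq\, \SCTXB \,\subseteq\, \SNRB \,\subseteq\, \OSNRB \,\subseteq\, \SHOIOB ,
\]
so every inclusion in the cycle is squeezed into an equality and all five relations coincide. Because each bisimilarity was extended to open terms uniformly (by closure under the appropriate instantiations at the point of its definition), this coincidence holds on both open and closed \HOmp\ processes with no extra bookkeeping.

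For the decidability, I would invoke Lemma \ref{l:decidability_hoio_bisi}, which shows that $\SHOIOB$ is decidable directly by induction on $\pdepth{\cdot}$. Since the coincidence just established identifies each of $\SHOB$, $\SCTXB$, $\SNRB$ and $\OSNRB$ with $\SHOIOB$ as a set of pairs, deciding $P \,\SNRB\, Q$ (and likewise for the others) reduces verbatim to deciding $P \,\SHOIOB\, Q$. Thus decidability propagates to the entire collection without any further argument, and the two strands combine into the theorem.

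The theorem itself is therefore essentially a corollary, and I would not expect the assembly to present genuine difficulty; the real weight of the argument lives inside the cited lemmas, so the main obstacle is upstream rather than here. On the decidability side, Lemma \ref{l:decidability_hoio_bisi} must rely on $\pdepth{\cdot}$ being invariant under $\SCongru$ (Lemma \ref{l:depth_struc_sim}) yet strictly decreasing across every checking clause, which is exactly what the application case $\pdepth{(\lrangle{Y}P_3)\lrangle{P_2}} = \pdepth{P_3\hosub{P_2}{Y}}$ is engineered to guarantee; getting this metric to dovetail with the structural-congruence-driven application rule is the delicate point that makes the inductive checking terminate. On the coincidence side, both endpoints of the cycle, $\SHOIOB = \SHOB$ and $\OSNRB \subseteq \SHOIOB$, ultimately rest on the substitution-preservation Lemmas \ref{l:name-sub-preserving} and \ref{l:process-sub-preserving}, and the process-substitution case is the genuinely new hurdle relative to HOcore: a process variable may now be instantiated by an abstraction that is later applied, which blocks the direct argument available without parameterization and forces an induction on term size instead. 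Hence I would regard the present theorem as the easy capstone, with the substantive effort already discharged in establishing decidability and substitution preservation.
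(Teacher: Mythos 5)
Your proof is correct and follows the paper's own route exactly: the coincidence is obtained from the circular chain of inclusions via Lemmas \ref{l:coinHOandHOIO}, \ref{l:prop1_REL} and \ref{l:prop3_REL} (which is precisely the content and proof of Lemma \ref{l:prop5_REL}), and decidability then transfers from $\SHOIOB$ via Lemma \ref{l:decidability_hoio_bisi}. Your closing remarks on where the real difficulty lies (the depth metric and the substitution-preservation lemmas) accurately reflect the paper's emphasis as well.
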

\annotate{
\begin{proof}
\iftoggle{mNotsON}{%
  % use m notes
} {%
  % no m notes
\TODOM{ TO REFER ... and ADAPT / REWRITE ...DN}
}%toggle m notes

Directly from Lemma \ref{l:prop5_REL}. 
%4.21. All complete combinations of the HOcore bisimulation clauses coincide, and are decidable.
%Proof. In Lemma 4.20, we have proved that the least demanding combination (∼o IO) coincides with the most demandingones (∼. HO and ∼. CON). Decidability then follows from Lemma 4.11. 
\end{proof}

The decidability result for the strong bisimilarities follows from Theorem \ref{l:prop6f_REL} and Lemma \ref{l:decidability_hoio_bisi}.
\begin{corollary}
All the strong bisimilarities, i.e., $\SHOIOB$, $\SHOB$, $\SCTXB$, $\SNRB$, and $\OSNRB$, are decidable.
\end{corollary}

}%annotate-end

%---------------------------
% Local Variables:
% mode: LaTeX
% TeX-master: "main.tex"
% End:

%\input{decBig.tex}
%\input{decSmall.tex}
%% !TEX root = ./main.tex
\vspace*{-.6cm}
\section{Axiomatization}\label{s:axiomatization}
\vspace*{-.2cm}

In this section, we make an axiom system for the strong bisimilarities based on the decidability result. For simplicity, we denote by $\SB$ the strong bisimilarity, since all the strong bisimilarities coincide. Basically, the equation set of the axiom system is composed of the extended structural congruence. Compared with the setting without parameterization \cite{LPSS10a}, we have extra equations describing the application operation. We use a similar approach to prove the correctness of the axiom system.

%\subsection{Overview}
% \vspace*{.1cm}
% \noindent\textbf{Overview}~~ 
%some MACROS
%\newcommand{\normalform}[1]{\ensuremath{{\rm nf}(#1)} }
%
{
\annotate{A complete axiomatization for original model without parameterization is given in \cite{LPSS10a}.}

The axiom system of \cite{LPSS10a} consists of the basic structural congruence laws and an extended distribution law
$DIS$: $a(x).\left(P\para \prod_{1}^{k-1}a(x).P\right) = \prod_{1}^{k}a(x).P$.
\annotate{
\begin{tabular}{cl}
\\
$(DIS)$ &\ $a(x).\left(P\para \prod_{1}^{k-1}a(x).P\right) = \prod_{1}^{k}a(x).P$
\\\\
\end{tabular}
}%annotate-end
Recall that the rules for application are modeled as a part of the structural congruence.  \xxyrmcolor{To admit parameterization, we introduce the following two more laws $APP1$ and $APP2$: % for structural congruence.
}
$(\langle X\rangle P)\langle Q\rangle = P\{Q/X\}$, 
$(\langle x\rangle P)\langle m\rangle = P\{m/x\}$.
\annotate{
\begin{tabular}{cl}
\\
$(APP1)$ &\ $(\langle X\rangle P)\langle Q\rangle = P\{Q/X\}$
  \\
$(APP2)$ &\ $(\langle x\rangle P)\langle m\rangle = P\{m/x\}$
\\\\
\end{tabular}
}%annotate-end
We will show that the basic laws for the structural congruence, together with law $DIS$ and moreover laws $APP1$ and $APP2$, are sufficient for a complete axiom system. More specifically, we first prove that any term $P$ has a unique \emph{prime} decomposition $\prod^{k}_{i{=}1} P_i$, and then that any term can be simplified to a normal form (denoted as $\normalform{P}$) \xxyrmcolor{by the laws above}. %law $DIS$ from left to right. 
% By soundness of $DIS$, $P\sim \normalform{P}$. Then we prove that for any $P$ and $Q$, $P \sim Q$ iff $\normalform{P}\SCongru \normalform{Q}$.
\xxyrmcolor{By the soundness of the laws, we infer $P\sim \normalform{P}$.} Finally we prove that for any $P$ and $Q$, $P \sim Q$ if and only if $\normalform{P} \SCongru \normalform{Q}$.
}

\eat{\scriptsize 
Outline of several key points (to see if reusable...):
\begin{enumerate}
\item Unique decomposition:  Any term $P$ has a \emph{prime} decomposition $\prod^{k}_{i{=}1} P_i$.
\item The distribution law:
\[a(X).(P\para \prod^{k{-}1}_{i{=}1} a(X)P) = \prod^{k}_{i{=}1} a(X)P
\]
\item Normal form: a term form that cannot be further simplified using the extended structural congruence (i.e., the standard structural congruence extended with the distribution law (from left to right))
\item For any $P$, $P$ is strongly bisimilar to its normal form. That is, $P \sim \normalform{P}$.
\item For any $P$ and $Q$ in normal form, $P \sim Q$ iff $P\SCongru Q$.
\item For any $P$ and $Q$, $P \sim Q$ iff $\normalform{P}\SCongru \normalform{Q}$.
\item ...
\item ...
\end{enumerate}
}

%\subsection{Completeness proof}
% \vspace*{.1cm}
% \noindent\textbf{Completeness proof}~~ 

{Let $\mathcal{A}$ be the axiom system containing $DIS$, $APP1$, $APP2$ and the commutative monoid laws for parallel composition. %as given in Section \ref{s:preliminary}. 
%We reuse $\SCongru$ for the equality that can be inferred by the axiom system. 
We will prove the completeness of $\mathcal{A}$ in the remainder of this section. We start by the cancellation property.} The point of proving this property is to deem $\SB$ as $\SHOIOB$ \cite{LPSS10a}, and the most involved cases are those concerning the abstractions. 
\iftoggle{appendixON}{%
% use appendix
\rc{We provide the details in Appendix \ref{a:proofs_2}.}
} {%
% no appendix, use online
\xxyrmcolor{We provide the details in \cite{XZ21L}.}
}%toggle appendixON
\begin{proposition}[Cancellation]\label{pps:cancellation}
For all $P,Q$ and $R$, if $P\para R \SB Q \para R$ then $P\SB Q$.
\end{proposition}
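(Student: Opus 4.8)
The plan is to prove the cancellation property $P \para R \SB Q \para R \implies P \SB Q$ by reducing $\SB$ to $\SHOIOB$ (legitimate since Theorem~\ref{l:prop6f_REL} establishes all strong bisimilarities coincide) and then running an induction. Since $\SHOIOB$ is an open-style bisimilarity whose clauses inspect the structural-congruence shape of a term directly, the natural inductive measure is the term depth $\pdepth{\cdot}$ of Definition~\ref{def:depth}, or alternatively $\pdepth{R}$ combined with a size measure. The key structural fact I would lean on is that $\SHOIOB$ preserves depth (Lemma~\ref{l:depth_struc_sim}), so $\pdepth{P \para R} = \pdepth{Q \para R}$, which already forces $\pdepth{P} = \pdepth{Q}$; this rules out degenerate mismatches and lets the induction descend cleanly.

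First I would set up an induction on $\pdepth{R}$ (or on a lexicographic pair combining $\pdepth{R}$ with the total size). The base case $R \SCongru 0$ is immediate. For the inductive step, I would analyze the top-level structure of $R$ up to $\SCongru$ after exhausting applications, so that $R$ is a parallel composition of prime components: inputs $a(X).R'$, outputs $\overline{a}A$, bare variables $X$, guarded applications $X\lrangle{A}$, and $X\lrangle{d}$. The idea is to pick one such component of $R$ and show it can be ``cancelled'' from both sides. Concretely, I would use the defining clauses of $\SHOIOB$ — particularly the open-form clauses (6), (7), (8) dealing with $X$, $X\lrangle{A}$, $X\lrangle{d}$, and the input/output clauses (4), (5) — to argue that whatever move $R$'s component contributes to $P\para R$ must be matched by the \emph{same} component on the $Q\para R$ side, since $R$ is syntactically identical on both sides. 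Peeling off that component yields $P \para R'' \SHOIOB Q \para R''$ with $\pdepth{R''} < \pdepth{R}$, at which point the induction hypothesis applies.

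The genuinely hard part, and where the proof departs from \cite{LPSS10a}, is the treatment of the \emph{abstraction} cases and of outputs carrying abstractions. When $R$ (or $P$, $Q$) can output an abstraction, clause (4) of $\SHOIOB$ demands matching both the transmitted abstraction bodies and the residuals under $\SHOIOB$; because a process variable can now be instantiated by a process- or name-abstraction, cancelling a component may not leave a strictly smaller term in the naive size order, and one risks circularity exactly as flagged in the remark preceding Lemma~\ref{l:process-sub-preserving}. To handle this I would invoke the substitution-preservation lemmas (Lemmas~\ref{l:name-sub-preserving} and \ref{l:process-sub-preserving}) together with Corollary~\ref{cor:shoiob-closed-abstr} to push the cancellation argument through abstraction boundaries: if the outputs are themselves abstractions $\lrangle{Y}A$ versus $\lrangle{Y}B$, clauses (2)--(3) reduce the problem to $A \SHOIOB B$ on strictly smaller depth, so the structural measure $\pdepth{\cdot}$ (which is well-founded even through application, by its definition via $\pdepth{P_3\hosub{P_2}{Y}}$) remains the right ordering and avoids the circularity that blocks a crude size induction.

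Finally, I would assemble these cases into a single relation and verify it is a strong HO-IO bisimulation, possibly using the up-to-$\SCongru$ technique (Lemma~\ref{l:up-to_correctness}) to absorb the structural-congruence rearrangements that arise when commuting components of the parallel composition. The candidate relation would be something like
\[
\R = \{(P,Q) \mid P\para R \SHOIOB Q\para R \text{ for some } R\},
\]
and the bulk of the work is checking each clause of Definition~\ref{d:shoio_bisi} for $\R$, with the output-of-abstraction clause being the delicate one. I expect the main obstacle to be ensuring that when $P$ fires an action, the matching $Q$-action can be chosen so that the residual pair is again of the form $(P'\para R, Q'\para R)$ with the \emph{same} $R$ (up to $\SCongru$), rather than having $R$'s contribution entangled with $P$'s; controlling this entanglement — and in particular ruling out that $R$ alone supplies a transition that $P$ on the other side cannot — is precisely where depth-preservation (Lemma~\ref{l:depth_struc_sim}) and the guardedness/structural lemmas (Lemmas~\ref{l:lts_prop1}, \ref{l:lts_prop2}) do the decisive work.
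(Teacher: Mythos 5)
Your high-level architecture is the right one and matches the paper's: identify $\SB$ with $\SHOIOB$ (legitimate by the coincidence theorem), argue by induction on depth, and single out the abstraction cases as the ones needing the new machinery (Lemmas \ref{l:name-sub-preserving} and \ref{l:process-sub-preserving}, Corollary \ref{cor:shoiob-closed-abstr}). But there is a genuine gap at exactly the point you yourself flag as ``the main obstacle'': the cross-matching case, in which a top-level component or transition of $P$ inside $P\para R$ is answered on the other side by a component of the shared $R$ rather than by one of $Q$. This case is essentially the entire mathematical content of cancellation --- the rest is bookkeeping --- and you do not resolve it; you only assert that depth preservation (Lemma \ref{l:depth_struc_sim}) and the guardedness/structural lemmas ``do the decisive work''. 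They do not, at least not by themselves: if $P\st{\overline{a}A}P'$ is answered by $R\st{\overline{a}B}R'$ on the right, you obtain $P'\para R \,\SHOIOB\, Q\para R'$ with a \emph{different} residual of $R$ on each side, and depth preservation only yields $\pdepth{P'}-\pdepth{R'}=\pdepth{Q}-\pdepth{R}$, which gives the induction nothing to descend on. Closing this case requires an extra argument in the Milner--Moller style, e.g.\ a minimal-counterexample or maximal-depth choice of the transition to be matched, or a counting argument showing that the multiset of top-level components in each $\SHOIOB$-class has the same cardinality on both sides so that a match inside $Q$ can always be selected. None of this appears in your plan.

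A second, lesser problem is that you hedge between two different skeletons --- an induction on $\pdepth{R}$ that peels prime components off $R$, versus exhibiting $\R=\{(P,Q)\mid \exists R.\ P\para R\,\SHOIOB\, Q\para R\}$ as a strong HO-IO bisimulation up to $\SCongru$ --- and carry neither through; the second in particular collapses immediately onto the cross-matching problem when one tries to verify clauses (4)--(8) of Definition \ref{d:shoio_bisi} for $\R$. Your treatment of the abstraction cases (descend through $\lrangle{Y}A$ versus $\lrangle{Y}B$ via clauses (2)--(3), relying on the well-foundedness of $\pdepth{\cdot}$ through application) is sensible and consistent with the paper's remark that these are the most involved cases, but it cannot substitute for the missing core argument.
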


%\begin{definition}[Prime decomposition]
%<<<<<<< HEAD

%=======
The notion of prime processes is due to \cite{MM93,LPSS10a}. 
A process $P$ is \emph{prime} if $P\not\SB 0$ and $P\SB P_1 \para P_2$ implies $P_1 \SB 0$ or $P_2 \SB 0$.
%\fbox{\rc{\Large $P_1 \SB 0$ or $P_2 \SB 0$ ? $P_1 = 0$ or $P_2= 0$?}}. 
If $P\SB \prod_{i=1}^{n}P_i$ where each $P_i$ is prime, we call $\prod_{i=1}^{n}P_i$ is a \emph{prime decomposition} of $P$.
%>>>>>>> 5e03c9bfe9222aa8755a35e2d2200b37dee9ffc1
%\end{definition}
The following proposition states that for any process, there is a unique prime decomposition up to the strong bisimilarity and permutation of indices.
Instantiating $\SB$ as $\SHOIOB$, one can prove this proposition by induction on the size of the given process. 
\iftoggle{appendixON}{%
% use appendix
\rc{We give the proof in Appendix \ref{a:proofs_2}.}
} {%
% no appendix, use online
\xxyrmcolor{We give the proof in \cite{XZ21L}.}
}%toggle appendixON
\begin{proposition}[Unique prime decomposition]\label{prop:UniDecomp}
Given a process $P$, if there are two prime decompositions $P\SB\prod_{i=1}^{n}P_i$ and $P\SB\prod_{j=1}^{m}Q_j$, then $n=m$ and there is a permutation $\sigma:\{1,2,\ldots,n\}\rightarrow\{1,2,\ldots,n\}$, such that $P_i \sim Q_{\sigma(i)}$ for each $i \in \{1,2,\ldots,n\}$.
\end{proposition}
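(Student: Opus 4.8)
The plan is to prove the unique prime decomposition by instantiating $\SB$ as $\SHOIOB$ (legitimate since all strong bisimilarities coincide) and proceeding by strong induction on $\size{P}$, the size of the process. The existence of some prime decomposition is immediate: if $P \SB 0$ the empty product works, and otherwise one repeatedly factors any non-prime component $P_i \SB R_1 \para R_2$ with both $R_j \not\SB 0$, which strictly decreases the size of the factors and hence must terminate. So the real content is uniqueness.

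For uniqueness I would assume $P \SB \prod_{i=1}^n P_i \SB \prod_{j=1}^m Q_j$ with all $P_i, Q_j$ prime, and argue by induction on $n$ (or on $\size{P}$). The base case $n = 0$ forces $P \SB 0$, and since each $Q_j$ is prime (so $Q_j \not\SB 0$) a short argument via the cancellation property (Proposition~\ref{pps:cancellation}) shows $m = 0$. For the inductive step, I would pick a prime factor, say $P_1$, and aim to show it is matched (up to $\SB$) by some $Q_{j_0}$ on the other side. The key tool is Cancellation: once I identify $P_1 \SB Q_{j_0}$, I can cancel them — from $P_1 \para \prod_{i\geq 2} P_i \SB Q_{j_0} \para \prod_{j \neq j_0} Q_j$ and $P_1 \SB Q_{j_0}$, cancellation yields $\prod_{i \geq 2} P_i \SB \prod_{j \neq j_0} Q_j$, a product of primes with strictly fewer factors, to which the induction hypothesis applies and produces the desired bijection. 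Assembling $P_1 \mapsto Q_{j_0}$ with that bijection gives the permutation $\sigma$.

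The main obstacle is precisely the matching step: showing that a prime factor $P_1$ of the left product must coincide (up to $\SB$) with some prime factor $Q_{j_0}$ of the right product. This is the classical crux of unique-decomposition arguments (going back to Milner--Moller \cite{MM93}), and in the higher-order parameterized setting it is subtler because the components can perform higher-order outputs, inputs, and the open-style moves involving free variables and applications $X\lrangle{A}$, $X\lrangle{d}$. The natural strategy is to use the open/IO-style characterization of $\SHOIOB$ together with a depth or size measure: consider a ``maximal'' or ``minimal'' component relative to some well-chosen measure (for instance $\pdepth{\cdot}$ from Definition~\ref{def:depth}, invariant under $\SB$ by Lemma~\ref{l:depth_struc_sim}), and use a transition of $P_1$ that cannot be absorbed except by a single factor on the other side. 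Because $\SHOIOB$ observes the open-form clauses directly (clauses on $X \para P'$, $X\lrangle{A}\para P'$, $X\lrangle{d}\para P'$) and outputs with the emitted term compared directly, a transition of $\prod_j Q_j$ decomposes as a transition of exactly one $Q_j$ in parallel with the rest; primality of $P_1$ then prevents the matching residual from splitting, forcing $P_1 \SB Q_{j_0}$.

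Concretely I would set up the induction so that the measure decreases: after cancelling the matched prime pair, the remaining products have total size strictly smaller, so the induction hypothesis delivers the permutation on the $n-1$ remaining factors, and I extend it by $\sigma(1) = j_0$. Throughout, Cancellation is the workhorse and the congruence/substitution-preservation properties of $\SHOIOB$ (Lemmas~\ref{l:congruence}, \ref{l:name-sub-preserving}, \ref{l:process-sub-preserving}) guarantee that cancelling and recombining factors is sound with respect to $\SB$. I expect the bookkeeping for the three flavours of open-form transitions (plain variable, process application, name application) and for the abstraction cases to be the fiddly part, but each follows the same template once the matching lemma is in hand.
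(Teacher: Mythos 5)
Your skeleton --- instantiate $\SB$ as $\SHOIOB$, cancel a matched prime pair via Proposition~\ref{pps:cancellation}, and recurse --- is exactly the paper's outer loop. But the step you yourself flag as the crux, namely that some $P_{i}$ must be bisimilar to some $Q_{j}$, is left as a strategy sketch, and the mechanism you propose for it does not work as stated. You claim that ``a transition of $\prod_j Q_j$ decomposes as a transition of exactly one $Q_j$ in parallel with the rest; primality of $P_1$ then prevents the matching residual from splitting, forcing $P_1 \SB Q_{j_0}$.'' Primality of $P_1$ gives you no control over the residual: after $P_1 = m(X).R$ fires, the derivative is $R \para \prod_{i\geq 2}P_i$, which in general \emph{does} split into many primes, and nothing in your argument connects those primes back to a single $Q_{j_0}$. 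The actual argument (the paper proceeds by contradiction) is: assume $P_i \nsim Q_j$ for all $i,j$, pick a component of \emph{minimal depth} $\pdepth{\cdot}$ among all components on both sides, say $P_1$, and case-analyze its form. For $P_1 = X$, $X\lrangle{A}$ or $X\lrangle{d}$ the open-form clauses of $\SHOIOB$ immediately force a matching component on the other side, a contradiction. For $P_1 = m(X).R$ (and symmetrically for output), fire the prefix, apply the induction hypothesis to obtain the \emph{unique} prime decomposition $\prod_g R_g \para \prod_{i\geq 2}P_i$ of the derivative, match it against the derivative $T \para \prod_{j\geq 2}Q_j$ of the other side, and then use depth-invariance (Lemma~\ref{l:depth_struc_sim}) to see that $Q_2$ cannot be bisimilar to any $R_g$ (since $\pdepth{R_g} < \pdepth{P_1} \leq \pdepth{Q_2}$), hence $Q_2 \SB P_i$ for some $i\geq 2$ --- contradicting the no-match assumption. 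None of this appears in your proposal, and it is the entire content of the matching step.

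A second, related gap: you hedge between induction ``on $n$'' and ``on $\size{P}$''. Induction on the number of factors suffices for the cancellation recursion but is useless for the matching step, because the derivative $R \para \prod_{i\geq 2}P_i$ may have \emph{more} prime factors than $P$. The induction must be on a size or depth measure that strictly decreases under prefix-firing (the paper uses $\pdepth{P}$), since the induction hypothesis is applied both to the decomposition of $R$ and to the uniqueness of the decomposition of the derivative. Until you fix the induction measure and supply the minimal-depth contradiction argument, the proof is incomplete.
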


%<<<<<<< HEAD
\eat{
\begin{proof}
%The proof is also similar to the one in \cite{LPSS10a} (or \cite{MM93}).

We proceed by induction on $\pdepth{P}$.
\begin{itemize}
	\item If some $P_i \SB Q_j$ (w.l.o.g., assume that $P_1 \SB Q_1$), then we have the following two prime decompositions for $P$: $P\SB P_1 \para \prod_{i=2}^{k}P_i$ and $P\SB Q_1 \para \prod_{j=2}^{l}Q_j$. By Proposition \ref{pps:cancellation}, we have $\prod_{i=2}^{k}P_i \SB \prod_{j=2}^{l}Q_j$. By induction hypothesis, the two prime decompositions $\prod_{i=2}^{k}P_i $ and $\prod_{j=2}^{l}Q_j$ are identical up to $\SB$ and permutation of indices. Thus $\prod_{i=1}^{k}P_i $ and $\prod_{j=1}^{l}Q_j$ are also identical.
	\item Assume that for every $i,j$ we have $P_i \nsim Q_j$.
	\begin{itemize}
		\item If either $k=1$ or $l=1$, then $k=l=1$ and $P_1=Q_1$ by the definition of prime process. This is a contradiction.
		\item If $k,l \geq 2$, w.l.o.g., we can assume that $\pdepth{P_1} \leq \pdepth{P_i}$ for any $1\leq i \leq k$ and $\pdepth{P_1} \leq \pdepth{Q_j}$ for any $1\leq j \leq l$. 
		\begin{itemize}
			\item If $P_1 = X$, as $P \SB \prod_{j=1}^{l}Q_j$, then one of $Q_j$ must be $X$, a contradiction.
			\item \rc{If $P_1 = X\lrangle{A}$, as $P \SB \prod_{j=1}^{l}Q_j$, then one of $Q_j$ must be $X\lrangle{B}$ with $B \SB A$, we thus have $X\lrangle{A}\SB X\lrangle{B}$, a contradiction.}
			\item \rc{If $P_1 = X\lrangle{d}$, as $P \SB \prod_{j=1}^{l}Q_j$, then one of $Q_j$ must be $X\lrangle{d}$, a contradiction.}
			\item If $P_1=m(X).R$. Since $\pdepth{R} < \pdepth{P}$, by induction hypothesis, $R$ has a unique prime decomposition $R = \prod_{g=1}^{h}R_g$. We have 
			$P= \prod_{i=2}^{k}P_i  \para m(x).(\prod_{g=1}^{h}R_g) \st{m(X)}P'$
			with unique prime decomposition $P'\sim\prod_{g=1}^{h}R_g \para \prod_{i=2}^{k}P_i$. Since $P\SB\prod_{j=1}^{l}Q_j$, w.l.o.g., the corresponding transition is $\prod_{j=1}^{l}Q_j\st{m(X)}T \para \prod_{j=2}^{l}Q_j \sim P'$. By induction hypothesis, the prime decomposition of $T \para \prod_{j=2}^{l}Q_j$ should be $\prod_{g=1}^{h}R_g \para \prod_{i=2}^{k}P_i$. As $Q_2$ is prime, it must be equal with a process in $\bigcup_{g=1}^{h}R_g \cup \bigcup_{i=2}^{k}P_i$. Since $\pdepth{R_g}<\pdepth{Q_2}$ for any $1\leq g \leq h$, by Lemma \ref{l:depth_sim}, $R_g \nsim Q_2$ for any $1\leq g \leq h$, thus $Q_2 \sim P_i$ for some $2\leq i \leq k$, a contradiction with the assumption that $P_i \nsim Q_j$ for every $i,j$.
			\item If $P_1=\overline{m}(R)$. Similar to the last case. %with the case of $P_1=m(X).R$.
		\end{itemize}
	\end{itemize}
\end{itemize}
\end{proof}
\sepp
%=======
%>>>>>>> 5e03c9bfe9222aa8755a35e2d2200b37dee9ffc1
}

We write $P \rewriteDIS Q$ if there are $P'$ and $Q'$ such that $P\SE P'$, $Q\SE Q'$, and $Q'$ can be obtained from $P'$ by rewriting a {subterm} of $P'$ by laws $DIS$, {$APP1$, $APP2$} from left to right.
A process $P$ is in \emph{normal form} if it cannot be simplified by using $\rewriteDIS$. Any process $P$ has a unique normal form up to $\SE$, denoted as $\normalform{P}$.
It is not hard to derive the following property.
\begin{lemma}\label{lm:DISsoundness}%[Soundness of $DIS$]
If $P \rewriteDIS Q$, then $P \SB Q$. For any $P$, $P \SB \normalform{P}$.
\end{lemma}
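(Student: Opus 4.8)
The plan is to derive both claims from three ingredients: that $\SCongru$ is contained in $\SB$, that $\SB$ is a congruence, and that each of the three rewriting laws relates $\SB$-equivalent terms. First I would record that $\SCongru \subseteq \SB$. Working under the standing convention that applications are fully contracted, $\SCongru$ reduces to the commutative-monoid laws for $\para$ (associativity, commutativity, and $0$ as unit); these manifestly preserve the transition relation and all the tests of Definition~\ref{d:shoio_bisi} (non-/process-/name-abstraction status and the open-form decompositions), so $\SCongru$ is itself a strong HO-IO bisimulation and hence $\SCongru \subseteq \SHOIOB = \SB$ by Theorem~\ref{l:prop6f_REL}. Together with the congruence of $\SB$ (Lemma~\ref{l:congruence}, lifted from the individual operators to arbitrary contexts by structural induction), this sets up the reduction below.

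For the first claim, let $P \rewriteDIS Q$, so that $P \SE P'$, $Q \SE Q'$, and $Q'$ is obtained from $P'$ by rewriting one subterm with $DIS$, $APP1$ or $APP2$ from left to right. From $\SE \subseteq \SB$ we get $P \SB P'$ and $Q \SB Q'$, so by transitivity it suffices to prove $P' \SB Q'$; and since $\SB$ is a congruence it suffices to prove that the bare redex and contractum of each law are $\SB$-equivalent. The laws $APP1$ and $APP2$ are immediate, because $(\lrangle{X}P)\lrangle{Q} \SCongru P\hosub{Q}{X}$ and $(\lrangle{x}P)\lrangle{m} \SCongru P\hosub{m}{x}$ are already defining clauses of $\SCongru$, hence of $\SB$. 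The sole nontrivial obligation is the soundness of $DIS$.

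To establish $a(X).(P \para \prod_{i=1}^{k-1} a(X).P) \SB \prod_{i=1}^{k} a(X).P$, I would take $\mathcal{R}$ to consist of $\SCongru$ together with all pairs $(a(X).(P \para \prod_{i=1}^{k-1} a(X).P),\, \prod_{i=1}^{k} a(X).P)$ for terms $P$ and $k \geq 1$, symmetrised, and check that $\mathcal{R}$ is a strong HO-IO bisimulation up to $\SCongru$ (whose soundness is recorded just after Definition~\ref{d:shoio_bisi}). Both sides of such a pair are non-abstractions built as parallel products of $a$-prefixed terms, so clauses (1)--(4) and the open-form clauses (6)--(8) are vacuous (no top-level output, abstraction, bare variable, or application is exposed), and only the input clause (5) must be discharged. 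The left-hand term has the single transition to $P \para \prod_{i=1}^{k-1} a(X).P$; on the right any of the $k$ identical prefixes may fire, each producing the same derivative up to $\SCongru$. Here the late instantiation style is exactly what makes the match work on the nose, in either direction: the placeholder $X$ stays uninstantiated at the transition, so the two derivatives are literally one term modulo reordering of parallel components and $\alpha$-conversion of the bound placeholders; being $\SCongru$-equal they lie in $\SCongru\mathcal{R}\SCongru$, closing the diagram.

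The second claim then follows by iteration: by the existence of normal forms noted above, $\normalform{P}$ is reached from $P$ in finitely many $\rewriteDIS$ steps, the first claim makes every step $\SB$-preserving, and transitivity of the equivalence $\SB$ gives $P \SB \normalform{P}$. I expect the only real obstacle to be the $DIS$ case---specifically, verifying that the flattening of the nested copies is matched exactly under the late-style input rule and that the up-to-$\SCongru$ layer absorbs both the reshuffling of the parallel copies and the nondeterministic choice of which copy fires; once this is settled, the $APP$ laws and the normal-form statement are routine.
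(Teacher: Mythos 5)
Your proof is correct, and the underlying idea is the same as the paper's---exhibit a (strong HO-IO) bisimulation witnessing the soundness of the rewrite laws---but the two arguments are factored differently. The paper takes the single candidate relation $\mathcal{R} = \rewriteDIS \cup \rewriteDIS^{-1} \cup \SE$, which is contextually closed by construction (a $\rewriteDIS$ step rewrites a subterm in an arbitrary position), and verifies directly that $\mathcal{R}$ is a bisimulation, so the case analysis on where the redex sits relative to the action that fires is carried out inside the bisimulation check. You instead push the contextual closure through the congruence lemma (Lemma~\ref{l:congruence}, lifted to arbitrary contexts) and are left with only the bare redex/contractum pairs: $APP1$ and $APP2$ come for free from $\SCongru \subseteq \SB$, and only $DIS$ requires a genuine (small) bisimulation up to $\SCongru$, which you close correctly---under the late-style input rule both derivatives are literally $\SCongru$-equal no matter which of the $k$ copies fires, so the up-to-$\SCongru$ layer absorbs the mismatch. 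Your factorization is arguably cleaner in that it isolates the one nontrivial law, at the cost of an explicit dependency on the congruence lemma; the paper's single-relation version avoids that dependency but buries the same case analysis in the verification that $\mathcal{R}$ is a bisimulation. Both routes give the second claim by the same finite iteration of $\rewriteDIS$ down to $\normalform{P}$.
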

\annotate{
\begin{proof}[Proof of Lemma \ref{lm:DISsoundness}]
Define relation
 $$\mathcal{R} = \rewriteDIS \cup \rewriteDIS^{-1} \cup \SE$$
The proof is done by showing that $\mathcal{R}$ is a bisimulation.
\end{proof}
}%annotate-end

\annotate{Following \cite{LPSS10a}, we prove a similar lemma, which is crucial for the completeness proof.}
Next we give a lemma crucial for the completeness proof. Its counterpart in non-parameterization setting was first presented in \cite{LPSS10a}. 
\iftoggle{appendixON}{%
% use appendix
\rc{The proof of Lemma \ref{lm:primeNmform} is put in Appendix \ref{a:proofs_2}.}
} {%
% no appendix, use online
\xxyrmcolor{The proof of Lemma \ref{lm:primeNmform} can be found in \cite{XZ21L}.}
}%toggle appendixON
\begin{lemma}\label{lm:primeNmform}
If $a(X).P{\SB} Q \para Q'$ ($Q,Q'\nsim 0$), then $a(X).P{\SB} \prod_{i=1}^{k}a(X).A$ ($k{>}1$) with $a(X).A$ in normal form.
\end{lemma}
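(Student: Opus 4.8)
The plan is to pass to $\SHOIOB$ via Theorem~\ref{l:prop6f_REL}, so that all matching is done with the open, late-style HO-IO bisimilarity, and then to reduce the decomposition of $a(X).P$ to its \emph{prime} factors. First I would record that the only transition of $a(X).P$ is $a(X).P \st{a(X)} P$: it is a closed non-abstraction with no output and no $\tau$, so any $R$ with $a(X).P \SB R$ is a closed non-abstraction whose sole transitions are $a$-inputs. Since $a(X).P \SB Q\para Q'$ with $Q,Q'\nsim 0$, the term $a(X).P$ is not prime (and $a(X).P\nsim 0$, as it can input); hence by the existence and uniqueness of prime decomposition (Proposition~\ref{prop:UniDecomp}) we have $a(X).P \SB \prod_{i=1}^{N} B_i$ with each $B_i$ prime and $N\geq 2$. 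Each $B_i\nsim 0$ must have a transition (otherwise $B_i \SHOIOB 0$), which is an $a$-input; by Lemma~\ref{l:struc_prop1}(2), $B_i \SCongru a(X).B_{i1}\para B_{i2}$. As $a(X).B_{i1}\nsim 0$ and $B_i$ is prime, $B_{i2}\SB 0$, so $B_i \SB a(X).U_i$ with $U_i := B_{i1}$; being bisimilar to a prime, each $a(X).U_i$ is itself prime. Thus $a(X).P \SB \prod_{i=1}^{N} a(X).U_i$ with $N\geq 2$ and every $a(X).U_i$ prime.

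Next I would extract the residual constraints. Firing the $i$-th prefix of $\prod_i a(X).U_i$ gives the transition to $U_i\para\prod_{i'\neq i}a(X).U_{i'}$, which must be matched by $a(X).P\st{a(X)}P$; since $\SHOIOB$ uses the late clause $P'\,\R\,Q'$ for inputs, this yields, for every $i$, that $P \SB U_i\para\prod_{i'\neq i}a(X).U_{i'}$. For $i_1\neq i_2$, equating the right-hand sides, rearranging by $\SCongru$, and cancelling the common factor $\prod_{i'\neq i_1,i_2}a(X).U_{i'}$ through Proposition~\ref{pps:cancellation} gives
\[
U_{i_1} \para a(X).U_{i_2} \;\SB\; U_{i_2} \para a(X).U_{i_1}.
\]

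Now I would apply unique prime decomposition (Proposition~\ref{prop:UniDecomp}) to this equation: the prime $a(X).U_{i_2}$ on the left must be bisimilar to a prime factor on the right, i.e.\ either to $a(X).U_{i_1}$ or to a prime factor of $U_{i_2}$. The latter is impossible, since any prime factor of $U_{i_2}$ has depth at most $\pdepth{U_{i_2}} < \pdepth{U_{i_2}}+1 = \pdepth{a(X).U_{i_2}}$, whereas bisimilar terms have equal depth (Lemma~\ref{l:depth_struc_sim}); here I use that $\pdepth{a(X).U}=\pdepth{U}+1$ and that depth is additive over $\para$. Hence $a(X).U_{i_1}\SB a(X).U_{i_2}$ for all $i_1,i_2$. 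By congruence (Lemma~\ref{l:congruence}) I replace every factor by $a(X).U_1$, obtaining $a(X).P\SB\prod_{i=1}^{N}a(X).U_1$ with $k := N > 1$. Setting $a(X).A := \normalform{a(X).U_1}$, Lemma~\ref{lm:DISsoundness} gives $a(X).U_1\SB a(X).A$; moreover, since $a(X).U_1$ is prime the top-level $DIS$ pattern cannot apply to it (its image would be a nontrivial parallel composition), so the normal form retains the leading prefix and has the shape $a(X).A$ with $a(X).A$ in normal form. Congruence then delivers $a(X).P \SB \prod_{i=1}^{k}a(X).A$ with $k>1$, as required.

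The main obstacle I anticipate is the structural analysis of the prime factors under parameterization: ruling out top-level applications of the form $X\lrangle{A}$ (excluded by closedness once applications are normalized via $\SCongru$) and abstraction factors (excluded since $a(X).P$ is a non-abstraction and parallel composition is well-formed only on non-abstractions). This is precisely where the argument departs from the parameterization-free case of \cite{LPSS10a}, and it is what makes Lemma~\ref{l:struc_prop1}(2) the right tool to pin down each $B_i$ as a single guarded input prefix. The depth comparison is the other place requiring care, but once $\pdepth{a(X).U}=\pdepth{U}+1$ is combined with additivity of depth, it closes the argument cleanly; the remaining steps are a routine interplay of cancellation, unique decomposition, and congruence.
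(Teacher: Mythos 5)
Your proof is correct and follows essentially the route the paper takes for this lemma (adapted from \cite{LPSS10a}): prime decomposition of $a(X).P$, identification of each prime factor as a single $a$-input prefix via Lemma~\ref{l:struc_prop1}(2) together with the clauses of $\SHOIOB$ that exclude outputs, other inputs, and unguarded variables or applications, pairwise equality of the factors via cancellation (Proposition~\ref{pps:cancellation}) and the depth argument (Lemma~\ref{l:depth_struc_sim}), and primality to block a top-level $DIS$ rewrite so that the normal form keeps the shape $a(X).A$. The only cosmetic slips --- citing Proposition~\ref{prop:UniDecomp} for the \emph{existence} of prime decompositions (which it does not literally state, though the paper assumes it) and describing the bisimilar terms as ``closed'' --- do not affect the argument.
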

\vspace*{-.1cm}

\zwbrmcolor{Now we can prove the completeness of the axiom system $\mathcal{A}$.}
\xxyrmcolor{
	 Basically, the proof of the completeness uses a similar approach as that of \cite{LPSS10a}. The main novelty here is to accommodate the parameterization in the equation system %with the equations for application, 
	 and the corresponding parts in the induction, i.e., those parts concerning the abstraction and application for names and processes.
}
\begin{lemma}[Completeness]\label{lem:complete}
%$P \sim Q$ iff $\normalform{P}\SCongru \normalform{Q}$
For any $P,Q$, if $P\SB Q$ then $\normalform{P}\SCongru \normalform{Q}$.
\end{lemma}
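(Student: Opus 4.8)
The plan is to reduce the completeness statement to a canonicity statement about normal forms and then prove canonicity by induction on depth. By Lemma~\ref{lm:DISsoundness} we have $P \SB \normalform{P}$ and $Q \SB \normalform{Q}$, so from the hypothesis $P \SB Q$ and transitivity we obtain $\normalform{P} \SB \normalform{Q}$. It therefore suffices to prove the claim: \emph{if $P$ and $Q$ are in normal form and $P \SB Q$, then $P \SCongru Q$.} Since all the strong bisimilarities coincide (Theorem~\ref{l:prop6f_REL}), I would take $\SB$ to be $\SHOIOB$ throughout, because the clauses of Definition~\ref{d:shoio_bisi} relate sub-terms directly and so dovetail with structural recursion.

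I would prove the claim by induction on $\pdepth{P}$; this is well-founded against $Q$ as well, since $P \SB Q$ forces $\pdepth{P}=\pdepth{Q}$ by Lemma~\ref{l:depth_struc_sim}. The base cases $\pdepth{P}\in\{0,1\}$ are immediate: $P$ is then $0$, a variable $X$, or an application $X\lrangle{d}$, none of which can act, and the relevant matching clause forces $Q$ to have the same shape. For the inductive step I would first split on whether $P$ is an abstraction. If $P$ is a process- or name-abstraction, clauses (2),(3) of $\SHOIOB$ identify the outer binder of $Q$ and reduce the problem to bisimilarity of the strictly smaller bodies, to which the induction hypothesis applies (subterms of normal forms are again in normal form). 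If $P$ is a non-abstraction, I would pass to its parallel decomposition: writing $\normalform{P}$ as the parallel product of its top-level components with the $\SB 0$ summands discarded, each surviving component is in normal form and, as I argue below, prime, so that $\normalform{P}\SCongru\prod_i P_i$ and, likewise, $\normalform{Q}\SCongru\prod_j Q_j$ are prime decompositions of the same bisimilarity class.

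With these in hand I would invoke the unique prime decomposition (Proposition~\ref{prop:UniDecomp}) to conclude that the two multisets of primes agree, i.e.\ there is a permutation $\sigma$ with $P_i \SB Q_{\sigma(i)}$. It then remains to show $P_i \SCongru Q_{\sigma(i)}$ for each matched pair, after which the commutative-monoid laws give $\normalform{P}\SCongru\prod_i P_i\SCongru\prod_i Q_{\sigma(i)}\SCongru\normalform{Q}$. For a matched prime pair I would do a case analysis on the top-level constructor of $P_i$ (namely $X$, $X\lrangle{d}$, $X\lrangle{A}$, an output $\overline{a}(A)$, or an input $a(X).A$), applying the corresponding clause (4)--(8) of $\SHOIOB$ to the pair $P_i \SB Q_{\sigma(i)}$ to peel off one layer. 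Using Lemma~\ref{l:struc_prop1} to read off the partner's shape, this produces bisimilar sub-terms of strictly smaller depth than $P$ (e.g.\ $A\SB B$ in the $X\lrangle{A}$, output, and input cases), and the induction hypothesis then yields congruence of the sub-terms, hence of $P_i$ and $Q_{\sigma(i)}$.

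The main obstacle is the input case together with its interaction with the distribution law $DIS$. Two bisimilar inputs need not be congruent because an input prefix can be silently ``shared'', as in $a(X).(A\para a(X).A)\SB a(X).A\para a(X).A$; the very role of $DIS$ (and of normalization by $\rewriteDIS$) is to expand such shared inputs into parallel copies, yielding a single canonical representative. I must therefore justify the claim used above that an input-prefixed component of a normal form is genuinely prime. This is exactly the content of Lemma~\ref{lm:primeNmform}: were an input $a(X).P$ bisimilar to a non-trivial product $Q\para Q'$, it would be bisimilar to $\prod_{i=1}^{k} a(X).A$ with $k>1$ and $a(X).A$ in normal form --- a shape that $DIS$ (read left to right) would already have produced, contradicting normality. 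With primality of the input components secured, clause (5) reduces $a(X).A\SB a(X).B$ to $A\SB B$ on strictly smaller bodies, and the induction hypothesis closes the case, completing the proof.
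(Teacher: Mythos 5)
Your proposal follows essentially the same route as the paper: reduce completeness to canonicity of normal forms via Lemma~\ref{lm:DISsoundness}, prove canonicity by induction on depth with a case analysis on the top-level constructor, and use Lemma~\ref{lm:primeNmform} together with unique prime decomposition (Proposition~\ref{prop:UniDecomp}) to handle the parallel case. The one point to make explicit is that primality of an input-prefixed normal form does not follow from Lemma~\ref{lm:primeNmform} alone --- normality is syntactic, so one must first apply the canonicity induction hypothesis to the strictly smaller body to turn the bisimilarity of $A'$ with $B \para \prod_{i=1}^{k-1}a(X).B$ into a structural congruence before the $DIS$-redex contradiction can be drawn, which is why the paper proves primality and canonicity by a simultaneous induction on depth.
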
\vspace*{-.5cm}
\begin{proof}[Proof of Lemma \ref{lem:complete}]
We first show the following two properties simultaneously:
%\begin{enumerate}
%\item 
\noindent\textbf{1.} If $A$ is a prefixed process in normal form, then $A$ is prime;
%\item 
\noindent\textbf{2.} For any \rc{$A,B$} in normal form, $A \SB B$ implies $A \SE B$.
%\end{enumerate}
We proceed by induction on $\pdepth{A}$. The case $\pdepth{A}= 0$ is immediate as the only term of this size is $\Nil$. Suppose the property holds for all $\pdepth{A}< n$ with $n\geq 1$.

%\begin{enumerate}
	%\item 
	\noindent\textbf{(1)} Assume $A$ is of the form $a(X).A'$. Suppose $A$ is not prime, $A \SB P_1\para P_2$. By Lemma \ref{lm:primeNmform}, $A \SB \prod_{i=1}^{k}a(X).B$ with $k>1$ and $a(X).B$ in normal form. Then $A' \SB B \para \prod_{i=1}^{k-1}a(X).B$. By ind. hyp. for 2, we have $A' \SE B \para \prod_{i=1}^{k-1}a(X).B$. Then $A \SE a(X).(B \para \prod_{i=1}^{k-1}a(X).B)$, a contradiction to that $A$ is in normal form.

	%\item 
	\noindent\textbf{(2)} Suppose $A \SB B$, we proceed by a case analysis on the structure of $A$.

	%\begin{itemize}
		%\item 
		\noindent\textbf{$\bullet$} $A$ is $X$. We have that $B$ should be the same variable $X$.
		
		%\item 
		\noindent\textbf{$\bullet$} $A$ is $m(X).P$. Assume $B$ is not prime, $B \SB P_1\para P_2$. By Lemma \ref{lm:primeNmform}, we know $A \SB \prod_{i=1}^{k}a(X).Q$ with $k>1$ and $a(X).Q$ in normal form. But according to property 1, $A$ is prime, a contradiction. We thus have $B$ is $m(X).Q$ with $P \SB Q$. By ind. hyp., $P\SE Q$. We thus have $A\SE B$.
		
		%\item 
		\noindent\textbf{$\bullet$} $A$ is $\overline{m}(Q)$. We have that $B$ is $\overline{m}(Q')$ with $Q\sim Q'$. By ind. hyp., $Q\SE Q'$.  We thus have $A\SE B$.
		
		%\item 
		\noindent\textbf{$\bullet$} $A$ is $\lrangle{X}P$. We have that $B$ is $\lrangle{X}Q$ and $P\SB Q$. By ind. hyp., $P\SE Q$.  We thus have $A\SE B$.
		
		%\item 
		\noindent\textbf{$\bullet$} $A$ is $\zwbrmcolor{X}\lrangle{Q}$. We have that $B$ is $X\lrangle{Q'}$ and $Q\SB Q'$. By ind. hyp., $Q\SE Q'$.  We thus have $A\SE B$.
		
		%\item 
		\noindent\textbf{$\bullet$} $A$ is $\lrangle{x}P$. We have that $B$ is $\lrangle{x}Q$ and $P\SB Q$. By ind. hyp., $P\SE Q$.  We thus have $A\SE B$.
		
		%\item 
		\noindent\textbf{$\bullet$} $A$ is $\zwbrmcolor{X}\lrangle{n}$.  We have that $B$ is $X\lrangle{n}$, and then $A\SE B$.
		
		%\item 
		\noindent\textbf{$\bullet$} $A$ is $\prod_{i=1}^{k}P_i$ with $k>1$ and $P_i$ is not a parallel composition. We discuss over the possible shape of $P_i$. %. 

		%\begin{itemize}
			%\item 
			\noindent~~\textbf{-} If there exists $j$ s.t. $P_j = X$, then $B \SE X \para B'$. Thus $A \SE B$ follows by ind. hyp. on $\prod_{1\leq i \leq k, i\neq j}P_i$ and $B'$. 

			%\item 
			\noindent~~\textbf{-} If there exists $j$ s.t. $P_j = X\lrangle{Q}$, then $B \SE X\lrangle{Q'} \para B'$ with $Q \SB Q'$ and $B'\SB \prod_{1\leq i \leq k, i\neq j}P_i$. By ind. hyp., $A\SE B$.
			
			%\item 
			\noindent~~\textbf{-} If there exists $j$ s.t. $P_j = X\lrangle{n}$, then $B\SE X\lrangle{n} \para B'$ with $B' \SE \prod_{1\leq i \leq k, i\neq j}P_i$. By ind. hyp., $A\SE B$.
			
			%\item 
			\noindent~~\textbf{-} If there exists $j$ s.t. $P_j = \overline{m}(Q)$, then $B$ must contain an output component on the same channel. We thus have $B = \overline{m}(Q')\para B'$ with $Q'\SB Q$ and $B'\SB \prod_{1\leq i \leq k, i\neq j}P_i$. By ind. hyp., $Q'\SE Q$ and $B'\SE \prod_{1\leq i \leq k, i\neq j}P_i$, which implies $A \SE B$.
			
			%\item 
			\noindent~~\textbf{-} The last case is $A = \prod_{i=1}^{k}m_i(X_i).P_i $. According to property 1, each component $ m_i(X_i).P_i$ is prime. Similarly, $B \SE \prod_{i=1}^{l}n_i(Y_i).Q_i $ and each component $ n_i(Y_i).Q_i$ is prime. By Proposition \ref{prop:UniDecomp}, $k=l$ and $ m_i(X_i).P_i\SB n_i(Y_i).Q_i$ for $1\leq i\leq k$ (up to a permutation of indices). By ind. hyp. $P_i \SE Q_i$ for all $i$, which finally implies $A \SE B$.
		%\end{itemize}

	%\end{itemize}
%\end{enumerate}	

Now for $P,Q$, assume $P\SB Q$. Let $A \DEF \normalform{P}$ and $B \DEF \normalform{Q}$. By Lemma \ref{lm:DISsoundness}, $A\SB P\SB Q \SB B$. As $A, B$ are in normal form, have $A\SE B$, and then $ \normalform{P} \SE  \normalform{Q}$, as needed.
\end{proof}

%---------------------------
% Local Variables:
% mode: LaTeX
% TeX-master: "main.tex"
% End:

%% !TEX root = ./main.tex
%\section{Algorithm}\label{s:dec_algorithm}
%In this section, ...

\vspace*{-.6cm}
\section{Algorithm for the bisimilarity checking}\label{s:dec_algorithm}
\vspace*{-.2cm}
In this section, based on the results in the previous sections, we develop an algorithm for checking the strong bisimilarity. % (i.e., any of \zwb{the bisimilarity relations} since they all coincide). 
We utilize the tree approach proposed in \cite{LPSS10a}, i.e., encoding a \HOmp\ process as a tree, normalizing this tree to be compared up-to syntax. Differently now, the tree and the normalization takes parameterization into consideration. %Moreover,  we discard the use of De Bruijn indices for removing variables. Instead, we use a more direct way to unify the choice of variables. The idea is that in the tree of a process, each level is assigned the same variable for every bound variable on that level. This somehow makes the algorithm simpler and more compact, by avoiding introducing extra notations outside the calculus for the process expression. 
%\sepp
%\TODOM{\textbf{Do we really need the \fbox{De Bruijn indices \cite{Bru72}}?? OR we can go a different way, e.g., using the same bound variables all the time? TO SEE} ... }
%\subsection{Algorithm for the bisimilarity checking}
We define a function $\db$ that assigns De Bruijn indices to variables\cite{Bru72,LPSS10a}.
Here the variables include the ones introduced by input prefixed processes, name abstraction and process abstraction. 
Following \cite{LPSS10a}, we introduce the representation of a term by a tree. We write $t[m_1,\ldots,m_k]$ for a tree with label $t$ and subtrees $m_1,\ldots,m_k$.

\begin{definition}[Tree representation]
The tree representation of $P$ is defined inductively as follows.

%\begin{itemize}
	%\item 
	\noindent\textbf{(1)} $\tree(0) = 0[\ ]$,
	%\item 
	\noindent~\textbf{(2)} $\tree(X) = \db(X)[\ ]$,
	%\item 
	\noindent~\textbf{(3)} $\tree(a(X).P) = \labelI{a}[\tree(P)]$,\\
	%\item 
	\noindent~\textbf{(4)} $\tree(\overline{a}(Q)) = \labelO{a}[\tree(Q)]$,
	%\item 
	\noindent~\textbf{(5)} $\tree(x(X).P) = \labelI{\db(x)}[\tree(P)]$,
	%\item 
	\noindent~\textbf{(6)} $\tree(\overline{x}(Q)) = \labelO{\db(x)}[\tree(Q)]$,\\
	%\item 
	\noindent~\textbf{(7)} $\tree(\prod_{i=1}^n P_i) = \textsf{par}[\tree(P_1),\ldots,\tree(P_n)]$,
	%\item 
	\noindent~\textbf{(8)} $\tree(\lrangle{X}P) = \tnAbs[\tree(P)]$,
	%\item 
	\noindent~\textbf{(9)} $\tree(\lrangle{P}Q) = \tnApp[\tree(P), \tree(Q)]$,
	%\item 
	\noindent~\textbf{(10)} $\tree(\lrangle{x}P) = \tnAbs[\tree(P)]$,
	%\item 
	\noindent~\textbf{(11)} $\tree(\lrangle{P}n) = \tnApp[\tree(P), n[\ ]]$.
%\end{itemize}
\end{definition}

\vspace*{-1mm}
The algorithm deciding the strong bisimilarity depends on the following 3 normalization steps: \\
\noindent\textbf{Normalization}:  
%\paragraph{Normalization Step 1.} 
\noindent\textbf{(1)}~ 
In the first step, \xxyrmcolor{the term is rewritten by the application} rules $APP1$, $APP2$ if possible.
%\paragraph{Normalization Step 2.}
\noindent\textbf{(2)}~ 
The second step focuses on the normalization of parallel composition. W.l.o.g., we can assume that the children of parallel composition nodes are not parallel composition nodes. After this step, every parallel composition node has at least two sorted child nodes, and none of them is $0$. 
%\paragraph{Normalization Step 3.} 
\noindent\textbf{(3)}~
The last step aims to apply $DIS$ from left to right if possible.

%\vspace{5mm}

Now \xxyrmcolor{we explain the detailed algorithms given as pseudocodes below}.  
A tree node $n$ has the following attributes: $n.type$ for the type of corresponding process, the values can be $zero$, $var$, $inp$, $out$, $par$, $abs$, $app$; $n.label$ for the label of the tree node; $n.numChildren$ for the number of children nodes; $n.children$ for the lists of all child nodes.
The algorithm App realizes the application operation. It requires three parameters: $n_{raw}$, $ind$, and $n_{eval}$. The tree is traversed top-down and all variables from term $n_{raw}$ are replaced with process $n_{eval}$ if the De Bruijn index matches $ind$. 
In the process of application, if there are more than one occurrence of an abstracted variable, say $X$, to be replaced, there will be more than one duplications of $n_{eval}$.  The nests of application may result in an exponential explosion on the number of tree nodes. However, we can make optimization by reusing $n_{eval}$, that is, each occurrence of $X$ points to the same tree of $n_{eval}$. Then it is guaranteed that the space cost for normalized terms is still linear, leading to acceptable time complexity of the algorithm.
%using pointers in practice. Every matched variable is replaced by a pointer that pointing to a same instance of $n_{eval}$. 

Algorithm NS1 deals with terms for application. The tree is traversed bottom-up. Every term in the form of $\langle X\rangle P)\langle Q\rangle$ or $\langle x\rangle P)\langle m\rangle$ are rewritten as $P\{Q/X\}$ or $P\{m/x\}$ respectively. Terms in the form of $X\lrangle{Q}$ and $X\lrangle{n}$ remain unchanged.
Algorithm NS2 deals with parallel composition. First all zero processes are removed. Then, if attribute $numChildren$ is $0$, the tree is collapsed to a zero node. If attribute $numChildren$ is $1$, the tree is collapsed to its single child. After this, all  children nodes are sorted.
In algorithm NS3, the tree is traversed bottom-up to find subtrees which can apply $DIS$ from left to right. Lines $\ref{NS3:line:PMstart}$-$\ref{NS3:line:PMend}$ decides if node $n$ matches the pattern with the left-hand side of $DIS$. It harnesses the property that all children nodes have been sorted in normalization step 2. If it fails to match the pattern, the node $n$ remains unchanged and the function returns in line $\ref{NS3:line:return1}$ or $\ref{NS3:line:return2}$. 
Otherwise, the term is rewritten at lines $\ref{NS3:line:RWstart}$-$\ref{NS3:line:RWend}$.
As a consequence of Lemma \ref{lem:complete}, the following lemma shows that, if two terms are strongly bisimilar, they can be normalized to the same tree by the three normalization steps. By checking the equalities of the two trees, we can decide the strong bisimilarity between \HOmp\ terms.
\vspace*{-.1cm}
\begin{lemma}
Let $P, Q$ be two terms. Let $T_P$, $T_Q$ be the tree representations of $P,Q$ respectively. Assume that $T'_P$, $T'_Q$ are the normalized trees after the normalization steps $1$-$3$. Then $P\sim Q$ if and only if $T'_P= T'_Q$.
\end{lemma}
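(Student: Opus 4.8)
The plan is to reduce the statement to the completeness result already proved (Lemma~\ref{lem:complete}) together with the soundness of the rewriting (Lemma~\ref{lm:DISsoundness}), by arguing that the three normalization steps compute, at the level of trees, exactly the normal form $\normalform{\cdot}$ modulo structural congruence. Concretely, I would first set up a readback map sending a normalized tree back to a process, and show that reading back $T'_P$ yields a process structurally congruent to $\normalform{P}$; and conversely that the tree normalization is \emph{canonical}, in the sense that two normal-form processes are structurally congruent if and only if their normalized trees are literally equal. Once these two facts are in place, both directions of the biconditional follow almost formally.

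For the correspondence I would proceed by matching each normalization step with the corresponding term-level operation. Step~1 applies the tree rewrites for $\tnApp$/$\tnAbs$ nodes, which by construction mirror the laws $APP1$ and $APP2$; since those laws are orientations of clauses of $\SCongru$, readback before and after step~1 gives structurally congruent (indeed $\rewriteDIS$-related) processes. Step~2 removes $0$-children, flattens nested $\textsf{par}$ nodes, and sorts children; these three operations correspond exactly to the unit law $P\para 0\SCongru P$, associativity, and commutativity of parallel composition, so they leave the readback invariant up to $\SCongru$ while making the tree insensitive to the order and bracketing of parallel components. Step~3 applies $DIS$ from left to right, again an orientation of a sound equation. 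Putting these together, $\mathrm{readback}(T'_P)\SCongru\normalform{P}$, and by Lemma~\ref{lm:DISsoundness} also $P\SB\normalform{P}$. The canonicity claim then amounts to: the De Bruijn labelling $\db$ makes the tree invariant under $\alpha$-conversion, and the sorting plus flattening makes it invariant under the parallel-composition laws, so that within normal forms $R\SCongru S$ holds precisely when $T'_R=T'_S$.

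With these in hand the two directions are immediate. If $P\SB Q$, then Lemma~\ref{lem:complete} gives $\normalform{P}\SCongru\normalform{Q}$; since both are normal forms and the normalized tree is canonical modulo $\SCongru$, we conclude $T'_P=T'_Q$. Conversely, if $T'_P=T'_Q$, then reading back yields $\normalform{P}\SCongru\normalform{Q}$ (using that $T'_P$ encodes $\normalform{P}$), whence, using $\SCongru\subseteq\SB$ and Lemma~\ref{lm:DISsoundness}, $P\SB\normalform{P}\SB\normalform{Q}\SB Q$, so $P\SB Q$.

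The main obstacle is the canonicity claim, i.e.\ establishing that the normalized tree is a genuine canonical representative of its $\SCongru$-class among normal forms. This has two delicate ingredients: first, that the sorting used in step~2 is by a fixed total order on trees, so that commutativity of parallel composition is quotiented out deterministically, and that flattening and $0$-removal interact correctly with the subsequent sort; and second, that the De Bruijn translation $\db$ faithfully identifies $\alpha$-equivalent binders across inputs, name abstractions, and process abstractions, including the irreducible variable-application forms $X\lrangle{A}$ and $X\lrangle{n}$ that survive step~1. I would also need to confirm that steps~1--3 terminate and are confluent, so that $T'_P$ is well defined independently of the rewriting order; termination of the $APP$ and $DIS$ rewrites and uniqueness of $\normalform{\cdot}$ up to $\SE$ were already noted, and I would lift that uniqueness to the tree level.
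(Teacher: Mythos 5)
Your proposal is correct and follows essentially the same route as the paper, whose own proof is simply ``Immediate from Lemma~\ref{lem:complete}'' (together with the soundness in Lemma~\ref{lm:DISsoundness} for the converse direction). You merely make explicit the step the paper leaves implicit, namely that the tree normalization computes $\normalform{\cdot}$ and that normalized trees are canonical representatives of $\SCongru$-classes; this is a reasonable and accurate elaboration rather than a different argument.
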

\annotate{
\begin{proof}
Immediate from Lemma \ref{lem:complete}.
\end{proof}
}

{
\renewcommand{\thealgorithm}{}
\floatname{algorithm}{Application}
\begin{algorithm*}[b]
\footnotesize
		\caption{\footnotesize App($n_{raw}$,$ind$,$n_{eval}$)}
		\label{step1}
		\begin{multicols}{2}
		\begin{algorithmic}[1]
			\Require 
			Tree nodes $ n_{raw} $, $n_{eval}$, an integer $ind$.
			\State \textbf{if}  ($ n_{raw} $.type == `var' \textbf{or} $n_{raw} $.type == `inp') \textbf{and} $ n_{raw} $.label == $\labelI{ind}$  \textbf{then}
			\State \quad $n_{raw} =n_{eval}$
			\State \textbf{end if}
			\State \textbf{if}  $ n_{raw} $.type == `out' \textbf{and} $ n_{raw} $.label == $\labelO{ind}$  \textbf{then}
			\State \quad $n_{raw} =n_{eval}$
			\State \quad $n_{raw}$.label = ($n_{raw}$.label)$^O$
			\State \textbf{end if}
			\State \textbf{if}  $ n_{raw} $.type == `inp' \textbf{or} $ n_{raw} $.type == `abs' \textbf{then}
			\State \quad $ind = ind +1$
			\State \textbf{end if}
			\State \textbf{for} $ i $ = 1 \textbf{to} $ n $.numChildren \textbf{do}
			\State \quad App($ n_{raw} $.children[$i$], $ind$ , $n_{eval}$)
			\State \textbf{end for}
	    \end{algorithmic}
		\end{multicols}\vspace*{-.4cm}
	\end{algorithm*}

\setcounter{algorithm}{0}
\renewcommand{\thealgorithm}{\arabic{algorithm}}
\floatname{algorithm}{Normalization Step}
\begin{algorithm*}[t]%htb
\footnotesize
		\caption{\footnotesize NS1($ n $)}
		\label{step1}
		\begin{multicols}{2}
		\begin{algorithmic}[1]
			\Require
			A tree node $ n $
			\State \textbf{for} $ i $ = 1 \textbf{to} $ n $.numChildren \textbf{do}
			\State \quad NS1($ n $.children[$i$])
			\State \textbf{end for}
			\State \textbf{if}  $ n $.type == `app' \textbf{then}
			\State \quad \textbf{if} $ n $.children[1].type == `abs' \textbf{then}
			\State \quad \quad App($ n $.children[1].children[1], 1, $ n $.children[2])
			\State \quad \textbf{end if}
			\State \textbf{end if}
	    \end{algorithmic}
		\end{multicols}\vspace*{-.35cm}
	\end{algorithm*}

%\begin{multicols}{2}
	\begin{algorithm*}[!h]
	\footnotesize
		\caption{\footnotesize NS2($ n $)}
		\label{step2}
		\begin{multicols}{2}
		\begin{algorithmic}[1]
			\Require
			A tree node $ n $
			\State \textbf{for} $ i $ = 1 \textbf{to} $ n $.numChildren \textbf{do}
			\State \quad NS2($ n $.children[$i$])
			\State \textbf{end for}
			\State \textbf{if}  $ n $.type == `par' \textbf{then}
			\State \quad $ j $ = 1
			\State \quad \textbf{for} $ i $ = 1 \textbf{to} $ n $.numChildren \textbf{do}
			\State \quad \quad \textbf{if} $ n $.children[$ i $].type $ \neq $ `zero' \textbf{then}
			\State \quad \quad \quad $ n $.children[$ j $] = $ n $.children[$ i $]
			\State \quad \quad \quad $ j $ = $ j $ + 1
			\State \quad \quad \textbf{end if}
			\State \quad \textbf{end for}
			\State \quad $ n $.numChildren = $ j $ - 1		
			\State \quad \textbf{if} $ n $.numChildren == 0 \textbf{then}
			\State \quad \quad $ n $.type = `zero'
			\State \quad \textbf{end if}
			\State \quad \textbf{if} $ n $.numChildren == 1 \textbf{then}
			\State \quad \quad $ n $ = $ n $.children[1]
			\State \quad \textbf{end if}
			\State \textbf{end if}
			\State sortChildren($ n $)
	    \end{algorithmic}
	    \end{multicols}\vspace*{-.4cm}
	\end{algorithm*}
%\end{multicols}{2}

\begin{algorithm*}[!h]
\footnotesize
	\caption{\footnotesize NS3($ n $)}
	\label{step3}
	\begin{multicols}{2}
	\begin{algorithmic}[1]
		\Require
		A tree node $ n $
		\State \textbf{for} $ i $ = 1 \textbf{to} $ n $.numChildren \textbf{do}
		\State \quad NS3($ n $.children[$ i $])
		\State \textbf{end for}
		\State \textbf{if}  $ n $.type == `inp' \textbf{then}
		\State \quad $ p $ = $ n $.children[1]
		\State \quad \textbf{if}  $ p $.type == `par' \textbf{then}
		\State \quad \quad $ smallIndex$=-1
		\State \quad \quad $small$,$big$=\textbf{null}
		\State \quad \quad $pc1$=$p$.children[1]
		\State \quad \quad $pc2$=$p$.children[$p$.numChildren]
		\State \quad \quad \textbf{if} $pc1$.type == `inp' \textbf{and} $pc1$.label == $ n $.label \textbf{and} $pc1$.children[1] == $pc2$ \textbf{then} \label{NS3:line:PMstart}
		\State \quad \quad \quad $ small $ = $pc2$, $ big $ = $pc1$, 
		\State \quad \quad \quad $ smallIndex $ = $ p $.numChildren
		\State \quad \quad \textbf{else if} $pc2$.type==`inp' \textbf{and} $pc2$.label == $ n $.label \textbf{and} $pc2$.children[1]==$pc1$ \textbf{then}
		\State \quad \quad \quad $ small $ = $pc1$, $ big $ = $pc2$
		\State \quad \quad \quad $ smallIndex $ = 1
		\State \quad \quad \textbf{else} 
		\State \quad \quad \quad \textbf{return}  \label{NS3:line:return1} 
		\State \quad \quad \textbf{end if}
		\State \quad \quad \textbf{for} $ i $ = 2 \textbf{to} $ n $.numChildren-1 \textbf{do}
		\State \quad \quad \quad \textbf{if} $ p $.children[$ i $] $ \neq big$  \textbf{then}
		\State \quad \quad \quad \quad \textbf{return} \label{NS3:line:return2} 
		\State \quad \quad \quad \textbf{end if}
		\State \quad \quad \textbf{end for} \label{NS3:line:PMend}
		%\State \quad \quad \textbf{if} $ smallIndex \neq -1$ \textbf{then}
		%\State \quad \quad \quad $ p $.children[$ smallIndex $] = $ big $
		%\State \quad \quad \quad $ n=n $.children[$1$]
		%\State \quad \quad \textbf{end if}
		\State  \quad \quad $ p $.children[$ smallIndex $] = $ big $ \label{NS3:line:RWstart} 
		\State  \quad \quad $ n=n $.children[$1$] \label{NS3:line:RWend} 
		\State \quad \textbf{end if}
		\State \textbf{end if}
	\end{algorithmic}
	\end{multicols}\vspace*{-.4cm}
\end{algorithm*}
}

\eat{
\begin{algorithm*}[htb]
	\caption{NS3($ n $)}
	\label{step3}
	\begin{algorithmic}[1]
		\Require
		A tree node $ n $
		\State \textbf{for} $ i $ = 1 \textbf{to} $ n $.numChildren \textbf{do}
		\State \quad NS3($ n $.children[$ i $])
		\State \textbf{end for}
		\State \textbf{if}  $ n $.type == `inp' \textbf{then}
		\State \quad $ p $ = $ n $.children[1]
		\State \quad \textbf{if}  $ p $.type == `par' \textbf{then}
		\State \quad \quad $ smallIndex $ = -1, $small$=\textbf{null}, $big$=\textbf{null}, $pc1$=$p$.children[1], $pc2$=$p$.children[2]
		\State \quad \quad \textbf{if} $ pc1 $ == $ pc2 $ \textbf{and}
		$ pc1 $.type ==`inp' \textbf{and}
		$pc1$.label == $ n $.label \textbf{then}
		\State \quad \quad \quad $ small $ = $pc1$.children[1], $ big $ = $pc1$%, $ ok $ = \textbf{true}
		%\State \quad \quad \quad \textbf{else}
		%\State \quad \quad \quad \quad \textbf{return}
		\State \quad \quad \textbf{else if} $pc1$.type == `inp' \textbf{and} $pc1$.label == $ n $.label \textbf{and} $pc1$.children[1] == $pc2$ \textbf{then}
		\State \quad \quad \quad $ small $ = $pc2$, $ big $ = $pc1$, $ smallIndex $ = 2
		\State \quad \quad \textbf{else if} $pc2$.type == `inp' \textbf{and} $pc2$.label == $ n $.label \textbf{and} $pc2$.children[1] == $pc1$ \textbf{then}
		\State \quad \quad \quad $ small $ = $pc1$, $ big $ = $pc2$, $ smallIndex $ = 1
		\State \quad \quad \textbf{else} 
		\State \quad \quad \quad \textbf{return} 
		\State \quad \quad \textbf{end if}
		\State \quad \quad \textbf{for} $ i $ = 3 \textbf{to} $ n $.numChildren \textbf{do}
		\State \quad \quad \quad \textbf{if} $ p $.children[$ i $] $ \neq big$  \textbf{and} $ p $.children[$ i $] $ \neq small$ \textbf{then}
		\State \quad \quad \quad \quad \textbf{return}
		\State \quad \quad \quad \textbf{else if} $ p $.children[$ i $] == $small$  \textbf{and} $ smallIndex  \neq -1 $  \textbf{then}
		\State \quad \quad \quad \quad \textbf{return}
		\State \quad \quad \quad \textbf{else if} $ p $.children[$ i $] == $small$  \textbf{and} $ smallIndex$  == $-1$  \textbf{then}
		\State \quad \quad \quad \quad $ smallIndex  = i $ 
		\State \quad \quad \quad \textbf{else}
		\State \quad \quad \quad \quad \textbf{continue}
		\State \quad \quad \quad \textbf{end if}
		\State \quad \quad \textbf{end for}
		\State \quad \quad \textbf{if} $ smallIndex \neq -1$ \textbf{then}
		\State \quad \quad \quad $ p $.children[$ smallIndex $] = $ big $
		\State \quad \quad \quad $ n=n $.children[$1$]
		\State \quad \quad \textbf{end if}
		\State \quad \textbf{end if}
		\State \textbf{end if}
	\end{algorithmic}
\end{algorithm*}
}

%\fbox{\TODOM{Define $n$ first? Space complexity (if necessary)?}}

%\zwb{\rc{Define $n$ first? Space complexity (if necessary)?}
We now analyze the complexity of the algorithm. Given processes $P$ and $Q$, let $n$ be the sum of the number of nodes in the tree representations of $P$ and $Q$.
The algorithm $App$ and $NS1$ traverse the tree for one time and can be done in $O(n)$ time. The most time-consuming part of $NS2$ is sorting, which can be done in $O(n\log(n))$ time. The algorithm $NS3$ can be performed in $O(n)$ time. Therefore, bisimilarity checking takes in $O(n\log(n))$ time in total.
As explained above, the space complexity is $O(n)$. %in Remark \ref{re:pointer}.

%\subsubsection*{Complexity of the bisimilarity checking}
%...

%---------------------------
% Local Variables:
% mode: LaTeX
% TeX-master: "main.tex"
% End:

%!TEX root = ./main.tex
%
\vspace*{-.5cm}
\section{Conclusion}\label{s:conclusion}
%In this paper, we have ...
\vspace*{-.3cm}

In this paper, we have exhibited that even in presence of parameterization, which can increase the expressiveness of higher-order processes, the strong bisimilarity is still decidable for \HOmp. The proving approach extends the previous one for HOcore, with several significant distinctions due to parameterization. This decidability result comes with the more powerful modelling capability of the process model, and is thus of both fundamental and practical importance to some extent. Besides, an axiom system and an algorithm are provided. They can be used as an intermediate prototype for potential application of the higher-order process model, in particular the bisimilarity checking. 
\annotate{\bc{For simplicity, we have focused on unary abstraction, i.e., abstraction with only one formal variable. The results can be extended to the general case, by some tedious but manageable adjustments. }
}%annotate-end
A further work is to try expanding the model to allow more convenient modelling capability, e.g., locations, while maintaining the decidability result. A far more challenging job is to consider the decidability of the weak bisimilarity. \annotate{, either to obtain decidability or prove undecidability, and in the former to design and analyse the bisimilarity checking algorithm.}

%In this paper, we have exhibited two new encodings of name-passing in the higher-order paradigm that allows parameterization, and a normal bisimulation in that setting as well. In the former, we demonstrate the conformance or inconsistency of the encoding with respect to some well-established criteria in the literature. In the latter, we prove the coincidence between normal and context bisimulation by pinpointing how to factorize an abstraction on some name. 

%Nevertheless, it is still quite worth the effort, because parameterization is pretty useful in both theory and practice.

\annotate{
\subsection*{Future work} 
There are some issues worthy of further examination. 
\begin{itemize}
\item A further work is to try expanding the model to allow more convenient modelling capability, e.g., locations, while maintaining the decidability result. A far more challenging job is to consider the weak bisimilarity, either to obtain decidability or prove undecidability, and in the former to design and analyse the bisimilarity checking algorithm.

\item Weak bisimilarity?

\item %Also, %[append the remarks in the repo. ][...] 
Also one may consider extending to allowing infinite application. So that $\Omega$ can be defined ($\Omega \DEF O\lrangle{O}$ in which $O\DEF \lrangle{X}(X\lrangle{X})$), and replication can be derived ($!P\DEF O'\lrangle{O'}$ in which $O'\DEF \lrangle{X}(P\para X\lrangle{X})$). See if the results of this paper carry over, including decidability. It seems that the at least some part of the technical approach needs adjustment, if possible; (e.g., process-substitution-preserving lemma). Or if undecidability indeed emerges, give an explicit proof, and then hopefully look for a decidable fragment. 

%\item 
\end{itemize}	
}%annotate-end

%---------------------------
% Local Variables:
% mode: LaTeX
% TeX-master: "main.tex"
% End:

%----------------------------------------------------------------------------------------------

%----------------------------------------------------------------------------------------------
%\annotate{
\sepp
\noindent\textbf{Acknowledgements}\;\;
%This work has been supported by project ANR 12IS02001 PACE and NSF of China (61261130589, 61472239, 61572318). 
We are grateful to the anonymous referees for their useful comments on this paper. %We also thank Qiang Yin for the helpful discussion. 
%Alan Schmitt
%\sepp
%}%annotate-end
%-------------------------bibliography------------------------------------------------------------
%\clearpage
\vspace*{-.4cm}
%\nocite{*}
\bibliographystyle{eptcs}
\bibliography{process}
%-----------------------------------------------------------------------------------------------

%-----------------------------------------------------------------------------------------------
\iftoggle{appendixON}{%
  %using appendixing
\sepp\sepp
\clearpage
\appendix
\noindent\textbf{\Large Appendix}
%\sepp

% \noindent\textit{Remark}. For the sake of conciseness, in the arguments %in the appendix, 
% we sometimes omit the existential statement such as ``for some ..." if clear from context.

\input{appendix_proof_1.tex}
\input{appendix_proof_2.tex}
}{%
  %no appendixing
}

%-----------------------------------------------------------------------------------------------

\end{document}